\def\eq{\displaystyle\stackrel\triangle=}
\def\xra{\xrightarrow}
\def\ra{\rightarrow}
\newtheorem{assum}{Assumption}
\newtheorem{thm}{Theorem}
\newtheorem{prop}{Proposition}
\newtheorem{lem}{Lemma}
\newtheorem{cor}{Corollary}
\newtheorem{defn}{Definition}
\providecommand{\keywords}[1]
{
	\small	
	\textbf{\textit{Keywords---}} #1
}
\begin{document}
\title{\textbf{Theoretical Guarantees for AOA-based Localization: Consistency and Asymptotic Efficiency}}
	\author[1]{Shenghua Hu}
\author[2]{Guangyang Zeng}
\author[1]{Wenchao Xue}
\author[1]{Haitao Fang}
\author[1]{Biqiang Mu}

\affil[1]{State Key Laboratory of Mathematical Sciences, Academy of Mathematics and Systems Science, Chinese Academy of Sciences, Beijing 100190, China\\
\texttt{hushenghua17@mails.ucas.ac.cn; wenchaoxue@amss.ac.cn; htfang@iss.ac.cn; bqmu@amss.ac.cn}}
\affil[2]{School of Data Science, Chinese University of Hong Kong, Shenzhen, Shenzhen, China\\
\texttt{zengguangyang@cuhk.edu.cn}}

	\maketitle
	\begin{abstract}
We study the problem of signal source localization using angle of arrival (AOA) measurements. We begin by presenting  verifiable geometric conditions for sensor deployment that ensure the model's asymptotic localizability. Then we establish the consistency and asymptotic efficiency of the maximum likelihood (ML) estimator. However, obtaining the ML estimator is challenging due to its association with a non-convex optimization problem.
To address this, we propose an asymptotically efficient two-step estimator that matches the ML estimator’s asymptotic properties while achieving low computational complexity (linear in the number of measurements). The primary challenge lies in obtaining a consistent estimator in the first step. To achieve this, we construct a linear least-squares problem through algebraic operations on the measurement nonlinear model to first obtain a biased closed-form solution. We then eliminate the bias using the data to yield an asymptotically unbiased and consistent estimator.
In the second step, we perform a single Gauss-Newton iteration using the preliminary consistent estimator as the initial value, achieving the same asymptotic properties as the ML estimator. Finally, simulation results demonstrate the superior performance of the proposed two-step estimator for large sample sizes.
	\end{abstract}

\keywords{AOA measurements, Asymptotic localizability, Maximum likelihood estimation,  Asymptotically efficient two-step estimator}

\section{Introduction}
Signal source localization estimates transmitter positions using sensor measurements and plays a vital role in a wide range of applications, including radar, sonar,  wireless networks, cognitive radio networks, and multimedia systems \cite{Niu2018,Li2002,Gezici2005,Wang2014, Hua2015}. Common localization methods utilize time of arrival (TOA) \cite{Chan1994,Soares2015,Zeng2022}, time difference of arrival (TDOA) \cite{Ho2012,Sun2019,Zeng2024}, received signal strength (RSS) \cite{So2011,Angjelichinoski2015,Liu2016}, and  AOA \cite{Ho2006, Dogancay2006,Wang2012, Shao2014,Wang2015,Zheng2019, Sun2020, Wang2023}. 
	Among these, AOA-based techniques determine source location using angle measurements and offer distinct advantages: they do not require sensor synchronization, which is essential for TOA and TDOA, and provide high accuracy through antenna arrays while minimizing inter-sensor communication \cite{Wang2015,Wang2018,Wang2023}. These advantages make AOA-based localization methods particularly suitable for use with commercial Wi-Fi systems,  low-altitude wireless networks, and multi-drone meshes where low latency, decentralized operation, and scalability are critical \cite{Wu2019, Zheng2019, Meles2021, GomezTornero2018}. Therefore, this paper focuses on signal source localization using AOA  measurements.
	
	The AOA-based localization identifies the source by intersecting rays derived from angle measurements. 
	In practical scenarios, measurements are usually corrupted by noise, and the source coordinates must be deduced from noisy data.
	Therefore, AOA-based source localization is a typical parameter estimation problem. The most commonly utilized methods for addressing this problem are maximum likelihood (ML) and least squares (LS) methods, which are equivalent when the measurement noises are independent and identically distributed (i.i.d.) Gaussian random variables. The non-convex nature of the AOA-based localization problem poses a challenge in searching for a global minimum. 
	
	To tackle this problem, two primary strategies emerge: iteration and relaxation-based methods.
	The iteration-based methods include: Wang and Ho \cite{Wang2018} represent the source location in modified polar coordinates (MPR) and apply the Gauss-Newton (GN) algorithm to search for the ML estimator with a priori knowledge that the source is in the near field. Wang \textit{et al.} \cite{Wang2018a} apply a similar technique in the two-dimensional (2-D) scenario. Wang \textit{et al.} \cite{Wang2012} develop the location-penalized likelihood function in the 2-D scenario and apply the Broyden-Fletcher-Goldfarb-Shanno algorithm to maximize this penalized likelihood. While both Do{\u{g}}an{\c{c}}ay \cite{Dogancay2005} and Wang \textit{et al.} \cite{Wang2007} address a constrained total least squares  problem iteratively, they employ distinct problem formulations and iterative schemes.
	For relaxation-based methods: Wang and Ho \cite{Wang2015} derive pseudolinear models via Taylor-series expansion under small noise assumption, then expand the parameter space and solve a weighted least squares problem with a quadratic constraint for bias reducing. Sun \textit{et al.} \cite{Sun2020} apply similar approximation and methodology but with the problem formulated in MPR. In \cite{Wang2018, Wang2018a}, similar methodology of approximation, linearization, and formulation of a constrained weighted least squares (CWLS) problem is employed. The CWLS problem is solved using a semi-definite relaxation (SDR) technique, and the solution serves as an initial value for subsequent iterative methods. Chen \textit{et al.} \cite{Chen2021} apply a similar linearization and SDR method, but reduce estimation bias by analyzing the second order term of the Taylor-series expansion, approximating the bias and compensating for it in the solution. 
	Unlike conventional linearization approaches, 
	Luo \textit{et al.} \cite{Luo2019} construct a linear model using all AOA and angular geometric information, but its LS estimator relies on approximating unknown quantities that must be approximated from available data.

	Existing iteration and relaxation-based approaches to AOA-based localization face  limitations. The iteration-based methods  suffer from high computational complexity and often converge to a stationary point rather than the global solution. The relaxation-based methods typically introduce asymptotic bias stemming from small-noise approximations, which may not vanish even as the number of measurements increases to infinity \cite{Wang2015, Ho2006}.  Particularly, this bias increases as noise level increases. Overall, existing iteration and relaxation-based methods lack consistency and asymptotic efficiency for AOA-based localization\footnote{``Consistency" refers to the estimator converging to the true value as the number of measurements increases, while ``asymptotic efficiency" means that, as the number of measurements grows, the mean squared error (MSE) of the estimator approaches the Cramér-Rao lower bound (CRLB), the theoretical minimum MSE for any unbiased estimator.}. 
Furthermore, the fundamental question of asymptotic localizability\footnote{``Asymptotic localizability" describes the property that the source location can be uniquely determined asymptotically using available measurements. This concept is rigorously formalized in Definition \ref{def_asy_loc}.} for AOA-based localization has yet to be sufficiently explored in the literature.
To address these gaps, this paper makes two key contributions for AOA-based localization:
  (i) we establish the geometric conditions for asymptotic localizability; (ii) we develop an estimator that is consistent and asymptotically efficient with low computational complexity.
	
	
To establish asymptotic localizability, we develop verifiable geometric conditions for sensor deployment, which ensures that the sensor network can uniquely determine the true source asymptotically.
To obtain a consistent and asymptotically efficient estimator for AOA-based localization, we first develop the ML estimator and rigorously prove its consistency and asymptotic efficiency under suitable conditions. However, the non-convex nature of the ML estimator poses challenges for accurate and efficient computation.
To address the challenges, we introduce an asymptotically efficient two-step estimator: (i) deriving a consistent estimator of the source location; (ii) running a single GN iteration using this consistent estimator  as the initial value.
A key advantage of the two-step estimator is its guaranteed asymptotic efficiency when the initial consistent estimator
has a convergence rate of $O_p(1/\sqrt{n})$ \cite{Lehmann1998}, where $n$ is the number of measurements.
The two-step estimator has been successfully applied in parameter estimation of nonlinear rational models\cite{Mu2017}, and TOA and TDOA-based localization \cite{Zeng2022,Zeng2024}.
 Therefore, the key to obtaining the asymptotically efficient estimator of the AOA-based localization lies in deriving a  consistent estimator with the convergence rate of $O_p(1/\sqrt{n})$.
 This is accomplished through the bias eliminated least squares (BELS) estimators.

	 Deriving a consistent estimator for AOA-based localization is challenging due to the composite structure of anti-trigonometric and norm functions in the measurement model, as shown in \eqref{mea_2d} and \eqref{mea_3d}.
For both 2-D and 3-D scenarios, through appropriate model transformation, we obtain a linear regression model where the parameter vector comprises the true source. The resulting BELS estimator provides a consistent estimator for the true source with the rate $O_p(1/\sqrt{n})$.
The BELS estimator requires the variance of the sine of the noise, which is directly related to the noise variance. In the case of unknown noise variance, we develop a $\sqrt{n}$-consistent estimator of the variance of the sine of the noise.
The computational complexity of our two-step estimator is dominated by the LS estimation and a single GN iteration, resulting in linear scaling with the number of measurements and ensuring high efficiency.

The proposed two-step estimator achieves asymptotically optimal accuracy with low computational complexity, making it suitable for large-scale AOA-based localization. Applications include satellite constellations—comprising up to 40,000 units equipped with low-cost spaceborne interferometers \cite{Li2022}—which can provide extensive AOA measurements. In addition, massive MIMO radar systems using millimeter-wave virtual arrays enable fast AOA estimation \cite{Godrich2010,Li2022a}, supporting accurate localization of slow-moving targets under high measurement rates.

	The rest of the paper is organized as follows. Section \ref{sec2} addresses the 2-D AOA-based localization problem, focusing on the asymptotic localizability, ML estimator, and   asymptotically efficient two-step estimator. Section \ref{sec3} extends the 2-D AOA localization to the 3-D case. Section \ref{sec4} demonstrates the estimation accuracy and computational efficiency of the estimator using extensive Monte-Carlo simulations. Finally, Section \ref{sec5} concludes the paper.
	 The proofs of all theoretical results are put in the supplemental file.
	
\textbf{Notation:} 
Throughout this paper, 
\(\mathbb{E}\), \(\mathbb{V}\), and  \(\mathbb{P}\) denote the expectation, variance, and probability, respectively, with respect to  the noise, unless otherwise specified.
  The superscript \((\cdot)^o\) indicates the true or noise-free value of a given quantity. For a sequence of random variables \(\{X_n\}\), (i)  \(X_n = O_p(1)\) means that \(X_n\) is bounded in probability (i.e.,  for any $\epsilon> 0$, there exist constant $L>0$ and integer $n_0>0$ such that $\mathbb{P}(|X_n|>L)<\epsilon$ for $n>n_0$); (ii) \(X_n = o_p(1)\) indicates that \(X_n\) converges to zero in probability (i.e., 
 for any $\epsilon> 0$,  there holds that $\lim_{n\xra{}\infty}\mathbb{P}(|X_n|>\epsilon)=0$); 
 (iii) $X_n\xra{}\mathcal{N}(0,\sigma^2)$ denotes that $X_n$ converges in distribution to a Gaussian random variable with mean zero and variance $\sigma^2$. Lastly, \(\nabla\) and \(\nabla^2\) denote the first and second order differential operators, respectively.
	
	\section{2-D scenario}
	\label{sec2}
	We first consider the 2-D scenario where the sensors and the signal source are in a 2-D space. This corresponds to the localization problem on the water surface or on the flat ground, for instance. Meanwhile, the 2-D problem is also a part of the 3-D problem that will be addressed later.
	\subsection{Problem formulation}
	Suppose there are \( n \) sensors distributed in a 2-D space, each with precisely known coordinates \( p_i = [x_i, y_i]^T \) for \( i = 1, \ldots, n \). Let \( p^o = [x^o, y^o]^T \) represent the unknown coordinates of the true source that need to be estimated using AOA  measurements from sensors.
	Fig. \ref{fig:2D} illustrates the scenario of the 2-D source localization using a AOA  sensor array, where $a_i^o$ is the noise-free AOA  measurement obtained by sensor $i$, while the AOA  measurement of the source obtained by sensor $i$ is:
	\begin{equation}
		\label{mea_2d}
		a_i = \arctan\left(\frac{y_i-y^o}{x_i-x^o} \right) + \varepsilon_i^a, \quad i=1,...,n,
	\end{equation}
	where $\varepsilon_i^a$ is the measurement noise. Our goal is to estimate $p^o$ from $\{p_i\}_{i=1}^n$ and $\{a_i\}_{i=1}^n$ according to the measurement model \eqref{mea_2d}. For the measurement noise, we make the following assumption.
	
	\begin{figure}[!t]
		\centering
		\includegraphics[width=0.4\textwidth]{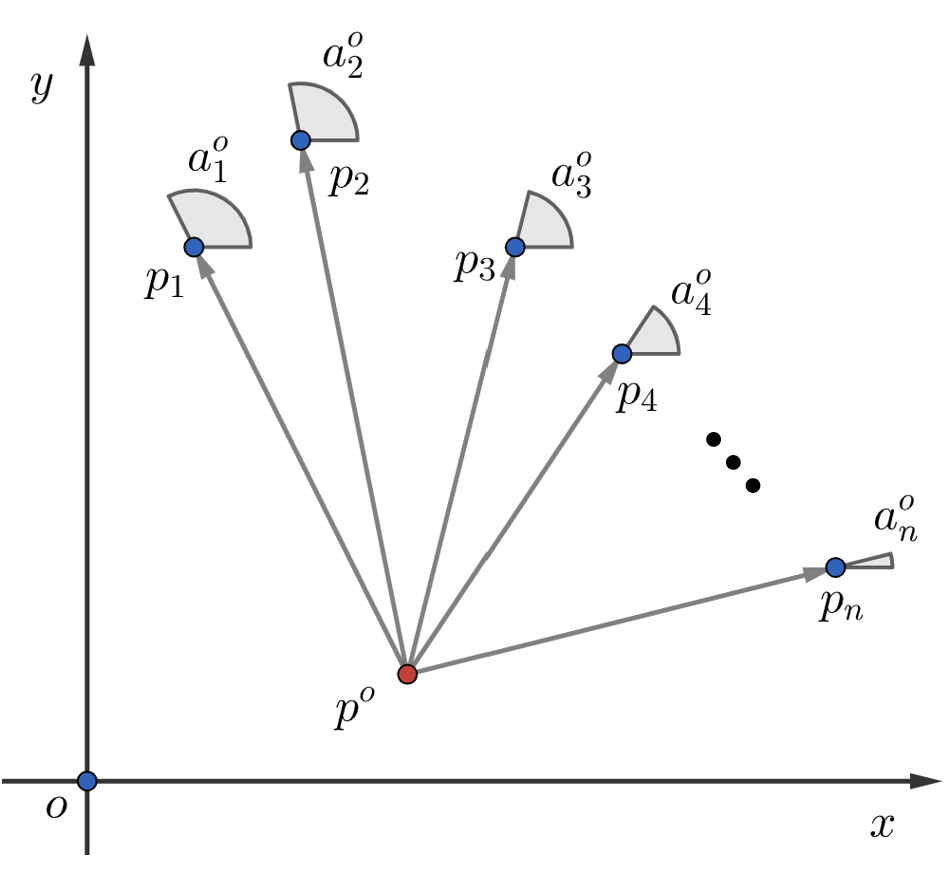}
		\caption{Illustration of AOA  measurements in the 2-D scenario. The red dot represents the signal source while the blue dots represent the sensors.}
		\label{fig:2D}
	\end{figure}

	\begin{assum}
		\label{assum_mea_noise_2d}
		The measurement noises $\{\varepsilon_i^a\}_{i=1}^n$ are i.i.d. Gaussian random variables with mean zero and finite variance $\sigma_a^2>0$.
	\end{assum}

\subsection{Asymptotic localizability}
In this section, we present sufficient conditions on sensor geometric deployment to ensure the asymptotic localizability of AOA-based localization.

	\begin{assum}
		\label{assum_coordinates_2d}
		The source $p^o$ fall in a bounded set $\mathcal{P}^o$ and the sensors $p_i$, $i=1,...,n$, belong to a bounded set $\mathcal{P}$, independent of $n$. Moreover, the sets are disjoint, i.e., $\mathcal{P}^o \cap \mathcal{P} = \emptyset$. 
	\end{assum}

In practice, Assumption \ref{assum_coordinates_2d} is readily fulfilled because sensors cannot be positioned arbitrarily far from the source. Additionally, the set \(\mathcal{P}^o\) can always be defined as a compact region surrounding the source, without needing prior knowledge of the source's precise location.

	\begin{defn}
\label{def:empirical_dist}
Let \( x_1, \ldots, x_n \) be a sequence in a measurable space \((\Omega, \mathcal{F})\). The empirical distribution \( F_n \) is the discrete probability measure on \((\Omega, \mathcal{F})\) defined by
$
F_n \eq\frac{1}{n} \sum_{i=1}^n \mathbf{1}_{\{x_i \in A\}},
$
where \(\mathbf{1}_{\{\cdot\}}\) is the indicator function.
\end{defn}
	
\begin{assum}
	\label{assum_colinearity}
	\begin{enumerate}[(i)]
	    \item The empirical distribution function $F_n$ of the sensor sequence $p_1, p_2,...$ converges to a distribution function $F_\mu$ and the probability measure induced by the distribution \(F_\mu\)  is denoted by \(\mu\);
	    \item For any positive integer \(n > 2\), the sensors \(p_1, \dots, p_n\) do not lie on a line.  
Moreover, there does not exist any subset $\mathcal{P}'$ of \(\mathcal{P}\) with \(\mu(\mathcal{P}') = 1\) such that \(\mathcal{P}'\) lies entirely on a line.
	\end{enumerate}
\end{assum}
	
	For a candidate point $p=[x,y]^T\in \mathcal{P}^o$,  
	denote the  predictive AOA of sensor $i$  by $f_i(p) \eq \arctan\left((y_i-y)/(x_i-x)\right)$, $i=1,...,n$, and define
$
h_n(p) \eq   \frac{1}{n}\sum_{i=1}^n(f_i(p)-f_i(p^o))^2
$.
We now present the definition of asymptotic localizability for the AOA-based localization problem.

\begin{defn}
\label{def_asy_loc}
For  the model \eqref{mea_2d}, the true signal source $p^o$ is called asymptotically localizable if   $h_n(p)$ has a limit function and the limit function, denoted by   $h(p)\eq \lim_{n \to \infty}h_n(p)$,  has a unique minimum at $p=p^o$.
\end{defn}

 We have the following results on the asymptotic localizability for  the AOA-based localization problem.

	\begin{thm}
		\label{thm_uniq_2d}
		Under Assumptions \ref{assum_coordinates_2d}-\ref{assum_colinearity}, the true signal source is asymptotically  localizable.
	\end{thm}
	
	\subsection{Maximum likelihood estimator}
	
This subsection  derives the ML estimator for   the AOA-based localization problem defined in \eqref{mea_2d} and presents its consistency and asymptotic efficiency.

	Based on Assumption \ref{assum_mea_noise_2d}, the log-likelihood function of the model \eqref{mea_2d} is:
	\begin{equation*}
		\ell_n(p) \eq -n \ln(\sqrt{2\pi}\sigma_a) - \frac{1}{2\sigma_a^2}\sum_{i=1}^n\left(a_i-\arctan\left(\frac{y_i-y}{x_i-x}\right)\right)^2,
	\end{equation*}
	where $\ln(\cdot)$ denotes the natural logarithm of a positive number.
The associated ML estimation problem is
	\begin{equation}
		\label{ML_2d}
 \min_{p = [x,y]^T \in \mathbb{R}^2}~\frac{1}{n}\sum_{i=1}^n\left(a_i-\arctan\left(\frac{y_i-y}{x_i-x}\right)\right)^2,
	\end{equation}
	which is equivalent to maximize $\ell_n(p)$. 
	Denote the ML estimator as $\widehat{p}_n^{\rm ML}$, which maximizes $\ell_n(p)$.
	
	Let $\nabla f_i(p)$ be the gradient vector of $f_i(p)$ over $p$, i.e.,
	\begin{align*}
		\nabla f_{i}(p) \eq \left[\frac{\partial f_i(p)}{\partial x},~\frac{\partial f_i(p)}{\partial y}\right]^T \!\!= \left[ \frac{y_i-y}{\|p_i-p\|^2},~\frac{-x_i+x}{\|p_i-p\|^2}   \right]^T\!\!.
	\end{align*}
	
	To derive the asymptotic variance of the ML estimator, we need the following lemma.
	
	\begin{lem}
		\label{lem_converge_M}
		Under Assumptions \ref{assum_coordinates_2d}-\ref{assum_colinearity}, we have: 
		\begin{enumerate}[(i)]
		    \item The matrix $ 
		 \frac{1}{n}\sum_{i=1}^n\nabla f_i(p)\nabla f_i(p)^T $ converges uniformly for $p \in \mathcal{P}^o$  as $n \to \infty$.
		  \item The limit $M^o=\lim_{n\to \infty}\frac{1}{n}\sum_{i=1}^n\nabla f_i(p^o) \nabla f_i(p^o)^T$ is nonsingular.
		\end{enumerate}
		
	\end{lem}

	Now, we are on the point to give the asymptotic property of the ML estimator. Under Assumptions \ref{assum_mea_noise_2d}-\ref{assum_colinearity}, the ML estimator  embraces the following consistency and asymptotic normality. The proof is straightforward by checking the conditions in \cite[Theorem 7]{Jennrich1969}, and we omit the proof.
	
	\begin{thm}
		\label{thm_consis_asymnormal_2d}
		Under Assumptions \ref{assum_mea_noise_2d}-\ref{assum_colinearity}, we have $\widehat{p}_n^{\rm ML} \to p^o$ almost surely as $n \to \infty$ with the asymptotic rate
		\begin{equation}
			\label{asy_normal_2d}
			\sqrt{n}(\widehat{p}_n^{\rm ML}-p^o) \to \mathcal{N}\left(0,\sigma_a^2\left(M^o\right)^{-1}\right) \quad {\rm as} ~ n\to \infty.
		\end{equation}
	\end{thm}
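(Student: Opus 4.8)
The plan is to verify the hypotheses of the classical nonlinear least-squares results of \cite{Jennrich1969} and then simply read off the two conclusions. Write
$Q_n(p) \eq \frac{1}{n}\sum_{i=1}^n\big(a_i-f_i(p)\big)^2$, so that $\widehat{p}_n^{\rm ML}=\argmin_{p\in\mathbb{R}^2}Q_n(p)$, and substitute $a_i=f_i(p^0)+\varepsilon_i^a$ to get the decomposition
$Q_n(p)=\frac1n\sum_i\big(f_i(p)-f_i(p^0)\big)^2+\frac2n\sum_i\varepsilon_i^a\big(f_i(p^0)-f_i(p)\big)+\frac1n\sum_i(\varepsilon_i^a)^2$.
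First I would record the uniform regularity that makes everything work: by Assumption \ref{assum_coordinates_2d} the sets $\mathcal{P}^0$ and $\mathcal{P}$ are bounded and disjoint, so on any compact set of candidate sources the distances $\|p_i-p\|$ are bounded above and away from zero uniformly in $i$; hence $f_i$, $\nabla f_i$, $\nabla^2 f_i$ are uniformly bounded and uniformly continuous there. Combining this with Assumption \ref{assum_distribution} ($F_n\to F_\mu$) and the SLLN applied to the bounded, independent, mean-zero summands in the cross term, I would conclude that $Q_n$, $\nabla Q_n$, $\nabla^2 Q_n$ converge uniformly on compacts, with limit $\bar Q(p)\eq\mathbb{E}_\mu\big[(f(p)-f(p^0))^2\big]+\sigma_a^2$ for $Q_n$.

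Second, for strong consistency I would invoke Lemma \ref{lem_uniq_2d}, which says $\bar Q$ attains its unique minimum at $p^0$. To upgrade this local identifiability statement to a genuinely global one on $\mathbb{R}^2$, I would argue the minimizer cannot escape to infinity: as $\|p\|\to\infty$ along any direction, $f_i(p)$ approaches a common direction-dependent constant, while the sensor–source angles $f_i(p^0)$ are not all equal by Assumption \ref{assum_colinearity}; hence the limiting objective along any escaping sequence strictly exceeds $\sigma_a^2=\bar Q(p^0)$, so with probability one $\widehat{p}_n^{\rm ML}$ eventually lies in a fixed compact neighbourhood of $\mathcal{P}^0$. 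On that compact set, uniform convergence together with the unique minimum of $\bar Q$ yields $\widehat{p}_n^{\rm ML}\to p^0$ almost surely, which is precisely the consistency claim and matches the consistency theorem of \cite{Jennrich1969}.

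Third, for asymptotic normality I would Taylor-expand the stationarity condition $\nabla Q_n(\widehat{p}_n^{\rm ML})=0$ about $p^0$, giving $0=\nabla Q_n(p^0)+(\widehat{p}_n^{\rm ML}-p^0)^T\nabla^2 Q_n(\bar p_n)$ for some $\bar p_n$ on the segment between $\widehat{p}_n^{\rm ML}$ and $p^0$. Here $\sqrt n\,\nabla Q_n(p^0)^T=-\frac{2}{\sqrt n}\sum_{i=1}^n\varepsilon_i^a\,\nabla f_i(p^0)^T$, and the Lindeberg–Feller CLT applies: the $\varepsilon_i^a$ are i.i.d. with variance $\sigma_a^2$, the weights $\nabla f_i(p^0)$ are bounded, and $\frac1n\sum_i\nabla f_i(p^0)^T\nabla f_i(p^0)\to M^0$ by Lemma \ref{lem_converge_M}; hence $\sqrt n\,\nabla Q_n(p^0)^T\to\mathcal{N}(0,4\sigma_a^2 M^0)$. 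For the Hessian, $\nabla^2 Q_n(\bar p_n)=\frac2n\sum_i\nabla f_i(\bar p_n)^T\nabla f_i(\bar p_n)-\frac2n\sum_i\big(a_i-f_i(\bar p_n)\big)\nabla^2 f_i(\bar p_n)$; consistency gives $\bar p_n\to p^0$, the first term tends to $2M^0$, and the second term tends to $0$ because $a_i-f_i(p^0)=\varepsilon_i^a$ is mean-zero and independent of the bounded factor $\nabla^2 f_i(p^0)$ (split into the $\varepsilon_i^a$-part, killed by the SLLN, and the $f_i(\bar p_n)-f_i(p^0)$-part, killed by uniform continuity and consistency). Thus $\nabla^2 Q_n(\bar p_n)\to 2M^0$, nonsingular by Lemma \ref{lem_invertM_2d}, and solving yields $\sqrt n(\widehat{p}_n^{\rm ML}-p^0)\to(2M^0)^{-1}\mathcal{N}(0,4\sigma_a^2 M^0)=\mathcal{N}\big(0,\sigma_a^2(M^0)^{-1}\big)$, which is \eqref{asy_normal_2d}.

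The step I expect to be the main obstacle is the uniform-control step that underlies all three arguments: checking Jennrich's equicontinuity/tail hypotheses — uniform convergence of $Q_n,\nabla Q_n,\nabla^2 Q_n$ on compacts, and the impossibility of the argmin drifting to infinity. This is exactly where Assumptions \ref{assum_coordinates_2d}–\ref{assum_colinearity} must be used in full force (boundedness and disjointness of $\mathcal{P}^0$ and $\mathcal{P}$ to bound the rational derivatives uniformly in $i$, convergence $F_n\to F_\mu$ to pass to $\mathbb{E}_\mu$, and non-collinearity both for the strict minimum in Lemma \ref{lem_uniq_2d} and to rule out a spurious minimizer at infinity). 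Once that is in place, the CLT and the Taylor-expansion bookkeeping are routine, relying only on the boundedness of $\nabla f_i(p^0)$ and the nonsingularity of $M^0$.
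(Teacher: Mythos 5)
Your proposal is correct and follows exactly the route the paper intends: the paper omits this proof entirely, stating only that it is ``straightforward by checking the conditions in \cite[Theorem 3]{Jennrich1969},'' and your argument is precisely that verification (uniform convergence of the criterion and its derivatives via Assumptions \ref{assum_coordinates_2d}--\ref{assum_distribution}, unique minimum from Lemma \ref{lem_uniq_2d}, ruling out escape to infinity, and the Taylor/CLT step with the covariance algebra $(2M^0)^{-1}(4\sigma_a^2 M^0)(2M^0)^{-1}=\sigma_a^2(M^0)^{-1}$ coming out right). The only point worth tightening is that disjointness of the bounded sets $\mathcal{P}^0$ and $\mathcal{P}$ does not by itself give a positive separation distance unless one works with a compact neighbourhood of $\mathcal{P}^0$ kept away from the closure of $\mathcal{P}$, a technicality the paper also glosses over.
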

	
	The matrix $M^o$ is tightly related to the Fisher information matrix $F$ of model \eqref{mea_2d}. 
	To see this, firstly we have
	\begin{equation*}
		\frac{\partial \ell_n(p^o)}{\partial p^o} = \frac{1}{\sigma_a^2}\sum_{i=1}^n \varepsilon_i^a \frac{1}{\|p_i-p^o\|^2} \begin{bmatrix}
			y_i-y^o\\-x_i+x^o
		\end{bmatrix}.
	\end{equation*}
	Then we obtain the Fisher information matrix
	\begin{align*}
		F =\mathbb{E}\left[ \frac{\partial \ell_n(p^o)}{\partial p^o} \left(\frac{\partial \ell_n(p^o)}{\partial p^o}\right)^T \right]	= \frac{1}{\sigma_a^2} \sum_{i=1}^n\frac{1}{\|p_i-p^o\|^4} \begin{bmatrix}
			y_i-y^o\\-x_i+x^o
		\end{bmatrix}\left[y_i-y^o,  -x_i+x^o\right].
	\end{align*}
	This means $\lim_{n\to \infty}nF^{-1} = \sigma_a^2 \left(M^o\right)^{-1}$, which implies that the ML estimator attains the CRLB and is asymptotically efficient. 
	
	\subsection{Asymptotically efficient two-step estimator}
	\label{sec_2d_ml_twostep}
	The ML estimator owns consistency and asymptotic efficiency, but it is difficult to obtain the  global solution to the ML problem~\eqref{ML_2d} due to its non-convexity. In this subsection, we propose the asymptotically efficient two-step estimator 
	for the AOA-based localization, which has the same asymptotic property that the ML estimator possesses.  

\subsubsection{The framework of the two-step estimator}
Firstly, we prove that the  objective function of the ML problem converges to a function that is convex in a small neighborhood around $p^o$, which forms the feasibility of the two-step scheme.
	
	\begin{prop}
		\label{prop_converge_ML_objfunc}
		Under Assumptions \ref{assum_mea_noise_2d}-\ref{assum_colinearity}, $\ell_n(p)/n$ converges uniformly to 
		\begin{equation*}
			\ell(p)\eq -\ln(\sqrt{2\pi}\sigma_a)-\frac{1}{2}-\frac{1}{2\sigma_a^2}h(p)
		\end{equation*}
		on $\mathcal{P}^o$  as $n \to \infty$. In addition, $\nabla^2(-\ell(p^o)) = M^o/\sigma_a^2$ is positive definite.
	\end{prop}
	
Proposition \ref{prop_converge_ML_objfunc} indicates that $-\ell_n(p)/n$ is a convex function in a small neighborhood around the global minimum $p^o$ when $n$ is large. Therefore,  iterative methods—such as GN iterations—can be used to find the global minimum of the non-convex problem, provided that the initial value lies within this region of attraction.
Notably, a consistent estimator can approach the true value arbitrarily closely as 
$n$ increases.
Based on this observation, the two-step estimator is formally presented as follows \cite{Gourieroux1995,Mu2017}:

\textbf{Step 1.} Derive a  consistent estimator $\widehat{p}_n$ for the source's coordinates $p^o$.

\textbf{Step 2.} Run one-step GN refinement with this consistent estimator $\widehat{p}_n$ as its initial value.
This is 
\begin{subequations}\label{gn_2d}
\begin{align} 
\widehat{p}_n^{\rm GN} &= \widehat{p}_n +  \left(J_n^T J_n\right)^{-1}J_n^T(a - f),\\
J_n& = \left[\nabla f_1(\widehat{p}_n),  \cdots , \nabla f_n(\widehat{p}_n)\right]^T,\\
        f& = [f_1(\widehat{p}_n),  \cdots , f_n(\widehat{p}_n)]^T, 
    ~a =  [a_1,\cdots, a_n]^T.
\end{align}
\end{subequations}
	
The two-step scheme described above has the following appealing property, which is presented in the following lemma.
	
	\begin{lem}{\cite[Theorem 2]{Mu2017}\cite[Chapter 6, Theorem 4.3]{Lehmann1998}}
		\label{lem_onestepGN}
		Suppose that $\widehat{p}_n$ is a $\sqrt{n}$-consistent estimator of $p^o$, i.e., $\widehat{p}_n-p^o = O_p(1/\sqrt{n})$. Then under Assumptions \ref{assum_mea_noise_2d}-\ref{assum_colinearity}, we have
$	\widehat{p}_n^{\rm GN}-\widehat{p}_n^{\rm ML} = o_p(1/\sqrt{n})$.
	\end{lem}
Lemma \ref{lem_onestepGN} shows that the estimator $\widehat{p}_n^{\rm GN}$ has the same asymptotic property that $\widehat{p}_n^{\rm ML}$ possesses and hence it
is both consistent and asymptotically efficient. Thus, the success of the two-step estimation scheme hinges on obtaining a $\sqrt{n}$-consistent estimator in the first step. In the next  subsubsection, we derive such an estimator.
	
	\subsubsection{ $\sqrt{n}$-consistent initial  estimator}
	
	Denote the true angle by $a_i^o \eq \arctan\left((y_i-y^o)/(x_i-x^o)\right)$ and the true distance by $r_i^o \eq \sqrt{(x_i-x^o)^2 + (y_i-y^o)^2}$  for $i=1,...,n$.
	Thus, the model \eqref{mea_2d} is equivalent to 
	\begin{equation}
		\label{linear_2d}
		(x_i-x^o)\sin(a_i) - (y_i-y^o)\cos(a_i) = r_i^o \sin(\varepsilon_i^a).
	\end{equation}

	Define $h_i \eq [\sin(a_i),  -\cos(a_i)]^T$. We rewrite \eqref{linear_2d} in the following linear regression form \cite{Lingren1978,Aidala1979,Nardone1984}:
	\begin{equation}
		\label{linear_2d_mat}
		h_i^T p_i = h_i^T p^o + r_i^o\sin(\varepsilon_i^a), \quad i=1,\cdots,n.
	\end{equation}
	Note that the noise term  involves $\sin(\varepsilon_i^a)$, which has the following
	properties by Lemma B3 (see supplementary material):
				\begin{align}
				\mathbb{E}(\sin(\varepsilon_i^a)) = 0,
				~~	\mathbb{V}(\sin(\varepsilon_i^a)) = \frac{1}{2}(1-e^{-2\sigma_a^2}).\label{vsin}
				\end{align}
			
	By stacking \eqref{linear_2d_mat} for $n$ sensors, we obtain the following matrix form
	\begin{equation}
		Y = Xp^o + V,\label{eq8}
	\end{equation}
	where the $i$-th element of vector $Y$ is $h_i^Tp_i$, the $i$-th row of matrix $X$ is $h_i^T$, and the $i$-th element of vector $V$ is $r_i^o\sin(\varepsilon_i^a)$, for $i=1,...,n$. Then, the LS estimator corresponding to \eqref{eq8} is given by
	\begin{equation}
		\label{p_biased}
		\widehat{p}^{\rm B}_{n} = \left(X^TX\right)^{-1}X^TY.
	\end{equation}
	For the linear regression model \eqref{eq8}, note that the regressor matrix  $X$ is correlated with the noise vector $V$ due to the measurement noises $\{\varepsilon_i^a\}_{i=1}^n$.
	As a results, the LS estimator \eqref{p_biased} is biased.
	In the following, we will eliminate the bias of the LS estimator \eqref{p_biased} and eventually obtain an asymptotically unbiased and consistent estimator. 
	For achieving this, we establish a new linear regression form different from \eqref{eq8} based on the noise-free counterpart $X^o$ of $X$. Denote the regressors $h_i^o \eq [\sin(a_i^o),  -\cos(a_i^o)]^T$, which are the noise-free counterparts of $h_i$.
	Thus,   $h_i$ are connected with $h_i^o$ in the following way
	\begin{equation*}
		h_i = \cos(\varepsilon_i^a)h_i^o + \sin(\varepsilon_i^a)[\cos(a_i^o), ~ \sin(a_i^o)]^T,~ i=1,...,n,
	\end{equation*}
	and based on which  
	we rewrite  the linear regression \eqref{linear_2d_mat} as follows
	\begin{align}
		\notag
		&h_i^Tp_i =~ e^{-\sigma_a^2/2}(h_i^o)^Tp^o +w_i,~ i=1,...,n,\\
		&w_i=\big(\cos(\varepsilon_i^a)-e^{-\sigma_a^2/2}\big)(h_i^o)^Tp^o+\sin(\varepsilon_i^a)[\cos(a_i^o),  \sin(a_i^o)]p^o+\sin(\varepsilon_i^a)r_i^o. \label{linear_2d_biaseli}
	\end{align}
	The matrix form of \eqref{linear_2d_biaseli} is
	\begin{equation}
		\label{linear_2d_biaseli_mat}
		Y = X^op^o+W,
	\end{equation}
	where the $i$-th row of matrix $X^o$ is $e^{-\sigma_a^2/2}(h_i^o)^T$ and the $i$-th element of vector $W$ is $w_i$, for $i=1,...,n$.

Given that \(\{w_i\}_{i=1}^n\) is an independent random variable sequence with zero mean and finite variance, the LS estimator of the model in \eqref{linear_2d_biaseli_mat} is consistent and \(\sqrt{n}\)-consistent, provided that the Gram matrix \((X^o)^TX^o/n\) is nonsingular. This is established in the following proposition.
	
	\begin{prop}
		\label{prop_noise_free_gram}
		Under Assumptions \ref{assum_coordinates_2d}-\ref{assum_colinearity},
		the matrix $\left(X^o\right)^TX^o/n$ is nonsingular, and its limit
		exists and is nonsingular.	
	\end{prop}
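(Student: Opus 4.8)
The plan is to notice that $(X^0)^TX^0/n$ has exactly the structure of the matrices in Lemmas \ref{lem_converge_M} and \ref{lem_invertM_2d}, and to re-run those arguments. Since $X^0 = e^{-\sigma_a^2/2}[h_1^0,\dots,h_n^0]^T$, we have $(X^0)^TX^0/n = e^{-\sigma_a^2}\frac1n\sum_{i=1}^n h_i^0(h_i^0)^T$, and everything is driven by two elementary facts about $h_i^0 = [\sin a_i^0,\,-\cos a_i^0]^T$ (with $a_i^0=\arctan((y_i-y^0)/(x_i-x^0))$): first, $h_i^0$ is a \emph{unit} vector, so each summand $h_i^0(h_i^0)^T$ has entries bounded by $1$ and there is no integrand singularity to control (unlike Lemma \ref{lem_converge_M}); second, $h_i^0$ is orthogonal to $p_i-p^0$, and in fact $h_i^0(h_i^0)^T = \|p_i-p^0\|^2(\nabla f_i(p^0))^T\nabla f_i(p^0)$. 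Both are well defined because, by Assumption \ref{assum_coordinates_2d}, $p^0\in\mathcal{P}^0$ is disjoint from the bounded sensor set $\mathcal{P}$, so $\tilde p\mapsto h^0(\tilde p)(h^0(\tilde p))^T$ is a bounded continuous matrix-valued map on $\mathcal{P}$.

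For the existence of the limit I would argue as in Lemma \ref{lem_converge_M}: the weak convergence $F_n\to F_\mu$ of Assumption \ref{assum_distribution} together with the Helly--Bray theorem, applied to this bounded continuous map, yields
\[
\lim_{n\to\infty}\frac{(X^0)^TX^0}{n}=e^{-\sigma_a^2}\,\mathbb{E}_\mu\big[h^0(\tilde p)(h^0(\tilde p))^T\big],
\]
where $\mathbb{E}_\mu$ is taken over $\tilde p=[\tilde x,\tilde y]^T$ with respect to $\mu$; equivalently, one checks that $\mathbb{V}_\mu$ of each entry of $\frac1n\sum_i h_i^0(h_i^0)^T$ is $O(1/n)$.

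Nonsingularity of both matrices follows the template of Lemma \ref{lem_invertM_2d}. If $\theta\neq 0$ and $\theta^T(X^0)^TX^0\theta = \sum_{i=1}^n(\theta^Th_i^0)^2 = 0$, then $\theta^Th_i^0 = 0$ for every $i$; since in the plane the orthogonal complement of the nonzero vector $h_i^0$ is spanned by $p_i-p^0$, this forces $\theta\parallel(p_i-p^0)$ for all $i$, i.e.\ all sensors lie on the single line through $p^0$ in direction $\theta$, contradicting the noncollinearity of $p_1,\dots,p_n$ in Assumption \ref{assum_colinearity}. For the limit, $\theta^T\big(\lim_n(X^0)^TX^0/n\big)\theta = e^{-\sigma_a^2}\mathbb{E}_\mu[(\theta^Th^0(\tilde p))^2] = 0$ forces $\theta^Th^0(\tilde p) = 0$ for $\mu$-almost every $\tilde p$, so $\mathcal{P}_\theta=\{\tilde p\in\mathcal{P}:\theta^Th^0(\tilde p)=0\}$ has $\mu(\mathcal{P}_\theta)=1$ and lies on a line, again contradicting Assumption \ref{assum_colinearity}. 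No serious obstacle is expected here: the proposition is a corollary of the tools already developed, and the only point requiring care is the analytic passage from empirical averages to the $\mu$-expectation — verifying boundedness and continuity of the integrand on the closure of the support, which is exactly where the separation $\mathcal{P}^0\cap\mathcal{P}=\emptyset$ (keeping $\|p_i-p^0\|$ bounded away from $0$) is used — handled as in Lemma \ref{lem_converge_M}.
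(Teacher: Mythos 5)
Your proof is correct and follows essentially the same route as the paper's: the same factorization $(X^0)^TX^0/n = e^{-\sigma_a^2}\tfrac{1}{n}\sum_i h_i^0(h_i^0)^T$, the same Helly--Bray/vanishing-variance argument for existence of the limit, and the same reduction of singularity to the ($\mu$-a.s.) collinearity of the sensors with $p^0$, contradicting Assumption \ref{assum_colinearity}. The only cosmetic difference is that for finite $n$ you argue directly from $\sum_i(\theta^T h_i^0)^2=0$ and the orthogonality $h_i^0\perp(p_i-p^0)$, whereas the paper passes through the complementary matrix $I_2-\tfrac{1}{n}\sum_i(r_i^0)^{-2}(p_i-p^0)(p_i-p^0)^T$ and a trace-equals-one observation; both yield the same contradiction.
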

	
	The non-singularity of   $\left(X^o\right)^TX^o/n$ derived in Proposition \ref{prop_noise_free_gram} provides the theoretical foundation for using the LS estimator of the linear regression model \eqref{linear_2d_biaseli_mat} to estimate $p^o$:
	\begin{equation}
		\label{unbiased_esti_2d}
		\widehat{p}_{n}^{\rm UB} = \left(\left(X^o\right)^TX^o\right)^{-1}\left(X^o\right)^TY.
	\end{equation} 

	\begin{prop}
		\label{thm_ub_2d}
		Under Assumptions \ref{assum_mea_noise_2d}-\ref{assum_colinearity}, the LS estimator $\widehat{p}_n^{\rm UB}$ is unbiased and further $\sqrt{n}$-consistent, i.e., $\widehat{p}_{n}^{\rm UB}-p^o = O_p(1/\sqrt{n})$.
	\end{prop}

	So far, we have obtained two  LS estimators, \eqref{p_biased} and \eqref{unbiased_esti_2d}, for the measurement model \eqref{mea_2d}. On one hand, the LS estimator \eqref{p_biased} is implementable using the available data but it is biased. On the other hand, the LS estimator \eqref{unbiased_esti_2d} is $\sqrt{n}$-consistent but cannot be implemented with the given data since the regressor matrix \(X^o\) is not available.
	To derive an implementable $\sqrt{n}$-consistent estimator, which is the main focus of this subsubsection, we aim to establish a direct relationship between these two LS estimators. Specifically, we intend to eliminate the bias of the LS estimator \eqref{p_biased} by calculating the non-negligible differences between \(X^TX/n\) and \((X^o)^TX^o/n\), and between \(X^TY/n\) and \((X^o)^TY/n\). We present the BELS estimator
	\begin{align}
		\widehat{p}^{\rm BE}_{n} = &~ \left(\frac{1}{n}X^TX - \mathbb{V}(\sin(\varepsilon_1^a)) I_2\right)^{-1} \left(\frac{1}{n}X^TY-\mathbb{V}(\sin(\varepsilon_1^a))\Big(\frac{1}{n}\sum_{i=1}^{n}p_i\Big)
		\right), \label{biaseli_esti_2d}
	\end{align}
	from which we clearly see the differences between the two  LS estimators.
	
	\begin{thm}
		\label{thm_be_2d}
		Under Assumptions \ref{assum_mea_noise_2d}-\ref{assum_colinearity}, the BELS estimator is $\sqrt{n}$-consistent, i.e., $\widehat{p}^{\rm BE}_{n}- p^o = O_p(1/\sqrt{n})$.
	\end{thm}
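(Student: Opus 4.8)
The plan is to compare the two data-dependent blocks $\frac1nX^TX - \mathbb V(\sin(\varepsilon_1^a))I_2$ and $\frac1nX^TY - \mathbb V(\sin(\varepsilon_1^a))\frac1n\sum_{i=1}^n p_i$ of the BELS estimator \eqref{biaseli_esti_2d} with their expectations, to show that these expectations reconstruct exactly the ``noise-free'' normal equations — so that their quotient is $p^0$ for every finite $n$ — and then to close the remaining gap by a matrix-perturbation estimate. The two subtracted correction terms in \eqref{biaseli_esti_2d} are chosen precisely so that this cancellation occurs. (Equivalently, one could show $\widehat{p}_n^{\rm BE} - \widehat{p}_n^{\rm UB} = O_p(1/\sqrt n)$ and invoke Theorem \ref{thm_ub_2d}, but working directly against $p^0$ is cleaner.)

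First I would set $g_i^0 \eq [\cos(a_i^0),~\sin(a_i^0)]^T$, so that $\{h_i^0, g_i^0\}$ is orthonormal, $g_i^0(g_i^0)^T = I_2 - h_i^0(h_i^0)^T$, $r_i^0 g_i^0 = p_i - p^0$, and $h_i = \cos(\varepsilon_i^a)h_i^0 + \sin(\varepsilon_i^a)g_i^0$. Expanding $h_ih_i^T$ and $r_i^0\sin(\varepsilon_i^a)h_i$ and taking expectations using the Gaussian trigonometric moments of Lemma \ref{lem_moment_cos_sin} ($\mathbb E[\sin(\varepsilon_i^a)\cos(\varepsilon_i^a)] = 0$, $\mathbb E[\cos^2(\varepsilon_i^a)] - \mathbb E[\sin^2(\varepsilon_i^a)] = e^{-2\sigma_a^2}$, $\mathbb E[\sin^2(\varepsilon_i^a)] = \mathbb V(\sin(\varepsilon_1^a))$) gives $\mathbb E[h_ih_i^T] = e^{-2\sigma_a^2}h_i^0(h_i^0)^T + \mathbb V(\sin(\varepsilon_1^a))I_2$ and $\mathbb E[r_i^0\sin(\varepsilon_i^a)h_i] = \mathbb V(\sin(\varepsilon_1^a))(p_i - p^0)$. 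Averaging over $i$, writing $S_n^0 \eq \frac1n\sum_{i=1}^n h_i^0(h_i^0)^T = e^{\sigma_a^2}(X^0)^TX^0/n$, and using $h_i^Tp_i = h_i^Tp^0 + r_i^0\sin(\varepsilon_i^a)$ from \eqref{linear_2d_mat}, I then obtain the exact identities
\[
\mathbb E\bigl[\tfrac1nX^TX\bigr] - \mathbb V(\sin(\varepsilon_1^a))I_2 = e^{-2\sigma_a^2}S_n^0,
\qquad
\mathbb E\bigl[\tfrac1nX^TY\bigr] - \mathbb V(\sin(\varepsilon_1^a))\tfrac1n\textstyle\sum_{i=1}^n p_i = e^{-2\sigma_a^2}S_n^0\,p^0 ,
\]
so that replacing the two data blocks of \eqref{biaseli_esti_2d} with their expectations produces exactly $(e^{-2\sigma_a^2}S_n^0)^{-1}(e^{-2\sigma_a^2}S_n^0 p^0) = p^0$; by Proposition \ref{prop_noise_free_gram}, $S_n^0$ is non-singular with a non-singular limit.

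Next I would control the fluctuations. The summands $h_ih_i^T$, $h_ih_i^Tp^0$ and $r_i^0\sin(\varepsilon_i^a)h_i$ are independent over $i$ and uniformly bounded ($\|h_i\|=1$, $|\sin(\varepsilon_i^a)|\le 1$, and $r_i^0,\|p^0\|$ bounded by Assumption \ref{assum_coordinates_2d}), so Lemma \ref{lem_sqrtm} gives $\frac1nX^TX - \mathbb V(\sin(\varepsilon_1^a))I_2 = e^{-2\sigma_a^2}S_n^0 + O_p(1/\sqrt n)$ and $\frac1nX^TY - \mathbb V(\sin(\varepsilon_1^a))\frac1n\sum_{i=1}^n p_i = e^{-2\sigma_a^2}S_n^0 p^0 + O_p(1/\sqrt n)$. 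Writing these left-hand sides as $A_n$ and $b_n$ (so $\widehat{p}_n^{\rm BE} = A_n^{-1}b_n$) and $\bar A_n \eq e^{-2\sigma_a^2}S_n^0$, $\bar b_n \eq \bar A_n p^0$ (so $p^0 = \bar A_n^{-1}\bar b_n$), the convergence of $\bar A_n$ to a non-singular matrix together with $A_n - \bar A_n = o_p(1)$ makes $A_n$ invertible with probability tending to one, $A_n^{-1} = O_p(1)$, and $A_n^{-1} - \bar A_n^{-1} = -A_n^{-1}(A_n - \bar A_n)\bar A_n^{-1} = O_p(1/\sqrt n)$; hence $\widehat{p}_n^{\rm BE} - p^0 = A_n^{-1}(b_n - \bar b_n) + (A_n^{-1} - \bar A_n^{-1})\bar b_n = O_p(1/\sqrt n)$, since $\bar b_n$ is bounded, which is the claim.

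The main obstacle I anticipate is the concentration step: obtaining the $O_p(1/\sqrt n)$ rate for averages of independent, non-identically-distributed, merely bounded terms — this is exactly the role of Lemma \ref{lem_sqrtm} — together with the bookkeeping needed to verify uniform boundedness of every summand via Assumption \ref{assum_coordinates_2d}. By contrast, the expectation algebra hinges only on the Gaussian trigonometric moments of Lemma \ref{lem_moment_cos_sin}, the orthonormality of $h_i^0$ and $g_i^0$, and the identity $r_i^0 g_i^0 = p_i - p^0$, and the final perturbation bound is standard given the non-singularity supplied by Proposition \ref{prop_noise_free_gram}.
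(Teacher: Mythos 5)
Your proposal is correct and follows essentially the same route as the paper's proof in Appendix A.1: both establish that $\tfrac1nX^TX-\mathbb V(\sin(\varepsilon_1^a))I_2 = e^{-\sigma_a^2}\tfrac1n(X^0)^TX^0+O_p(1/\sqrt n)$ and $\tfrac1nX^TY-\mathbb V(\sin(\varepsilon_1^a))\tfrac1n\sum_i p_i = e^{-\sigma_a^2}\tfrac1n(X^0)^TX^0p^0+O_p(1/\sqrt n)$ via the decomposition $h_i=\cos(\varepsilon_i^a)h_i^0+\sin(\varepsilon_i^a)g_i^0$, the Gaussian trigonometric moments of Lemma \ref{lem_moment_cos_sin}, the concentration Lemma \ref{lem_sqrtm}, and the non-singularity from Proposition \ref{prop_noise_free_gram}, before inverting. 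The only difference is organizational: you state the expectation identity exactly for finite $n$ and then concentrate, whereas the paper expands $X=X^0+\Delta X$ and treats the cross and quadratic terms separately — the resulting constant $e^{-2\sigma_a^2}S_n^0=e^{-\sigma_a^2}\tfrac1n(X^0)^TX^0$ coincides with the paper's $K=e^{-\sigma_a^2}$.
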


	If \(\mathbb{V}(\sin(\varepsilon_1^a))\), or equivalently \(\sigma^2_a\), is not available, then a consistent estimator for \(\mathbb{V}(\sin(\varepsilon_1^a))\) can achieve the same goal. This result is presented in the following corollary as a direct extension of Theorem~\ref{thm_be_2d}.
	\begin{cor}
		\label{cor_consis_var_2d}
		In the case that $\sigma_a^2$ is unknown, the BELS estimator $\widehat{p}_{n}^{\rm BE}$ is still $\sqrt{n}$-consistent if  $\mathbb{V}(\sin(\varepsilon_1^a))$ is replaced by a $\sqrt{n}$-consistent estimator.
	\end{cor}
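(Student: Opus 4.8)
The plan is to regard the BELS estimator \eqref{biaseli_esti_2d} as a smooth map of the variance parameter it uses, and to show this map is Lipschitz on a fixed neighbourhood of the true value $v_0 \eq \mathbb{V}(\sin(\varepsilon_1^a))$ with a random Lipschitz constant that is stochastically bounded. Concretely, set $A_n(v) \eq \frac{1}{n}X^TX - v I_2$ and $b_n(v) \eq \frac{1}{n}X^TY - v\big(\frac{1}{n}\sum_{i=1}^n p_i\big)$, so that the estimator in \eqref{biaseli_esti_2d} is $g_n(v) \eq A_n(v)^{-1}b_n(v)$ and $\widehat{p}_n^{\rm BE} = g_n(v_0)$. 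Writing $\widehat{v}_n$ for the given $\sqrt{n}$-consistent estimator of $v_0$ (so $\widehat{v}_n - v_0 = O_p(1/\sqrt{n})$), the estimator in the statement is $g_n(\widehat{v}_n)$, and the task is to bound $g_n(\widehat{v}_n) - p^0$.

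First I would collect the facts already established along the way to Theorem~\ref{thm_be_2d} (and Proposition~\ref{prop_noise_free_gram}): $A_n(v_0)$ converges to a non-singular, indeed positive definite, limit; the quantities $\frac{1}{n}X^TX$, $\frac{1}{n}X^TY$ and $\frac{1}{n}\sum_{i=1}^n p_i$ are $O_p(1)$; and $g_n(v_0) - p^0 = O_p(1/\sqrt{n})$. Since $A_n(v) = A_n(v_0) + (v_0-v)I_2$, the eigenvalues of $A_n(v)$ stay bounded away from zero uniformly over a small fixed interval $|v-v_0|\le\delta$ once $n$ is large, so $\sup_{|v-v_0|\le\delta}\|A_n(v)^{-1}\| = O_p(1)$, and likewise $\sup_{|v-v_0|\le\delta}\|b_n(v)\| = O_p(1)$ because $b_n(v) = b_n(v_0) + (v_0-v)\big(\frac1n\sum_{i=1}^n p_i\big)$. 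In particular, on the event $\{|\widehat{v}_n - v_0|\le\delta\}$, whose probability tends to $1$, $g_n(\widehat{v}_n) = O_p(1)$.

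Next I would estimate the increment by algebra rather than a Taylor expansion. From $A_n(\widehat{v}_n)g_n(\widehat{v}_n) = b_n(\widehat{v}_n)$ and $A_n(v_0)g_n(v_0) = b_n(v_0)$, using $A_n(\widehat{v}_n) = A_n(v_0) - (\widehat{v}_n - v_0)I_2$ and $b_n(\widehat{v}_n) = b_n(v_0) - (\widehat{v}_n - v_0)\big(\frac1n\sum_{i=1}^n p_i\big)$, one gets
\begin{equation*}
A_n(v_0)\big(g_n(\widehat{v}_n) - g_n(v_0)\big) = (\widehat{v}_n - v_0)\Big(g_n(\widehat{v}_n) - \tfrac{1}{n}\sum_{i=1}^n p_i\Big).
\end{equation*}
Multiplying by $A_n(v_0)^{-1}$ and inserting the bounds above gives $g_n(\widehat{v}_n) - g_n(v_0) = O_p(1)\cdot O_p(1/\sqrt{n})\cdot O_p(1) = O_p(1/\sqrt{n})$, and combining this with $g_n(v_0) - p^0 = O_p(1/\sqrt{n})$ yields $g_n(\widehat{v}_n) - p^0 = O_p(1/\sqrt{n})$, which is the claim.

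The only genuinely delicate point is the uniform-in-$v$ invertibility and boundedness of $A_n(v)^{-1}$ over a fixed neighbourhood of $v_0$; this rests entirely on the non-singularity of the limit of $A_n(v_0)$, which Proposition~\ref{prop_noise_free_gram} and the proof of Theorem~\ref{thm_be_2d} already supply, so no new estimates are required. The only other thing to be careful about is the routine step of restricting to the high-probability event on which $\widehat{v}_n$ has fallen into that neighbourhood, so that the uniform bounds legitimately apply and $g_n(\widehat{v}_n)$ is well defined.
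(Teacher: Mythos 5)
Your argument is correct. The paper itself offers no separate proof of Corollary~\ref{cor_consis_var_2d}: it treats the claim as a direct consequence of the proof of Theorem~\ref{thm_be_2d}, where the implicit argument is simply to substitute $\widehat{v}_n=\mathbb{V}(\sin(\varepsilon_1^a))+O_p(1/\sqrt{n})$ into the expansions \eqref{xtx} and \eqref{xty}; since the estimator depends on the variance only through the terms $-vI_2$ and $-v\big(\frac{1}{n}\sum_{i=1}^n p_i\big)$, the replacement perturbs both the Gram matrix and the right-hand side by $O_p(1/\sqrt{n})$, and the final three lines of Appendix~\ref{proof_thm_be_2d} go through verbatim. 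Your route packages the same idea differently: rather than re-deriving the expansions, you use $g_n(v_0)-p^0=O_p(1/\sqrt{n})$ as a black box and control the increment $g_n(\widehat{v}_n)-g_n(v_0)$ via the exact identity $A_n(v_0)\big(g_n(\widehat{v}_n)-g_n(v_0)\big)=(\widehat{v}_n-v_0)\big(g_n(\widehat{v}_n)-\frac{1}{n}\sum_i p_i\big)$ together with uniform invertibility of $A_n(v)$ near $v_0$ (which follows from Proposition~\ref{prop_noise_free_gram} and \eqref{xtx}, since $A_n(v)$ is symmetric and its spectrum is just that of $A_n(v_0)$ shifted by $v_0-v$). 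This buys a cleaner modularity — the corollary becomes a statement about Lipschitz continuity of the map $v\mapsto g_n(v)$ that would survive any other $\sqrt{n}$-consistent plug-in — at the cost of the small extra step of restricting to the high-probability event $\{|\widehat{v}_n-v_0|\le\delta\}$, which you handle correctly. No gaps.
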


	Next, we will derive a $\sqrt{n}$-consistent estimator for $\mathbb{V}(\sin(\varepsilon_1^a))$, drawing inspiration from \cite[Proposition 2]{Stoica1982}. Define
	\begin{equation}
		\label{QS}
		\!\!\!\!Q_n \eq \frac{1}{n}\!\begin{bmatrix}
			X^T	\\
			Y^T	\end{bmatrix}\!\!\big[X ~ Y\big], ~
		S_n \eq \begin{bmatrix}
			I_2	&\frac{1}{n}\sum\limits_{i=1}^{n}p_i  \\
			\frac{1}{n}\sum\limits_{i=1}^{n}p_i^T	& \frac{1}{n}\sum\limits_{i=1}^{n}\|p_i\|^2
		\end{bmatrix}.
	\end{equation}
	We proposed the estimator for $\mathbb{V}(\sin(\varepsilon_1^a))$  given by
	\begin{equation}
		\widehat{v}^a_n = \frac{1}{\lambda_{\rm max}\left(Q_n^{-1}S_n\right)},\label{sina}
	\end{equation}
	where $\lambda_{\rm max}(\cdot)$ denotes the maximum eigenvalue of a square matrix.	The following theorem shows that $\widehat{v}^a_n$ is a $\sqrt{n}$-consistent estimator of $\mathbb{V}(\sin(\varepsilon_1^a))$.
	
	\begin{thm}
		\label{thm_var_2d}
		Under Assumptions \ref{assum_mea_noise_2d}-\ref{assum_colinearity},
		$\widehat{v}^a_n$ is a $\sqrt{n}$-consistent estimator of $\mathbb{V}(\sin(\varepsilon_1^a))$, i.e.,  $\widehat{v}^a_n-\mathbb{V}(\sin(\varepsilon_1^a)) = O_p(1/\sqrt{n})$.
	\end{thm}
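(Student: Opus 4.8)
The plan is to recognize $\widehat{v}^a_n$ as the smallest generalized eigenvalue of the matrix pair $(Q_n,S_n)$ and to exhibit the (finite-$n$, noise-averaged) matrix whose exact rank deficiency pins this value to $v^a\eq\mathbb{V}(\sin(\varepsilon_1^a))$, then transfer the conclusion to $\widehat{v}^a_n$ by a perturbation estimate. Let $B_i$ denote the $3\times 2$ matrix obtained by stacking $I_2$ on top of the row $p_i^T$, and set $h_i^0\eq[\sin(a_i^0),~-\cos(a_i^0)]^T$, $g_i^0\eq[\cos(a_i^0),~\sin(a_i^0)]^T$, which form an orthonormal pair. One checks directly that $[h_i^T,~h_i^Tp_i]^T=B_ih_i$ and that the $i$-th summand of $S_n$ equals $B_iB_i^T$, so $Q_n=\frac1n\sum_i B_ih_ih_i^TB_i^T$ and $S_n=\frac1n\sum_i B_iB_i^T$. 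Since $\|h_i\|=1$, both are PSD; arguing as in Proposition \ref{prop_noise_free_gram} via Assumption \ref{assum_colinearity} one gets $S_n\succ0$ for large $n$ with a positive-definite limit $S$, and (using $\sigma_a^2>0$) also $Q_n\succ0$ for large $n$. Hence $Q_n^{-1}S_n$ has only positive eigenvalues and $\lambda_{\max}(Q_n^{-1}S_n)=1/\lambda_{\min}(S_n^{-1}Q_n)$, so $\widehat{v}^a_n=\lambda_{\min}(S_n^{-1}Q_n)$.

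Next I would introduce $Q_n^0\eq\frac1n\sum_i B_i\,\mathbb{E}[h_ih_i^T]\,B_i^T$. From $h_i=\cos(\varepsilon_i^a)h_i^0+\sin(\varepsilon_i^a)g_i^0$ and the trigonometric moments of Lemma \ref{lem_moment_cos_sin} ($\mathbb{E}\cos^2(\varepsilon_i^a)=1-v^a$, $\mathbb{E}\sin^2(\varepsilon_i^a)=v^a$, $\mathbb{E}[\sin(\varepsilon_i^a)\cos(\varepsilon_i^a)]=0$) one obtains $\mathbb{E}[h_ih_i^T]=(1-v^a)h_i^0(h_i^0)^T+v^a g_i^0(g_i^0)^T$. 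Using $h_i^0(h_i^0)^T+g_i^0(g_i^0)^T=I_2$ and, since $g_i^0=(p_i-p^0)/r_i^0\perp h_i^0$, the identity $(h_i^0)^Tp_i=(h_i^0)^Tp^0$ (whence $B_ih_i^0=[I_2;(p^0)^T]h_i^0$), substitution yields the key factorization
\[
Q_n^0-v^aS_n=(1-2v^a)\begin{bmatrix}I_2\\(p^0)^T\end{bmatrix}G_n\begin{bmatrix}I_2 & p^0\end{bmatrix},\qquad G_n\eq\frac1n\sum_{i=1}^n h_i^0(h_i^0)^T.
\]
Here $1-2v^a=e^{-2\sigma_a^2}>0$, and $G_n\succ0$ because the angles $a_i^0$ cannot all coincide modulo $\pi$ (that would force the sensors onto a line through $p^0$, contradicting Assumption \ref{assum_colinearity}). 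Therefore $Q_n^0-v^aS_n\succeq0$ with one-dimensional kernel spanned by $\beta_0\eq[p^0;-1]$ (indeed $[I_2~p^0]\beta_0=0$, and $[I_2~p^0]\xi=0$ forces $\xi\parallel\beta_0$). Conjugating by $S_n^{-1/2}$ shows $S_n^{-1/2}Q_n^0S_n^{-1/2}-v^aI_3\succeq0$ with a nontrivial kernel, and since $S_n^{-1}Q_n^0$ is similar to the symmetric matrix $S_n^{-1/2}Q_n^0S_n^{-1/2}$, we conclude $\lambda_{\min}(S_n^{-1}Q_n^0)=v^a$ \emph{exactly}, for all large $n$.

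It remains to control the stochastic perturbation. The difference $Q_n-Q_n^0=\frac1n\sum_i B_i\bigl(h_ih_i^T-\mathbb{E}[h_ih_i^T]\bigr)B_i^T$ is a normalized sum of independent, zero-mean, uniformly bounded random matrices (boundedness from $\|h_i\|=1$ and Assumption \ref{assum_coordinates_2d}), so Lemma \ref{lem_sqrtm} gives $Q_n-Q_n^0=O_p(1/\sqrt n)$. Since $S_n\to S\succ0$, $\|S_n^{-1}\|=O(1)$, hence $E_n\eq S_n^{-1/2}(Q_n-Q_n^0)S_n^{-1/2}$ is symmetric with $\|E_n\|\le\|S_n^{-1}\|\,\|Q_n-Q_n^0\|=O_p(1/\sqrt n)$. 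Both $S_n^{-1}Q_n$ and $S_n^{-1}Q_n^0$ are similar to the symmetric matrices $S_n^{-1/2}Q_nS_n^{-1/2}$ and $S_n^{-1/2}Q_n^0S_n^{-1/2}$, which differ exactly by $E_n$, so Weyl's inequality yields
\[
\bigl|\widehat{v}^a_n-v^a\bigr|=\bigl|\lambda_{\min}(S_n^{-1}Q_n)-\lambda_{\min}(S_n^{-1}Q_n^0)\bigr|\le\|E_n\|=O_p(1/\sqrt n),
\]
which is the claim.

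I expect the main obstacle to be the second step: deriving the factorization of $Q_n^0-v^aS_n$ and, above all, proving that its kernel is \emph{exactly} the line through $\beta_0$. This is precisely what forces $v^a$ to be the \emph{smallest} generalized eigenvalue (hence the $\lambda_{\max}$ after inversion in \eqref{sina}), and it is where the non-collinearity Assumption \ref{assum_colinearity} is essential. A secondary technical point, also needing $\sigma_a^2>0$ and Assumption \ref{assum_colinearity}, is ensuring $Q_n\succ0$ for large $n$ so that $1/\lambda_{\max}(Q_n^{-1}S_n)$ is well defined; the rest is a routine application of Lemma \ref{lem_sqrtm} and the Weyl perturbation bound, in the spirit of \cite{Stoica1982}.
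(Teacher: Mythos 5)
Your proposal is correct, and its core is the same as the paper's: both arguments rest on the observation that the noise-averaged version of $Q_n$ splits as $\mathbb{V}(\sin(\varepsilon_1^a))S_n$ plus a positive semi-definite, rank-deficient remainder, with non-collinearity (Assumption \ref{assum_colinearity}) guaranteeing that the remainder has rank exactly $2$. Indeed, your factorization $Q_n^0-v^aS_n=(1-2v^a)\bigl[\begin{smallmatrix}I_2\\(p^0)^T\end{smallmatrix}\bigr]G_n[I_2~~p^0]$ is precisely the paper's decomposition $Q_n=e^{-\sigma_a^2}\frac1n\bigl[\begin{smallmatrix}(X^0)^TX^0 & (X^0)^TX^0p^0\\ (p^0)^T(X^0)^TX^0 & (p^0)^T(X^0)^TX^0p^0\end{smallmatrix}\bigr]+\mathbb{V}(\sin(\varepsilon_1^a))S_n+O_p(1/\sqrt n)$, since $\frac1n(X^0)^TX^0=e^{-\sigma_a^2}G_n$ and $1-2v^a=e^{-2\sigma_a^2}$; you simply reach it through the compact $B_ih_ih_i^TB_i^T$ representation and $\mathbb{E}[h_ih_i^T]$ rather than by recycling the $X^TX$, $X^TY$, $Y^TY$ expansions from the proof of Theorem \ref{thm_be_2d}. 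Where you genuinely diverge is in the endgame. The paper passes to the limits $Q_\infty$, $S_\infty$, invokes Stoica's eigenvalue lemma (Lemma \ref{lem_eigen}) as a black box to get $\lambda_{\max}(Q_\infty^{-1}v^aS_\infty)=1$, and then appeals to continuity of $\lambda_{\max}$ to transfer the $O_p(1/\sqrt n)$ rate. You instead stay at finite $n$, identify the one-dimensional kernel of $Q_n^0-v^aS_n$ explicitly (spanned by $[p^0;-1]$), deduce $\lambda_{\min}(S_n^{-1}Q_n^0)=v^a$ exactly by conjugating with $S_n^{-1/2}$, and control the stochastic perturbation with Weyl's inequality. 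This buys two things: your argument is self-contained (no external eigenvalue lemma), and the perturbation step is airtight — the paper's appeal to ``continuity'' of $\lambda_{\max}$ strictly speaking needs local Lipschitz continuity to preserve the $O_p(1/\sqrt n)$ rate, which your symmetric-similarity-plus-Weyl route supplies explicitly. You are also right to flag that $\sigma_a^2>0$ (hence $v^a>0$) is implicitly needed for $Q_n\succ0$; the paper needs the same for Lemma \ref{lem_eigen} to apply but does not say so.
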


Algorithm \ref{algo1} displays the estimation procedure for the variance of the sine of the noise.
	
	\begin{algorithm}
		\caption{The estimation algorithm for the variance of the sine of the noise (2-D scenario)}  
		\label{algo1}
		\begin{algorithmic}[1]
			\Require Sensor locations $\{p_i\}_{i=1}^n$ and noisy AOA  measurements $\{a_i\}_{i=1}^n$.
			\State Calculate $Q_n$ and $S_n$ according to \eqref{QS};
			\State Calculate the maximum eigenvalue of $Q_n^{-1}S_n$;
			\Ensure \!The estimate for $\mathbb{V}(\sin(\varepsilon_1^a))$: $ \widehat{v}^a_n= 1/\lambda_{\rm max}(Q_n^{-1}\!S_n)$.
		\end{algorithmic}
	\end{algorithm}
	
	Algorithm \ref{algo2} presents the complete procedure for source localization using the proposed two-step estimator.
	\begin{algorithm}
		\caption{The estimation algorithm for consistent and asymptotically efficient two-step estimator   using AOA  measurements (2-D scenario)}  
		\label{algo2}
		\begin{algorithmic}[1]
			\Require Sensor locations $\{p_i\}_{i=1}^n$, AOA  measurements $\{a_i\}_{i=1}^n$, and noise variance $\sigma_a^2$ (if available).
			\If {$\sigma_a^2$ is available}
			\State Calculate the variance of the sine of noise via  \eqref{vsin};
			\State Calculate the BELS estimate $\widehat{p}_n^{\rm BE}$ according to \eqref{biaseli_esti_2d};
			\Else
			\State Run Algorithm \ref{algo1} to obtain the estimate $\widehat{v}^a_n$ of the variance of the sine of the noise;
			\State Calculate the BELS estimate $\widehat{p}_n^{\rm BE}$ according to \eqref{biaseli_esti_2d} with $\mathbb{V}(\sin(\varepsilon_1^a))$ being replaced by  $\widehat{v}^a_n$;
			\EndIf
			
			\State Run  one-step GN iteration \eqref{gn_2d} for $\widehat{p}_n^{\rm BE}$ to obtain $\widehat{p}_n^{\rm GN}$;
			\Ensure The source location estimate $\widehat{p}_n^{\rm GN}$.
		\end{algorithmic}
	\end{algorithm}
	
	In Algorithm \ref{algo1}, note that $[X~Y] \in \mathbb{R}^{n \times 3}$ and $S_n \in \mathbb{R}^{3 \times 3}$. The computational complexity is primarily determined by the matrix summation and multiplication in Line 1. Consequently, the computational complexity of Algorithm \ref{algo1} is $\mathcal{O}(n)$. For Algorithm \ref{algo2}, with $X \in \mathbb{R}^{n \times 2}$ and $Y \in \mathbb{R}^{n \times 1}$, the computational complexity of Line 3 or 6 and Line 8 is also $\mathcal{O}(n)$. Therefore, the overall computational complexity of Algorithm \ref{algo2} is $\mathcal{O}(n)$, meaning that its execution time increases linearly with the number of measurements.
	
	\section{3-D scenario}
	\label{sec3}
	In this section, we focus on the more complex 3-D scenario, an extension of the 2-D case where both the sensors and the signal source are located in a 3-D space.

	\subsection{Problem formulation}
	Suppose there are \( n \) sensors distributed in a 3-D space, each with precisely known coordinates \( p_i = [x_i, y_i, z_i]^T \) for \( i = 1, \ldots, n \). Let \( p^o = [x^o,  y^o,  z^o]^T \) represent the unknown coordinates of the source that need to be estimated using AOA  measurements from the sensors. 
	Formally, the AOA  measurement of the signal source obtained by sensor \( i =1,...,n \) is given by
	\begin{subequations}
		\label{mea_3d}
		\begin{align}
			\label{mea_3d_b}
			a_i &= \arctan\left(\frac{y_i-y^o}{x_i-x^o} \right) + \varepsilon_i^a,\\
			\label{mea_3d_e}
			e_i &= \arctan\left(\frac{z_i-z^o}{\sqrt{(x_i-x^o)^2 + (y_i-y^o)^2}} \right) + \varepsilon_i^e,
		\end{align}
	\end{subequations}
	where $\varepsilon_i^a$ and $\varepsilon_i^e$ are the measurement noises. The goal is to estimate $p^o$ from $\{p_i\}_{i=1}^n$ and $\{a_i,e_i\}_{i=1}^n$ according to the measurement models \eqref{mea_3d}. For the measurement noises, we make the following assumption.
	
		\begin{assum}
		\label{assum_mea_noise_3d}
		\begin{enumerate}[(i)]
			\item The azimuth angle measurement noises $\{\varepsilon_i^a\}_{i=1}^n$ are i.i.d. Gaussian random variables with mean zero and finite variance $\sigma_a^2>0$. 
			\item The elevation angle measurement noises $\{\varepsilon_i^e\}_{i=1}^n$ are i.i.d. Gaussian random variables with mean zero and finite variance $\sigma_e^2>0$. \item $\{\varepsilon_i^a\}_{i=1}^n$ and $\{\varepsilon_i^e\}_{i=1}^n$ are mutually independent.
		\end{enumerate}
	\end{assum}
	
\subsection{Asymptotic localizability}

	Similar to the 2-D case, in this subsection, we present sufficient conditions on sensor geometric deployment to ensure the asymptotic localizability of AOA-based localization.  Let $(p_i)_{1:2} \eq [x_i,  y_i]^T$ be the first two coordinates of $p_i$ for $i=1,...,n$. We make the assumptions on the coordinates of the signal source and sensors.
	
	\begin{assum}
		\label{assum_coordinates_3d}
		The source $p^o$ lies within a bounded set $\mathcal{P}^o$ and all the sensors $p_i,i=1,...,n$ belong to a bounded set $\mathcal{P}$, regardless of $n$. Moreover,  $\mathcal{P}^o \cap \mathcal{P} = \emptyset$.
	\end{assum}
	
	\begin{assum}
		\label{assum_projection}
		 
		\begin{enumerate}[(i)]
		
			\item $\mathcal{P}^o_{1:2} \cap \mathcal{P}_{1:2} = \emptyset$, where $\mathcal{P}^o_{1:2} \eq \{[x^o,  y^o]^T ~|~ [x^o,  y^o,  z^o]^T \in \mathcal{P}^o\}$ and $\mathcal{P}_{1:2} \eq \{[x,  y]^T ~|~ [x,  y,  z]^T \in \mathcal{P}\}$;
			\item The empirical distribution function $F_n$ of the sequence $p_1, p_2,...$ converges to a distribution function $F_\mu$;
			\item 
			For any positive integer $n$, $p_1,...,p_n$ do not lie on a line. Further, there does not exist   a subset $\mathcal{P'}$ of $\mathcal{P}$ with $\mu(\mathcal{P'}) = 1$ such that $\mathcal{P'}$ lies entirely on a line. 
			
		\end{enumerate}
	\end{assum}
	For each $p=[x,y,z]^T\in \mathcal{P}^o$, we modify the notation $f_i(p)$ and $h_n(p)$ in the 2-D scenario to suit the 3-D scenario here:
	\begin{align}
	\label{fip_3d}
		&f_{i}(p) \eq 
		\begin{bmatrix}
		    \frac{1}{\sigma_a}
		\arctan\left(\frac{y_i-y}{x_i-x}\right) \\
			\frac{1}{\sigma_e}
		\arctan\left(\frac{z_i-z}{\sqrt{(x_i-x)^2+(y_i-y)^2}}\right)
		\end{bmatrix},~i=1,\cdots,n
	\end{align}   and $h_n(p) = \frac{1}{n}\sum_{i=1}^n(f_i(p)-f_i(p^o))^T(f_i(p)-f_i(p^o))$.
	 We have the following results on the asymptotic localizability for  the 3-D AOA-based localization problem.
	\begin{thm}
	\label{lem_uniq_3d}
	    Under Assumptions \ref{assum_coordinates_3d}-\ref{assum_projection}, the  true signal source is asymptotically localizable.
	\end{thm}
	
		\subsection{Maximum likelihood estimator}
		
This section  derives the ML estimator for the  3-D AOA-based localization problem defined in \eqref{mea_3d} and  proves its consistency and asymptotic efficiency.

	Assumption \ref{assum_mea_noise_3d} entails that the log-likelihood function of the ML estimation for the problem \eqref{mea_3d} is
	\begin{align*}
		\ell_n&(p) = -n\ln(\sqrt{2\pi}\sigma_a)-n\ln(\sqrt{2\pi}\sigma_e)\\
		&-\frac{1}{2}\sum_{i=1}^n\Bigg[\frac{1}{\sigma_a^2}\left( a_i-\arctan\left(\frac{y_i-y}{x_i-x}\right) \right)^2 + \frac{1}{\sigma_e^2}\left( e_i-\arctan\left(\frac{z_i-z}{\sqrt{(x_i-x)^2+(y_i-y)^2}}\right) \right)^2\Bigg].
	\end{align*}
	Therefore, the ML estimation is given by
	\begin{align}
		\min_{p=[x,y,z]^T \in \mathbb{R}^3}\frac{1}{n}\sum_{i=1}^n\Bigg[ \frac{1}{\sigma_a^2}\left( a_i-\arctan\left(\frac{y_i-y}{x_i-x}\right) \right)^2 	+ \frac{1}{\sigma_e^2}\left( e_i-\arctan\left(\frac{z_i-z}{\sqrt{(x_i-x)^2+(y_i-y)^2}}\right) \right)^2\Bigg], \label{ML_3d}
	\end{align}
	which is equivalent to maximizing $\ell_n(p)$. Denote the ML estimator as $\widehat{p}_n^{\rm ML}$, which maximizes $\ell_n(p)$.
	
	Furthermore, let $\nabla f_{i}(p)$  be the the Jacobian of $f_{i}(p)$ with respect to $p$, i.e.,
	\begin{align}
	\label{nablafip_3d}
		&\nabla f_{i}(p) = \left[\frac{\partial f_{i}(p)}{\partial x}, ~ \frac{\partial f_{i}(p)}{\partial y},~ \frac{\partial f_{i}(p)}{\partial z}  \right]
		=\begin{bmatrix}
		    \frac{1}{\sigma_a}\frac{y_i-y}{(x_i-x)^2+(y_i-y)^2} & \frac{1}{\sigma_e}\frac{(x_i-x)(z_i-z)}{\sqrt{(x_i-x)^2+(y_i-y)^2}\|p_i-p\|^2}\\
		    -\frac{1}{\sigma_a}\frac{x_i-x}{(x_i-x)^2+(y_i-y)^2} & \frac{1}{\sigma_e}\frac{(y_i-x)(z_i-z)}{\sqrt{(x_i-x)^2+(y_i-y)^2}\|p_i-p\|^2}\\
		    0 & \frac{1}{\sigma_e} \frac{-\sqrt{(x_i-x)^2+(y_i-y)^2}}{\|p_i-p\|^2}
		\end{bmatrix}^T.
	\end{align}
Similar to the 2-D scenario, we have: 
		\begin{lem}
			\label{lem_convergeM_3d}
			Under Assumptions \ref{assum_coordinates_3d}-\ref{assum_projection}, we have:
			\begin{enumerate}[(i)]
				\item The matrix  
				$  \frac{1}{n}\sum_{i=1}^n \big(\nabla f_{i}(p)\big)^T \nabla f_{i}(p)$ converges uniformly for $p \in \mathcal{P}^o$  as $n \to \infty$.
				\item The limit $M^o \eq \lim_{n \to \infty} \frac{1}{n}\sum_{i=1}^n \big(\nabla f_{i}(p^o)\big)^T \nabla f_{i}(p^o)$ is nonsingular.
			\end{enumerate}
		\end{lem}
		
		We then present the consistency and asymptotic normality of  the ML estimator  in the following theorem, which is the counterpart of Theorem \ref{thm_consis_asymnormal_2d} in the 2-D scenario. 
		\begin{thm}
			\label{thm_consis_asymnormal_3d}
			Under Assumptions \ref{assum_mea_noise_3d}-\ref{assum_projection}, we have $\widehat{p}_n^{\rm ML} \to p^o$ almost surely as $n \to \infty$ with the asymptotic rate
			\begin{equation*}
				\sqrt{n}(\widehat{p}_n^{\rm ML} - p^o) \to \mathcal{N}\left(0,\left(M^o\right)^{-1}\right), \quad {\rm as} ~ n \to \infty.
			\end{equation*}
		\end{thm}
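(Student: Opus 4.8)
The plan is to recast \textbf{ML-3D} as a whitened nonlinear least-squares problem and then invoke the classical consistency and asymptotic-normality result for nonlinear least squares \cite[Theorem 3]{Jennrich1969}, exactly as for Theorem \ref{thm_consis_asymnormal_2d} in the 2-D case; the work lies in verifying the hypotheses of that theorem from Lemma \ref{lem_uniq_3d} and Assumptions \ref{assum_coordinates_3d}--\ref{assum_projection}. First I would absorb the noise variances into the regression function: since $f_i(p)$ as defined already carries the factors $1/\sigma_a$ and $1/\sigma_e$, set $g_i \eq [a_i/\sigma_a,~e_i/\sigma_e]^T$ and $\xi_i \eq [\varepsilon_i^a/\sigma_a,~\varepsilon_i^e/\sigma_e]^T$. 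Then \eqref{mea_3d} reads $g_i = f_i(p^0)+\xi_i$, where under Assumption \ref{assum_mea_noise_3d} the $\xi_i$ are i.i.d., zero-mean, with $\mathbb{E}[\xi_i\xi_i^T]=I_2$ (the cross term vanishing by part (iii)) and all moments finite; moreover the objective in \eqref{ML_3d} equals $\frac{1}{n}\sum_{i=1}^n\|g_i-f_i(p)\|^2$, so $\widehat{p}_n^{\rm ML}$ is exactly the nonlinear least-squares estimator of this whitened model.

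Next I would verify Jennrich's conditions. (a) The parameter set $\mathcal{P}^0$ is bounded by Assumption \ref{assum_coordinates_3d}; because $\mathcal{P}^0\cap\mathcal{P}=\emptyset$ and, crucially, $\mathcal{P}^0_{1:2}\cap\mathcal{P}_{1:2}=\emptyset$ (Assumption \ref{assum_projection}(i)), both $\|p_i-p\|$ and $\sqrt{(x_i-x)^2+(y_i-y)^2}$ are bounded away from $0$ uniformly in $i$ and $p\in\mathcal{P}^0$, hence $f_i$ together with its first two derivatives are continuous and uniformly bounded on $\mathcal{P}^0$. (b) For fixed $p$ and $q$, the sums $\frac{1}{n}\sum_{i=1}^n f_i(p)^T f_i(q)$ and the analogous gradient-product sums converge, since each summand is a bounded continuous function of $p_i$ and the empirical measure of $\{p_i\}$ converges weakly to $\mu$ (Assumption \ref{assum_projection}(ii)); the Helly--Bray theorem upgrades this to the uniform convergence on $\mathcal{P}^0$ that Jennrich requires. (c) Identifiability is Lemma \ref{lem_uniq_3d}(i): the limit of $\frac{1}{n}\sum_{i=1}^n\|f_i(p)-f_i(p^0)\|^2$ exists and is uniquely minimized at $p^0$. (d) Non-degeneracy is Lemma \ref{lem_uniq_3d}(ii): $\frac{1}{n}\sum_{i=1}^n(\nabla f_i(p^0))^T\nabla f_i(p^0)\to M^0$ with $M^0$ non-singular. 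Conditions (a)--(c) give $\widehat{p}_n^{\rm ML}\to p^0$ with probability one.

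For the asymptotic normality I would expand the normal equations $\sum_{i=1}^n(\nabla f_i(\widehat{p}_n^{\rm ML}))^T(g_i-f_i(\widehat{p}_n^{\rm ML}))=0$ about $p^0$, using $g_i-f_i(p^0)=\xi_i$ together with consistency and the uniform boundedness of $\nabla^2 f_i$, to get
\begin{equation*}
\sqrt{n}\,(\widehat{p}_n^{\rm ML}-p^0) = \Big(\tfrac{1}{n}\textstyle\sum_{i=1}^n(\nabla f_i(p^0))^T\nabla f_i(p^0)\Big)^{-1}\tfrac{1}{\sqrt{n}}\textstyle\sum_{i=1}^n(\nabla f_i(p^0))^T\xi_i + o_p(1).
\end{equation*}
The first factor converges to $(M^0)^{-1}$ by (d); in the second factor the summands are independent, mean-zero, uniformly bounded, with average covariance $\frac{1}{n}\sum_{i=1}^n(\nabla f_i(p^0))^T\nabla f_i(p^0)\to M^0$, so the Lindeberg--Feller central limit theorem gives $\frac{1}{\sqrt{n}}\sum_{i=1}^n(\nabla f_i(p^0))^T\xi_i\to\mathcal{N}(0,M^0)$. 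Slutsky's theorem then yields $\sqrt{n}(\widehat{p}_n^{\rm ML}-p^0)\to\mathcal{N}(0,(M^0)^{-1})$; the absence of a variance prefactor is exactly because the weights $1/\sigma_a,1/\sigma_e$ were folded into $f_i$, so $\xi_i$ has identity covariance.

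The hard part will be (b): checking the uniform-convergence (``tail cross-product'') regularity conditions of Jennrich's theorem in this vector-valued setting, which hinges on the denominators $\sqrt{(x_i-x)^2+(y_i-y)^2}$ and $\|p_i-p\|$ appearing in $f_i$ and $\nabla f_i$ being uniformly bounded away from zero on $\mathcal{P}^0$ --- this is where Assumption \ref{assum_projection}(i) enters and is the one ingredient genuinely new relative to the 2-D argument. Everything else mirrors the proof of Theorem \ref{thm_consis_asymnormal_2d} and is routine.
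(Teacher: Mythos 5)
Your proposal is correct and follows the same route the paper intends: the paper omits the proof of this theorem entirely, stating (as for Theorem \ref{thm_consis_asymnormal_2d}) that it follows by checking the conditions of \cite[Theorem 3]{Jennrich1969}, with identifiability and non-degeneracy of $M^0$ supplied by Lemma \ref{lem_uniq_3d} --- exactly the two ingredients you verify. Your write-up simply makes explicit the whitening, the Jennrich condition-checking, and the Lindeberg--Feller step that the paper leaves to the reader (one small quibble: the summands $(\nabla f_i(p^0))^T\xi_i$ are not uniformly bounded since $\xi_i$ is Gaussian, but the Lindeberg condition still holds trivially because the $\nabla f_i(p^0)$ are uniformly bounded and the $\xi_i$ are i.i.d.\ with finite variance).
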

		Similar to the  2-D scenario,  the ML estimator $\widehat{p}_n^{\rm ML}$ is asymptotically efficient.

		\subsection{ Asymptotically efficient two-step estimator}
		
		Following the framework of the asymptotically efficient two-step estimator introduced in the 2-D scenario, we first propose a $\sqrt{n}$-consistent initial estimate of the source location and then refine it into an asymptotically efficient estimator
		via one-step GN iteration.
		
		\subsubsection{ $\sqrt{n}$-consistent estimator}
		
		For $i=1,...,n$, define
		\begin{align*}
			&r_i^o \eq \sqrt{(x_i-x^o)^2 + (y_i-y^o)^2},\\
			&d_i^o \eq \sqrt{(x_i-x^o)^2 + (y_i-y^o)^2 + (z_i-z^o)^2},\\
			&a_i^o \eq \arctan\left((y_i-y^o)/(x_i-x^o)\right),\\
			&e_i^o \eq \arctan\left((z_i-z^o)/\sqrt{(x_i-x^o)^2+(y_i-y^o)^2}\right).
		\end{align*}
		The measurement model \eqref{mea_3d} is equivalent to
		\begin{subequations}
			\label{linear_3d}
			\begin{align}
				\label{linear_3d_b}
				&(x_i-x^o) \sin(a_i) - (y_i-y^o) \cos(a_i) = r_i^o \sin(\varepsilon_i^a),\\
				\label{linear_3d_e}
				&	r_i^o\sin(e_i) - (z_i-z^o)\cos(e_i) = d_i^o\sin(\varepsilon_i^e).
			\end{align}
		\end{subequations}
		
		Moving terms of Eq. \eqref{linear_3d} produces the following linear regression form with respect to $p^o$
		\begin{subequations}
			\label{linear_3d_mat}
			\begin{align}
				\label{linear_3d_b_mat}
				h_i^T (p_i)_{1:2} &= h_i^T p^o_{1:2} + r_i^o\sin(\varepsilon_i^a),\\
				\label{linear_3d_e_mat}
				\sin(e_i)r_i^o-\cos(e_i)z_i &= -\cos(e_i)z^o + d_i^o\sin(\varepsilon_i^e), 
			\end{align} 
		\end{subequations}
		where $h_i = [\sin(a_i),  -\cos(a_i)]^T$, for $i=1,...,n$.
		
		Under Assumptions \ref{assum_mea_noise_3d}-\ref{assum_coordinates_3d}, Eq. \eqref{linear_3d_b_mat} is exactly Eq. \eqref{linear_2d_mat} in the 2-D scenario. 
		Therefore, according to \eqref{biaseli_esti_2d}, the BELS estimator of the first two coordinates $p^o_{1:2}$ of $p^o$  is given by
		\begin{align}
			(\widehat{p}_{1:2})_{n}^{\rm BE} &= \left(\frac{1}{n}X^TX - \mathbb{V}(\sin(\varepsilon_1^a)) I_2\right)^{-1} \left(\frac{1}{n}X^TY-\mathbb{V}(\sin(\varepsilon_1^a))\Big(\frac{1}{n}\sum_{i=1}^{n}(p_i)_{1:2}\Big)
			\right), 	\label{biaseli_esti_3d_b}
		\end{align}
		where the $i$-th row of matrix $X$ is $h_i^T$ and the $i$-th element of vector $Y$ is $h_i^T(p_i)_{1:2}$, for $i=1,...,n$. 
		The $\sqrt{n}$-consistency of $(\widehat{p}_{1:2})_{n}^{\rm BE}$ follows directly from Theorem \ref{thm_be_2d} in the 2-D scenario, which is presented in the following proposition.
		\begin{prop}
			\label{prop_be_3d_b}
			Under Assumptions \ref{assum_mea_noise_3d}-\ref{assum_projection}, the BELS estimator \eqref{biaseli_esti_3d_b} for the first two coordinates of the true source is $\sqrt{n}$-consistent, i.e., $(\widehat{p}_{1:2})_{n}^{\rm BE}- p^o_{1:2} = O_p(1/\sqrt{n})$.
		\end{prop}
		In what follows, we will derive the BELS estimator for the third coordinate $z^o$. By stacking \eqref{linear_3d_e_mat} for $n$ sensors, we obtain the following linear regression form with respect to $z^o$:
		\begin{equation}\label{zphi}
			\Gamma = \Phi z^o+\zeta,
		\end{equation}
		where the $i$-th element of vector $\Gamma$ is $\sin(e_i)r_i^o-\cos(e_i)z_i$, the $i$-th element of vector $\Phi$ is $-\cos(e_i)$, and the $i$-th element of vector $\zeta$ is $d_i^o\sin(\varepsilon_i^e)$, for $i=1,...,n$. Then, the resulting LS estimator is
		\begin{equation}
			\label{z_biased}
			\widehat{z}_n^{\rm B} = \left(\Phi^T\Phi\right)^{-1}\Phi^T\Gamma.
		\end{equation}
		The correlation between $\Phi$  and $\zeta$ and the unavailability of $\{r_i^o\}_{i=1}^n$ involved in $\Gamma$  make the LS estimator \eqref{z_biased} neither consistent nor applicable.
		Similar to the 2-D scenario, we rewrite   \eqref{linear_3d_e_mat} as
		\begin{align}
&\sin(e_i)r_i^o-\cos(e_i)z_i = -e^{-\sigma_e^2/2}\cos(e_i^o)z^o + \eta_i,\\
\nonumber
&\hspace{1mm}\eta_i = - (\cos(\varepsilon_i^e)-e^{-\sigma_e^2/2})\cos(e_i^o)z^o+\sin(\varepsilon_i^e)(d_i^o+\sin(e_i^o)z^o). \label{linear_3d_biaseli_e}
		\end{align}
		Its vector-matrix form  is
		\begin{equation}
			\label{gg}
			\Gamma = \Phi^o z^o + \eta,
		\end{equation}
		where the $i$-th element of vector $\Phi^o$ is $-e^{-\sigma_e^2/2}\cos(e_i^o)$ and the $i$-th element of vector $\eta$ is $\eta_i$.	For the regressor matrix $\Phi^o$, we have the following proposition.
		\begin{prop}
			\label{prop3}
			Under Assumptions \ref{assum_coordinates_3d}-\ref{assum_projection}, $\left(\Phi^o\right)^T\Phi^o/n$ is  bounded from below by a positive constant regardless of $n$.
		\end{prop}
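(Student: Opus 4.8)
The quantity in question is a scalar, since $\Phi^0\in\mathbb{R}^{n\times 1}$. The plan is to first write it out explicitly from the definition of $\Phi^0$:
\[
\frac{1}{n}\left(\Phi^0\right)^T\Phi^0 \;=\; e^{-\sigma_e^2}\,\frac{1}{n}\sum_{i=1}^n\cos^2\!\left(e_i^0\right),
\]
so that, since $\sigma_e^2$ is finite, it suffices to bound $\frac{1}{n}\sum_{i=1}^n\cos^2(e_i^0)$ away from zero. Using $e_i^0=\arctan\!\big((z_i-z^0)/r_i^0\big)$ together with $(d_i^0)^2=(r_i^0)^2+(z_i-z^0)^2$, I would rewrite each term as
\[
\cos^2\!\left(e_i^0\right)=\frac{1}{1+\tan^2(e_i^0)}=\frac{(r_i^0)^2}{(r_i^0)^2+(z_i-z^0)^2}=\frac{(r_i^0)^2}{(d_i^0)^2},
\]
which reduces the whole problem to controlling $r_i^0$ from below and $d_i^0$ from above.

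Next I would supply those two one-sided geometric bounds. By Assumption~\ref{assum_coordinates_3d} the sets $\mathcal{P}$ and $\mathcal{P}^0$ are bounded, so $d_i^0=\|p_i-p^0\|\le D$ for a constant $D<\infty$ independent of $i$ and $n$. By Assumption~\ref{assum_projection}(i) the projected sets $\mathcal{P}^0_{1:2}$ and $\mathcal{P}_{1:2}$ are disjoint; being bounded and keeping $p^0_{1:2}$ off $\mathcal{P}_{1:2}$, this yields $r_i^0=\|(p_i)_{1:2}-p^0_{1:2}\|\ge r_{\min}>0$ uniformly in $i$ (the same separation is used implicitly elsewhere in the paper, e.g. to keep $\|p_i-p^0\|$ bounded below). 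Combining the two, $\cos^2(e_i^0)\ge r_{\min}^2/D^2$ for every $i$, hence
\[
\frac{1}{n}\left(\Phi^0\right)^T\Phi^0 \;\ge\; e^{-\sigma_e^2}\,\frac{r_{\min}^2}{D^2}\;>\;0
\]
for all $n$, which gives the claim.

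If one prefers to avoid invoking a positive separation of the projected sets, I would instead use Assumption~\ref{assum_projection}(ii): the map $p\mapsto\cos^2(e^0(p))=\|p_{1:2}-p^0_{1:2}\|^2/\|p-p^0\|^2$ is bounded by $1$ and continuous on $\mathcal{P}$, so by the Helly-Bray theorem $\frac{1}{n}\sum_{i=1}^n\cos^2(e_i^0)\to\mathbb{E}_\mu\!\big[\cos^2(e^0)\big]$; since $\mathcal{P}^0_{1:2}\cap\mathcal{P}_{1:2}=\emptyset$ forces $\cos^2(e^0)>0$ $\mu$-almost surely, this limit is strictly positive, and the average therefore exceeds, say, half of it for all large $n$ --- precisely the stated conclusion. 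I expect the only real obstacle to be this lower bound on $r_i^0$, i.e. ruling out that sensor projections pile up at $p^0_{1:2}$ (sensors stacked directly above or below the source); the trigonometric simplification and the upper bound on $d_i^0$ are immediate from Assumptions~\ref{assum_coordinates_3d}--\ref{assum_projection}.
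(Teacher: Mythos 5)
Your proof is correct and takes essentially the same route as the paper's: both write $\tfrac{1}{n}(\Phi^0)^T\Phi^0 = e^{-\sigma_e^2}\tfrac{1}{n}\sum_i (r_i^0)^2/(d_i^0)^2$, bound $(d_i^0)^2$ above using the boundedness of $\mathcal{P}$ and $\mathcal{P}^0$, and bound $(r_i^0)^2$ below using the disjointness of the projected sets $\mathcal{P}_{1:2}$ and $\mathcal{P}^0_{1:2}$. Your alternative Helly--Bray route is a reasonable variant but is not needed; the primary argument matches the paper's proof step for step.
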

	
		Thus, we  define the LS estimator of the model \eqref{gg} for $z^o$ by
		\begin{equation}
			\label{unbiased_esti_z}
			\widehat{z}_n^{\rm UB} = \left((\Phi^o)^T\Phi^o\right)^{-1}(\Phi^o)^T\Gamma,
		\end{equation}
		which shares  $\sqrt{n}$-consistency given in the following proposition.
		\begin{prop}
			\label{thm_ub_z}
			Under Assumptions \ref{assum_mea_noise_3d}-\ref{assum_projection}, the LS estimator $\widehat{z}_{n}^{\rm UB}$ is $\sqrt{n}$-consistent, i.e., $\widehat{z}_{n}^{\rm UB}-z^o = O_p(1/\sqrt{n})$.
		\end{prop}

		Although the $\sqrt{n}$-consistent LS estimator $\widehat{z}_{n}^{\rm UB}$ cannot be implemented in practice due to the inaccessibility of the unavailable $r_i^o$, the first two coordinates of the source involved in $r_i^o$ can be estimated using the BELS estimator $(\widehat{p}_{1:2})_n^{\rm BE}$, as given in \eqref{biaseli_esti_3d_b}.
		Write $(\widehat{p}_{1:2})_n^{\rm BE}$ in the element-wise form $(\widehat{p}_{1:2})_n^{\rm BE}= [\widehat{x}_n^{\rm BE}, ~ \widehat{y}_n^{\rm BE}]^T$ and define
		\begin{align}
			\widehat{r}_i = \sqrt{(x_i-\widehat{x}_n^{\rm BE})^2 + (y_i-\widehat{y}_n^{\rm BE})^2}\label{hatr}
		\end{align}
		for all $i=1,...,n$. Thus, we have
		\begin{align}
			\nonumber
			\widehat{r}_i &= \sqrt{(x_i-\widehat{x}_n^{\rm BE})^2 + (y_i-\widehat{y}_n^{\rm BE})^2}\\
			&=\sqrt{(r_i^o)^2 + O_p(1/\sqrt{n})}=r_i^o + O_p(1/\sqrt{n}),\label{rr}
		\end{align}
		where the second equation holds because of Proposition \ref{prop_be_3d_b} and Assumption \ref{assum_coordinates_3d}. Let $\widehat{\Gamma}$ be the vector with its $i$-th element being $\sin(e_i)\widehat{r}_i-\cos(e_i)z_i$ for $i=1,...,n$. Then, we propose the following BELS estimator for $z^o$:
		\begin{equation}
			\label{biaseli_esti_z}
			 \!\!\!\widehat{z}_n^{\rm BE} \!= \! \left(\frac{1}{n}\Phi^T\Phi \!-\! \mathbb{V}(\sin(\varepsilon_1^e))\!\right)^{-1}\!\!\left(\frac{1}{n}\Phi\widehat{\Gamma} \!- \! \mathbb{V}(\sin(\varepsilon_1^e))\bar{z}\!\right),
		\end{equation}
		where $\mathbb{V}(\sin(\varepsilon_1^e))= (1-e^{-2\sigma_e^2})/2$ 
		and $\bar{z}=\frac1{n}\sum_{i=1}^{n}z_i$.
		\begin{thm} \label{BELS_3D_case}
			Under Assumptions \ref{assum_mea_noise_3d}-\ref{assum_projection}, the BELS estimator $\widehat{z}_n^{\rm BE}$ is $\sqrt{n}$-consistent, i.e., $\widehat{z}_n^{\rm BE} - z^o = O_p(1/\sqrt{n})$.
		\end{thm}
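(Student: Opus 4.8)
The plan is to mirror the 2-D argument behind Theorem~\ref{thm_be_2d}: view $\widehat{z}_n^{\rm BE}$ as a scalar ratio, show that the two subtracted terms in \eqref{biaseli_esti_z} exactly cancel the noise-induced offsets appearing in $\tfrac{1}{n}\Phi^T\Phi$ and $\tfrac{1}{n}\Phi^T\widehat{\Gamma}$ (using the moment identities of Lemma~\ref{lem_moment_cos_sin}), so that the numerator and the denominator each equal a strictly positive deterministic quantity plus an $O_p(1/\sqrt{n})$ fluctuation (controlled by Lemma~\ref{lem_sqrtm}), and then conclude $\widehat{z}_n^{\rm BE}-z^0=O_p(1/\sqrt{n})$ by division. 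The one genuinely new ingredient relative to the 2-D case is that $\widehat{\Gamma}$ uses the estimated $\widehat{r}_i$ from \eqref{hatr} in place of the unavailable $r_i^0$, so I would dispose of that substitution first.

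\emph{Step 1 (replacing $\widehat{r}_i$ by $r_i^0$).} By the expansion \eqref{rr} together with Proposition~\ref{prop_be_3d_b}, $\widehat{r}_i-r_i^0$ equals, uniformly in $i$, a bounded coefficient (namely $(x_i-x^0)/r_i^0$, resp. $(y_i-y^0)/r_i^0$) times the common factor $(\widehat{p}_{1:2})_n^{\rm BE}-p^0_{1:2}=O_p(1/\sqrt{n})$, plus an $O_p(1/n)$ remainder; here the disjointness of $\mathcal{P}_{1:2}$ and $\mathcal{P}^0_{1:2}$ in Assumption~\ref{assum_projection} keeps $\inf_i r_i^0>0$, so the square-root expansion is legitimate. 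Hence $\tfrac{1}{n}\Phi^T(\widehat{\Gamma}-\Gamma)=\tfrac{1}{n}\sum_{i=1}^n(-\cos e_i)\sin(e_i)(\widehat{r}_i-r_i^0)$ is an average of bounded terms carrying the same $O_p(1/\sqrt{n})$ factor, hence is $O_p(1/\sqrt{n})$. Multiplied by the (bounded) inverse of the denominator analyzed in Step~2, this lets me work with $\Gamma$ in place of $\widehat{\Gamma}$ at the cost of an additive $O_p(1/\sqrt{n})$ error.

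\emph{Step 2 (denominator and numerator with the true $r_i^0$).} Writing $e_i=e_i^0+\varepsilon_i^e$ and applying Lemma~\ref{lem_moment_cos_sin} (in particular $\mathbb{E}[\cos^2\varepsilon_i^e]$, $\mathbb{E}[\sin^2\varepsilon_i^e]=\mathbb{V}(\sin\varepsilon_1^e)$, and $\mathbb{E}[\sin\varepsilon_i^e\cos\varepsilon_i^e]=0$) yields $\mathbb{E}[\cos^2 e_i]=\mathbb{V}(\sin\varepsilon_1^e)+e^{-2\sigma_e^2}\cos^2(e_i^0)$ and $\mathbb{E}[\cos e_i\sin e_i]=e^{-2\sigma_e^2}\sin(e_i^0)\cos(e_i^0)$. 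Thus after subtracting $\mathbb{V}(\sin\varepsilon_1^e)$, the denominator $\tfrac{1}{n}\Phi^T\Phi-\mathbb{V}(\sin\varepsilon_1^e)$ has mean $e^{-2\sigma_e^2}\tfrac{1}{n}\sum_{i=1}^n\cos^2(e_i^0)$, which by Proposition~\ref{prop3} is bounded below by a positive constant, while Lemma~\ref{lem_sqrtm} bounds the deviation from the mean by $O_p(1/\sqrt{n})$; consequently the denominator is bounded away from $0$ with probability tending to one, its reciprocal is $O_p(1)$, and that reciprocal differs from $\big(e^{-2\sigma_e^2}\tfrac{1}{n}\sum\cos^2(e_i^0)\big)^{-1}$ by $O_p(1/\sqrt{n})$. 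For the numerator, expand $\tfrac{1}{n}\Phi^T\Gamma=\tfrac{1}{n}\sum_{i=1}^n\big(-\cos e_i\sin e_i\,r_i^0+\cos^2 e_i\,z_i\big)$; its mean equals $e^{-2\sigma_e^2}\tfrac{1}{n}\sum_{i=1}^n\big(-\sin(e_i^0)\cos(e_i^0)r_i^0+\cos^2(e_i^0)z_i\big)+\mathbb{V}(\sin\varepsilon_1^e)\tfrac{1}{n}\sum_{i=1}^n z_i$, and the noise-free identity $r_i^0\sin(e_i^0)=(z_i-z^0)\cos(e_i^0)$ (i.e.\ \eqref{linear_3d_e} with $\varepsilon_i^e=0$) collapses the first sum to $e^{-2\sigma_e^2}\tfrac{1}{n}\sum_{i=1}^n\cos^2(e_i^0)\,z^0$. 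Hence, after subtracting $\mathbb{V}(\sin\varepsilon_1^e)\tfrac{1}{n}\sum z_i$, the numerator equals $\big(e^{-2\sigma_e^2}\tfrac{1}{n}\sum\cos^2(e_i^0)\big)z^0+O_p(1/\sqrt{n})$ — precisely $z^0$ times the deterministic part of the denominator. Dividing and adding back the Step-1 error gives $\widehat{z}_n^{\rm BE}=z^0+O_p(1/\sqrt{n})$.

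The main obstacle is the bookkeeping in Step 1: one must verify that the $\widehat{r}_i$-for-$r_i^0$ substitution, once averaged against the bounded regressor $-\cos e_i\sin e_i$, contributes only $O_p(1/\sqrt{n})$ and not a mere $o_p(1)$ term. This is where the linearization \eqref{rr}, the $\sqrt{n}$-consistency of the azimuth BELS estimator $(\widehat{p}_{1:2})_n^{\rm BE}$ from Proposition~\ref{prop_be_3d_b}, and the separation $\mathcal{P}_{1:2}\cap\mathcal{P}^0_{1:2}=\emptyset$ (which yields $\inf_i r_i^0>0$ and hence a valid uniform square-root expansion) all enter; the remainder follows the by-now-familiar pattern of ``deterministic LLN limit $+$ $O_p(1/\sqrt{n})$ fluctuation, with the subtracted bias term exactly cancelling the noise-induced offset supplied by Lemma~\ref{lem_moment_cos_sin}.''
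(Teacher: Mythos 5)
Your proposal is correct and follows essentially the same route as the paper's proof: the paper organizes the computation through the decomposition $\Phi=\Phi^0+\Delta\Phi$ and identifies the bias from the quadratic term $(\Delta\Phi)^T\Delta\Phi/n$, which is algebraically the same as your direct evaluation of $\mathbb{E}[\cos^2 e_i]$ and $\mathbb{E}[\cos e_i\sin e_i]$ followed by the collapse via the noise-free identity $r_i^0\sin(e_i^0)=(z_i-z^0)\cos(e_i^0)$. Your handling of the $\widehat{r}_i$-for-$r_i^0$ substitution (uniform $O_p(1/\sqrt{n})$ via \eqref{rr}, Proposition~\ref{prop_be_3d_b}, and the separation of $\mathcal{P}_{1:2}$ and $\mathcal{P}^0_{1:2}$), the use of Lemma~\ref{lem_sqrtm} for the fluctuations, and the appeal to Proposition~\ref{prop3} for the positivity of the denominator all match the paper's argument.
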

		
		Thus, we can define the BELS estimator of the true source $p^o$ by
		\begin{equation}
			\label{bels_3d}
			\widehat{p}_n^{\rm BE} = \begin{bmatrix}
				(\widehat{p}_{1:2})_n^{\rm BE}	\\
				\widehat{z}_n^{\rm BE}	
			\end{bmatrix},
		\end{equation}
		where $(\widehat{p}_{1:2})_n^{\rm BE}$ and $\widehat{z}_n^{\rm BE}$ are given in \eqref{biaseli_esti_3d_b} and \eqref{biaseli_esti_z}, respectively. Combining with Proposition \ref{prop_be_3d_b} and Theorem \ref{BELS_3D_case}, we obtain the $\sqrt{n}$-consistency of $\widehat{p}_n^{\rm BE}$, which is stated in the following theorem.
		\begin{thm}
		Under Assumptions \ref{assum_mea_noise_3d}-\ref{assum_projection}, the BELS estimator $\widehat{p}_n^{\rm BE}$  given by \eqref{bels_3d} for the true source $p^o$
			is $\sqrt{n}$-consistent, i.e., $\widehat{p}_n^{\rm BE} - p^o = O_p(1/\sqrt{n})$.
		\end{thm}
		When the noise variances $\sigma_a^2$ and $\sigma_e^2$ are unknown, the BELS estimator $\widehat{p}_{n}^{\rm BE}$ in \eqref{bels_3d} remains $\sqrt{n}$-consistent if $\mathbb{V}(\sin(\varepsilon_1^a)) $ and $\mathbb{V}(\sin(\varepsilon_1^e))$ are replaced by their respective $\sqrt{n}$-consistent estimators. The $\sqrt{n}$-consistent estimator for $\mathbb{V}(\sin(\varepsilon_1^a))$ can refer to \eqref{sina} for the 2-D scenario. For completing the BELS estimator \eqref{bels_3d}, we aim to derive a $\sqrt{n}$-consistent estimator of $\mathbb{V}(\sin(\varepsilon_1^e))$ here, which is similar to that given for $\mathbb{V}(\sin(\varepsilon_1^a))$ in the 2-D scenario following the idea. Denote
		\begin{equation}
			\label{RU}
		\!\!	R_n\! \eq \frac{1}{n}\!\begin{bmatrix}
				\Phi^T \\
				\widehat{\Gamma}
			\end{bmatrix}\!\!
			\begin{bmatrix}
				\Phi,  \widehat{\Gamma}
			\end{bmatrix}, ~
			 U_n \eq \begin{bmatrix}
				1 & \bar{z} \\
				\bar{z} & \frac{1}{n}\sum\limits_{i=1}^n \big(z_i^2+(\widehat{r}_i)^2\big)
			\end{bmatrix}.
		\end{equation}
		Thus, we estimate $\mathbb{V}(\sin(\varepsilon_1^e)) $ by
		\begin{equation}
			\label{W}
			\widehat{v}^e_n \eq \frac{1}{\lambda_{\rm max}(R_n ^{-1}U_n)},
		\end{equation}
		which is a $\sqrt{n}$-consistent estimator of $\mathbb{V}(\sin(\varepsilon_1^e))$ given in
		the following theorem.
		\begin{thm}
			\label{thm_var_3d}
			Under Assumptions \ref{assum_mea_noise_3d}-\ref{assum_projection}, $\widehat{v}^e_n$ is a $\sqrt{n}$-consistent estimator of $\mathbb{V}(\sin(\varepsilon_1^e))$, i.e., $\widehat{v}^e_n - \mathbb{V}(\sin(\varepsilon_1^e)) = O_p(1/\sqrt{n})$.
		\end{thm}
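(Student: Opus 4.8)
The plan is to mirror the argument used for Theorem~\ref{thm_var_2d} in the 2-D scenario (whose proof is in Appendix~\ref{proof_thm_var_2d}), which in turn follows \cite[Proposition 2]{Stoica1982}. The key structural fact is that the true parameter $z^0$ together with $\mathbb{V}(\sin(\varepsilon_1^e))$ satisfies an exact asymptotic identity relating $R_n$ and $U_n$. First I would show that, using \eqref{rr}, the matrix $\widehat{\Gamma}$ differs from its idealized counterpart $\Gamma$ (with $r_i^0$ in place of $\widehat r_i$) only by an $O_p(1/\sqrt n)$ perturbation entrywise, so that both $R_n$ and $U_n$ agree with the versions built from $r_i^0$ up to $O_p(1/\sqrt n)$; this reduces everything to the case where $r_i^0$ is known. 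Then, writing $\Gamma = \Phi^0 z^0 + \omega$ via \eqref{gg} and expanding $\Phi = \Phi^0 e^{\sigma_e^2/2}\,(\text{diagonal of }\cos\varepsilon_i^e/e^{-\sigma_e^2/2}) + (\text{noise terms})$, I would compute $\lim_{n\to\infty} R_n$ and $\lim_{n\to\infty} U_n$ using Lemma~\ref{lem_moment_cos_sin} (moments of $\cos\varepsilon_i^e$, $\sin\varepsilon_i^e$), Assumptions~\ref{assum_mea_noise_3d}--\ref{assum_projection}, and the strong law, exactly as in Lemma~\ref{lem_converge_M} and Proposition~\ref{prop3}.

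The core algebraic step is to verify that the limiting matrices $\bar R \eq \lim R_n$ and $\bar U \eq \lim U_n$ satisfy
\begin{equation*}
	\bar R = \bar U' + \mathbb{V}(\sin(\varepsilon_1^e))\,\bar U
\end{equation*}
for an appropriate positive semidefinite $\bar U'$ of rank one whose range is spanned by $[\,1,\ z^0\,]^T$ scaled suitably — concretely, $\bar R$ minus $\mathbb{V}(\sin(\varepsilon_1^e))\bar U$ is a rank-deficient matrix annihilating the vector $[\,{-z^0},\ 1\,]^T$ (or its analogue), because the noise-free rows $[-\cos(e_i^0),\ \sin(e_i^0)r_i^0 - \cos(e_i^0)z_i]$ are parallel to $[-\cos(e_i^0),\ -\cos(e_i^0)z^0]$ up to the quantity $\sin(e_i^0)r_i^0$ which equals $\cos(e_i^0)z^0$ plus $d_i^0\cdot 0$ in the noise-free limit. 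This is exactly the structure that makes $\det(\bar R - v\,\bar U) = 0$ have $v = \mathbb{V}(\sin(\varepsilon_1^e))$ as its relevant root, so that $1/\lambda_{\max}(\bar R^{-1}\bar U) = \mathbb{V}(\sin(\varepsilon_1^e))$; I would need to check that this is the \emph{largest} eigenvalue of $\bar R^{-1}\bar U$, which follows because the complementary eigenvalue is strictly smaller thanks to the strict positivity established in Propositions~\ref{prop3} and~\ref{prop_noise_free_gram}-type reasoning (non-collinearity via Assumption~\ref{assum_projection}(iii)).

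Finally, to upgrade convergence to the $\sqrt n$ rate, I would apply a standard perturbation bound for eigenvalues: since $R_n = \bar R + O_p(1/\sqrt n)$ and $U_n = \bar U + O_p(1/\sqrt n)$ (the $O_p(1/\sqrt n)$ fluctuations coming from the CLT-type control on $\frac1n\sum$ of the zero-mean noise terms, exactly the $O_p(1/\sqrt n)$ estimates furnished by Lemma~\ref{lem_sqrtm}, plus the $O_p(1/\sqrt n)$ from $\widehat r_i - r_i^0$ in \eqref{rr}), and since $\bar R$ is nonsingular and $\lambda_{\max}(\bar R^{-1}\bar U)$ is a simple eigenvalue, the map $(R,U)\mapsto 1/\lambda_{\max}(R^{-1}U)$ is locally Lipschitz near $(\bar R,\bar U)$, giving $\widehat v_n^e - \mathbb{V}(\sin(\varepsilon_1^e)) = O_p(1/\sqrt n)$. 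The main obstacle I anticipate is the bookkeeping in the second step: carefully isolating the noise-free part of $R_n$ so that the rank-deficiency identity $\bar R - \mathbb{V}(\sin(\varepsilon_1^e))\bar U \succeq 0$ with the correct null vector is transparent, and confirming the eigenvalue ordering so that it is genuinely $\lambda_{\max}$ (not the other eigenvalue) whose reciprocal recovers the variance — the argument in \cite{Stoica1982} handles an analogous ordering and should transfer, but the presence of the data-dependent $\widehat r_i$ inside both $R_n$ and $U_n$ requires the extra reduction in the first step to be done cleanly.
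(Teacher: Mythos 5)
Your proposal is correct and follows essentially the same route as the paper's proof: reduce $\widehat{\Gamma}$ to its $r_i^0$-based counterpart via the $O_p(1/\sqrt{n})$ bound \eqref{rr}, decompose $R_n$ into a rank-one positive semidefinite noise-free part (with range spanned by $[1,\ z^0]^T$) plus $\mathbb{V}(\sin(\varepsilon_1^e))U_n$ plus $O_p(1/\sqrt{n})$, invoke the Stoica eigenvalue lemma (Lemma~\ref{lem_eigen}) to identify $\lambda_{\max}(R_\infty^{-1}U_\infty) = 1/\mathbb{V}(\sin(\varepsilon_1^e))$, and conclude by continuity of the maximum eigenvalue under the $O_p(1/\sqrt{n})$ perturbations. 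The eigenvalue-ordering concern you flag is exactly what Lemma~\ref{lem_eigen} disposes of, so no additional argument is needed there.
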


		We summarize the estimation procedure of the variance of the sine of the elevation angle noise in Algorithm \ref{algo3}.
		
		\begin{algorithm}[!h]
			\caption{The estimation algorithm of the variance of the sine of the elevation angle noise }  
			\label{algo3}
			\begin{algorithmic}[1]
				\Require Sensor locations $\{p_i\}_{i=1}^n$ and AOA  measurements $\{a_i,e_i\}_{i=1}^n$.
				\State Calculate the BELS estimate $(\widehat{p}_{1:2})_n^{\rm BE}$ of the first two coordinates of the source location via Algorithm \ref{algo2};
				\State Calculate $\widehat{\Gamma}$ according to $(\widehat{p}_{1:2})_n^{\rm BE}$ and \eqref{hatr};
				\State Calculate $R_n$ and $U_n$ according to \eqref{RU};
				\State Calculate the maximum eigenvalue of $R_n^{-1}U_n$;
				\Ensure \!The estimate for $\mathbb{V}(\sin(\varepsilon_1^e))$: $ \widehat{v}^e_n \!= 1/\lambda_{\rm max}(R_n^{-1}\!U_n)$.
			\end{algorithmic}
		\end{algorithm}

		\subsubsection{Gauss-Newton refinement}
		
		In this subsubsection, we apply the GN iterations to the $\sqrt{n}$-consistent initial estimator derived above, yielding an asymptotically efficient estimator for the 3-D AOA-based localization problem.
		
		
		Let 
			$\widehat{p}_n^{\rm BE}$ be the consistent estimator \eqref{bels_3d} for $p^o$ derived in the first step.
		Thus, the one-step GN iteration    is
\begin{subequations}\label{gn_3d}
\begin{align} 
&\widehat{p}_n^{\rm GN} = \widehat{p}_n^{\rm BE} +  \left(J_n^T J_n\right)^{-1}J_n^T(\rho - f),\\
&J_n = \left[\nabla f_1(\widehat{p}_n^{\rm BE})^T,  \cdots , \nabla f_n(\widehat{p}_n^{\rm BE})^T\right]^T,\\
        &f = [f_1(\widehat{p}_n^{\rm BE})^T,  \cdots , f_n(\widehat{p}_n^{\rm BE})^T]^T, \\
    &\rho = [a_1, e_1, \cdots, a_n, e_n]^T,
\end{align}
where $f_i(\cdot)$ and $\nabla f_i(\cdot)$ are defined by \eqref{fip_3d}  and \eqref{nablafip_3d}, respectively. 
\end{subequations}
		Similarly, by \cite{Mu2017}, \cite{Lehmann1998}, the one-step GN iteration estimator $\widehat{p}_n^{\rm GN}$ given by \eqref{gn_3d} is   asymptotically efficient for the 3-D AOA-based localization problem  \eqref{mea_3d}.

		\subsubsection{Complete procedure of asymptotically efficient two-step estimator}
				The whole procedure of the proposed two-step estimator in the 3-D scenario is summarized in Algorithm \ref{algo4}. 
		\begin{algorithm}[!h]
			\caption{The estimation algorithm for consistent and asymptotically efficient two-step estimator   using AOA  measurements (3-D scenario)}  
			\label{algo4}
			\begin{algorithmic}[1]
				\Require Sensor locations $\{p_i\}_{i=1}^n$, AOA  measurements $\{a_i,e_i\}_{i=1}^n$, the azimuth angle noise variance $\sigma_a^2$ (if available), and the elevation angle noise variance $\sigma_e^2$ (if available).
				\State Apply Algorithm \ref{algo2} to obtain the BELS estimate $(\widehat{p}_{1:2})_n^{\rm BE} = [\widehat{x}_n^{\rm BE}, \widehat{y}_n^{\rm BE}]^T$ of the first two coordinates of the source location;
				\If {$\sigma_e^2$ is available}
				\State Calculate the variance of the sine of noise by $\sigma_e^2$;
				\State Calculate the BELS estimate $\widehat{z}_n^{\rm BE}$ of the third coordinate of source location according to \eqref{biaseli_esti_z};
				\Else
				\State Apply Algorithm \ref{algo3} to obtain the estimate $\widehat{v}^e_n$ for the variance of the sine of the elevation angle noise;
				\State Calculate the BELS estimate $\widehat{z}_n^{\rm BE}$ of the third coordinate of source location according to \eqref{biaseli_esti_z} with $\mathbb{V}(\sin(\varepsilon_1^e))$ being replaced by $\widehat{v}^e_n$;
				\EndIf
				\State Set $\widehat{p}^{\rm BE}_n = \left[\widehat{x}_n^{\rm BE},~ \widehat{y}_n^{\rm BE}, ~ \widehat{z}_n^{\rm BE}\right]^T$;
				\State Run   one-step GN iteration \eqref{gn_3d} and obtain $\widehat{p}_n^{\rm GN}$;
				\Ensure The source location estimate $\widehat{p}_n^{\rm GN}$.
			\end{algorithmic}
		\end{algorithm}
		
		In Algorithm \ref{algo3}, Line 1 has a time complexity of $\mathcal{O}(n)$, based on the time complexity of Algorithm \ref{algo2}. The remaining steps in Algorithm \ref{algo3} are similar to those in Algorithm \ref{algo1}, which also has a time complexity of $\mathcal{O}(n)$. Therefore, the overall time complexity of Algorithm \ref{algo3} is $\mathcal{O}(n)$. Similarly, in Algorithm \ref{algo4}, Line 1 invokes Algorithm \ref{algo2}, which has a time complexity of $\mathcal{O}(n)$. The rest of Algorithm \ref{algo4} follows the same structure as Algorithm \ref{algo2}, also requiring $\mathcal{O}(n)$ time. Hence, the overall time complexity of Algorithm \ref{algo4} is $\mathcal{O}(n)$.
				\begin{figure*}[!t]
			\centering
			\begin{subfigure}[b]{.32\textwidth}
			\centering
			 \includegraphics[width=\textwidth]{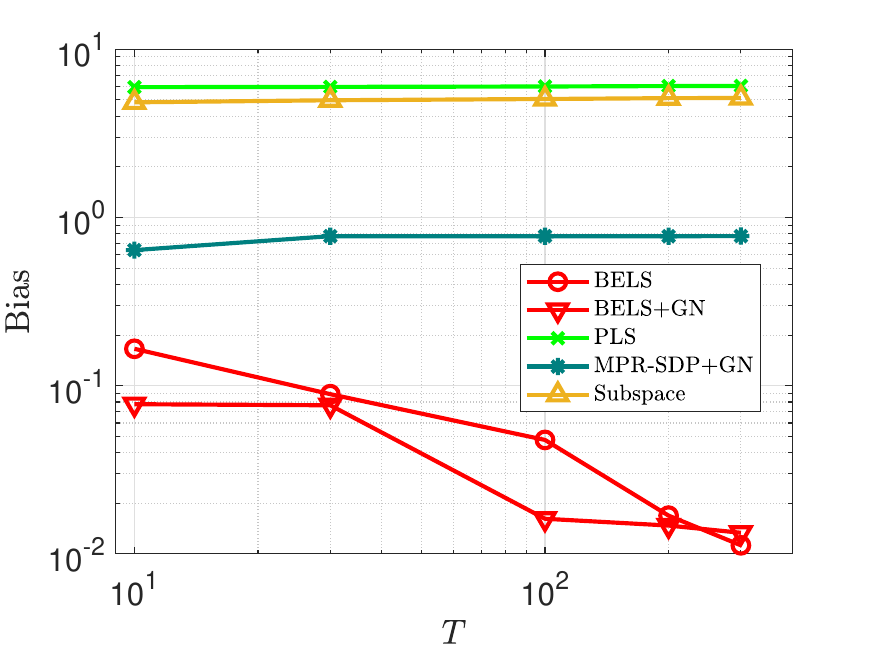}
			\caption{Biases for varying numbers of measurements}
 			\label{fig:2d_bias_varyingT}
			\end{subfigure}
			\begin{subfigure}[b]{.32\textwidth}
				\centering				\includegraphics[width=\textwidth]{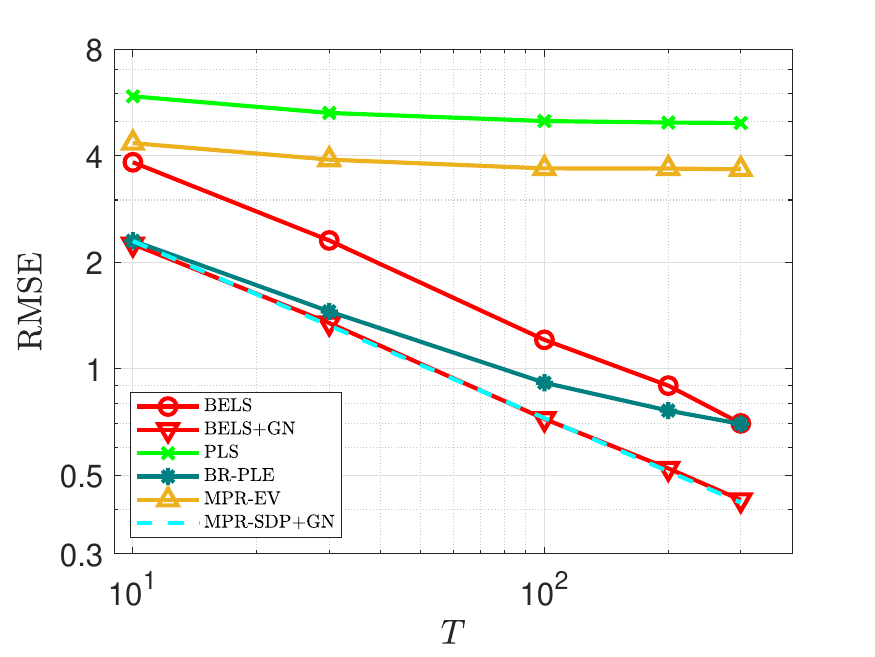}
				\caption{RMSEs for varying numbers of measurements}
				\label{fig:2d_rmse_varyingT}
			\end{subfigure}
			\begin{subfigure}[b]{.32\textwidth}
				\centering
	\includegraphics[width=\textwidth]{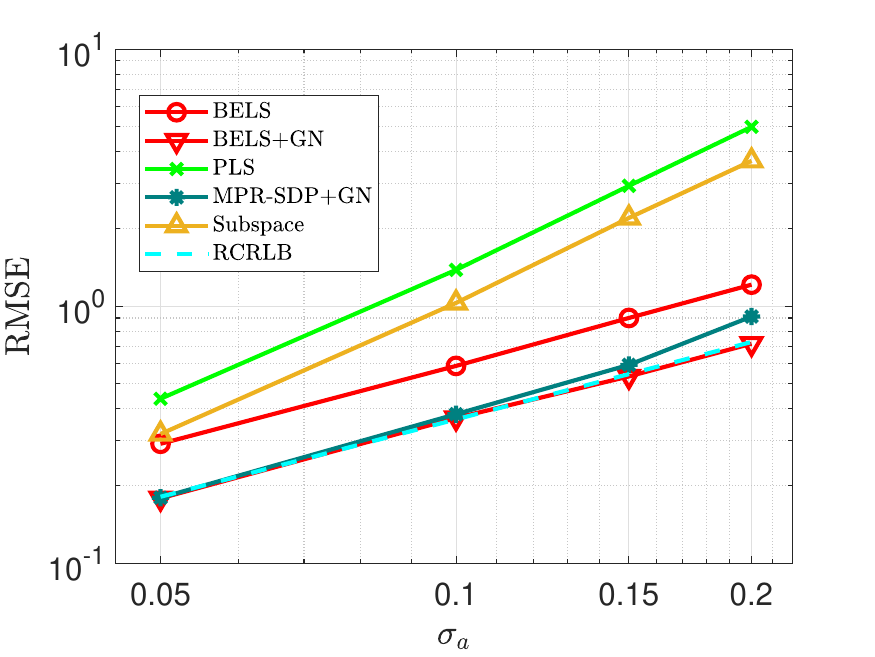}
				\caption{RMSEs for varying noise intensities}
				\label{fig:2d_rmse_varyingsigma}
			\end{subfigure}
			\caption{2-D: biases and RMSEs of the estimators for fixed sensors.}
		\end{figure*}
		\section{Simulations}
		\label{sec4}
In this section, we perform Monte Carlo simulations to validate the theoretical results of the proposed asymptotically efficient two-step estimator for AOA-based localization problems. Simulations are carried out separately for both 2-D and 3-D scenarios. 

In each scenario, we conduct \( N \) Monte Carlo trials using independent realizations of the measurement noise. For a given estimator, let \( \widehat{p}_{n,j} \) denote the estimate obtained in the \( j \)-th trial.
 Its performance is assessed in terms of the bias and root mean squared error (RMSE), defined as \cite{Wang2023, Zeng2024}:
\begingroup
\allowdisplaybreaks
\begin{align*}
    &\mathrm{Bias}(\widehat{p}_n) = \sum_{i=1}^m \big|\big[\Delta(\widehat{p}_n)\big]_i\big|, \quad \Delta(\widehat{p}_n) =   \frac{1}{N} \sum_{j=1}^N \widehat{p}_{n,j} - p^o  , \\
    &\mathrm{RMSE}(\widehat{p}_n) = \sqrt{\frac{1}{N} \sum_{j=1}^N \| \widehat{p}_{n,j} - p^o \|^2},
\end{align*}
\endgroup
where \( p^o \) denotes the true source, and \( m \) is the dimension of the source coordinates, with \( m = 2 \) for the 2-D case and \( m = 3 \) for the 3-D case.
We employ the root Cramér–Rao lower bound (RCRLB) as a performance benchmark to assess whether the proposed estimators are asymptotically efficient.

Our proposed estimators are denoted as follows:  
\begin{itemize}
    \item When the noise variance is known, we refer to the first-step estimator as BELS and its refined version (after the second step) as BELS+GN.
    
    \item When the variance is unknown, we use the estimated variance \(\widehat{v}_n^a\) for the 2-D case (\(\widehat{v}_n^a\) and \(\widehat{v}_n^e\) for the 3-D case) and label the corresponding estimators as  BELS\((\widehat{v}_n^a)\) and BELS\((\widehat{v}_n^a)\)+GN for the 2-D case (BELS\((\widehat{v}_n^a, \widehat{v}_n^e)\) and BELS\((\widehat{v}_n^a, \widehat{v}_n^e)\)+GN for the 3-D case), respectively.
\end{itemize}
		In 2-D scenario, we compare our methods with the following existing estimators:
		\begin{enumerate}[(i)]
			\item PLS: the LS estimator \eqref{p_biased} that corresponds to the linear regression model \eqref{eq8} \cite{Lingren1978,Nardone1984};
			\item MPR-SDP+GN: the 2-D version of the modified polar representation, which is produced by the GN iteration  initialized with the SDP-based method \cite{Wang2018a};
			\item Subspace: a relaxation-based linear estimator reducing errors by making use of all AOA  and inter-sensor angular geometric information \cite{Luo2019}.
		\end{enumerate}
		In 3-D scenario, we compare our methods with the following existing estimators:
		\begin{enumerate}[(i)]
			\item PLS: the combination of the LS estimators \eqref{p_biased} and \eqref{z_biased}, which is generalized from the 2-D version;
			
			\item BR-PLE: the bias reduced solution to a CWLS problem that approximates the sine and cosine functions using their first-order Taylor expansions and then expands the parameter space by one dimension while imposing a norm constraint \cite{Wang2015};
			
			\item MPR-EV: the modified polar representation given by computing an eigenvector with bias reduction \cite{Sun2020};
			
			\item MPR-SDP+GN: the modified polar representation  produced by the GN iteration,  initialized using the SDP-based method \cite{Wang2018}.
		\end{enumerate}
		 All estimators are implemented in MATLAB and executed on an
AMD EPYC 7543 32-Core Processor.

		\subsection{2-D scenario: fixed sensors}
		\label{subsec:A}
		We deploy 10 fixed sensors at the following 2-D coordinates: $ p_1 = [0,100]^T,$ $p_2=[0,50]^T,$ $p_3=[50,50]^T,$ $p_4=[50,0]^T,$ $p_5=[50,-50]^T, $ $p_6 = [0,-50]^T,$ $p_7=[0,-100]^T,$ $p_8=[-50,-50]^T,$ $p_9=[-50,0]^T,$ $p_{10}=[-50,50]^T$, and  place the true source at $p^o=[60,10]^T$.
		Each sensor collects \(T\)  i.i.d. AOA  measurements, resulting in a total of \(n = 10T\) i.i.d. observations across all 10 sensors. As \(T\) increases, the growing number of measurements allows the asymptotic properties of the estimators to emerge clearly.  
This setup is statistically equivalent to deploying \(T\) identical, co-located sensors at each of the 10 fixed positions. As \(T \to \infty\), the empirical distribution of the sensor locations converges almost surely to a discrete probability distribution \(\mu\) defined by  $\mu(p) = 1/10$ for $p$ at one of the 10 placement sites, and $\mu(p)=0$, otherwise.
Consequently, the support of \(\mu\) consists precisely of these 10 points, and they are not collinear (as can be verified from their coordinates), and hence Assumptions \ref{assum_coordinates_2d} and \ref{assum_colinearity} hold.

We first examine the bias and RMSE of each estimator as functions of \(T\), the number of i.i.d. AOA  measurements per sensor. The angular measurement noise is a zero-mean Gaussian with standard deviation \(\sigma_a = 0.2\) rad. We evaluate performance across \(T \in \{10,\, 30,\, 100,\, 200,\, 300\}\), conducting 1,000 independent Monte Carlo trials for each \(T\) to ensure statistical reliability.

		
		 		\begin{figure}[!t]
		\centering
			\begin{subfigure}[b]{.43\textwidth}
				\centering				\includegraphics[width=\textwidth]{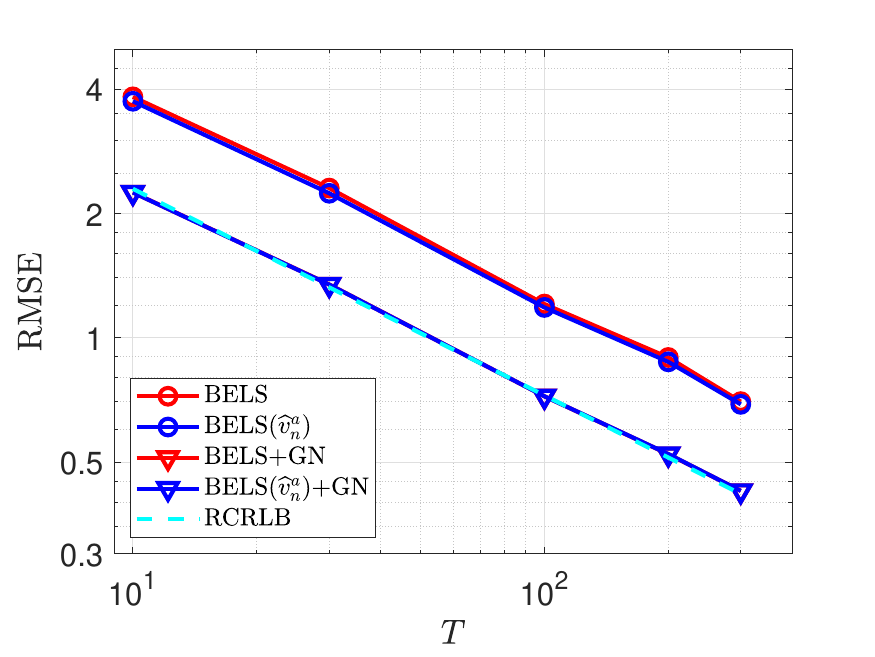}
				\caption{Varying numbers of measurements.}
		\label{fig:2d_compareVar_varyingT}
			\end{subfigure}
			\begin{subfigure}[b]{.43\textwidth}
				\centering
		\includegraphics[width=\textwidth]{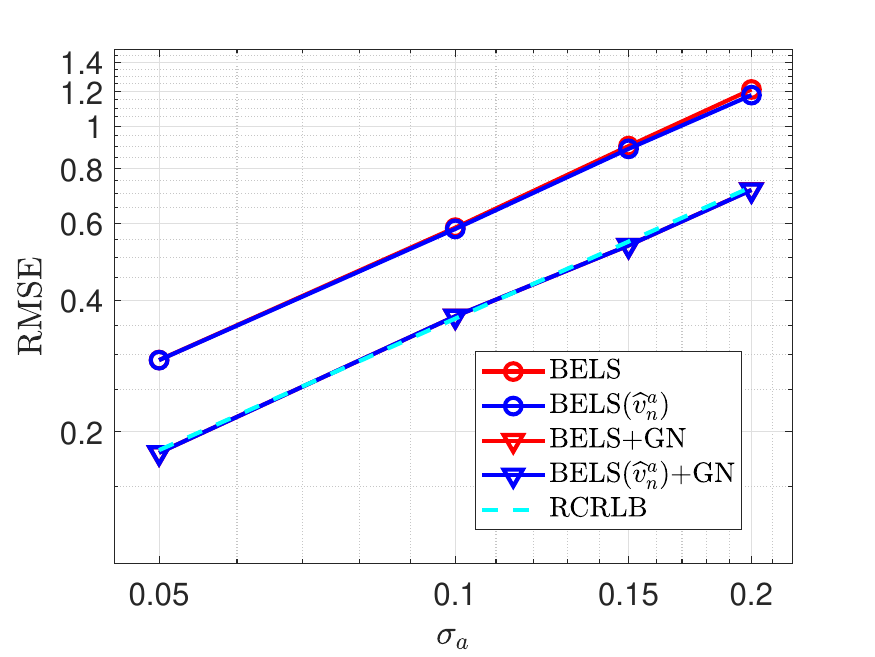}
				\caption{Varying noise intensities.}
			\label{fig:2d_compareVar_varyingsigma}
			\end{subfigure}
			\caption{2-D: RMSE comparison of  BELS and BELS+GN estimators between the true variance of the sine of the noise and the estimated one.}
			\label{fig:2d_compareVar}
		\end{figure}
		
						\begin{figure}[!h]
		\centering
			\begin{subfigure}[b]{.43\textwidth}
				\centering				\includegraphics[width=\textwidth]{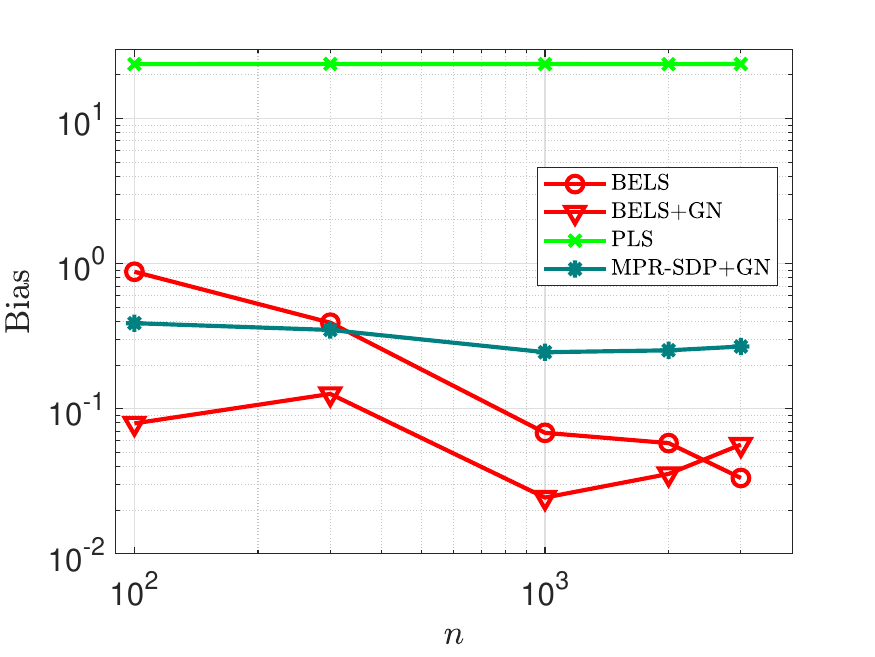}
				\caption{Biases for varying numbers of measurements.}
				\label{fig:2d_bias_random}
			\end{subfigure}
			\begin{subfigure}[b]{.43\textwidth}
				\centering
			\includegraphics[width=\textwidth]{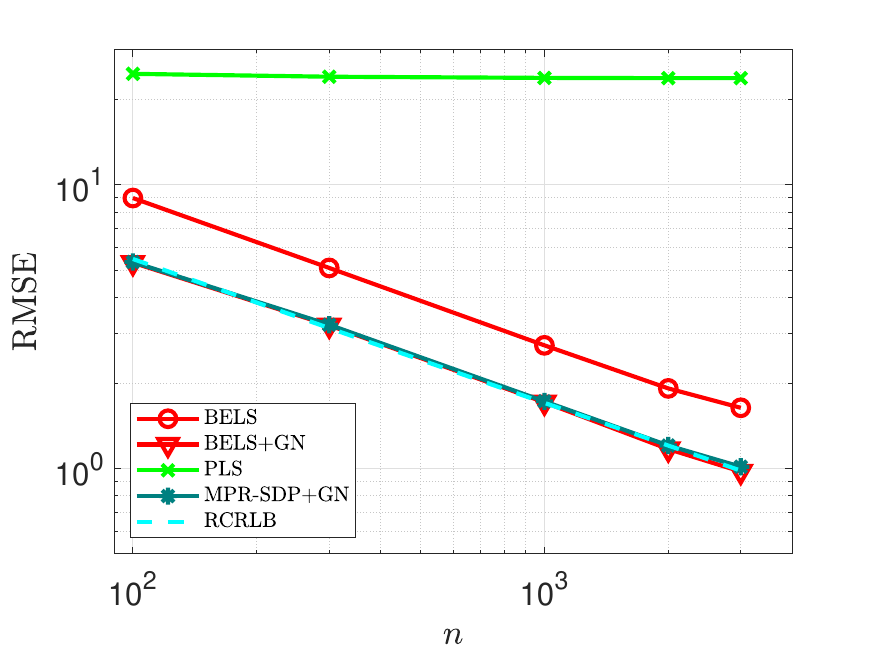}
				\caption{RMSEs for varying numbers of measurements.}
				\label{fig:2d_rmse_random}
			\end{subfigure}
			\caption{2-D: Biases and RMSEs of the estimators for random sensors.}
			\label{fig:2d_random}
		\end{figure}
		
Fig.~\ref{fig:2d_bias_varyingT} displays the biases of all estimators  as a function of \(T\). As \(T\) increases, the biases of the BELS and BELS+GN estimators decay toward zero subject to minor fluctuations, confirming their asymptotic unbiasedness. In contrast, all the PLS, Subspace, and  MPR-SDP+GN   suffer from a substantial and nonvanishing bias, indicating their asymptotic biasedness.


Fig.~\ref{fig:2d_rmse_varyingT} presents the RMSEs of all  estimators as functions of  \(T\). The proposed BELS+GN estimator achieves asymptotic efficiency, with its RMSE approaching the  RCRLB as \(T\) increases. The MPR-SDP+GN estimator exhibits comparable RMSE to BELS+GN for smaller \(T=10\), but its RMSE deviates from the  RCRLB for   larger \(T=30,100,200,300\), confirming its asymptotic inefficiency. 
Meanwhile, both the PLS and Subspace estimators exhibit significantly higher RMSEs than  the RCRLB due to their nonvanishing bias.

Moreover, we examine the  RMSEs of all estimators under varying noise intensities. Fig. \ref{fig:2d_rmse_varyingsigma} presents the RMSEs of all estimators under varying noise intensities $\sigma_a = 0.05, 0.1, 0.15$ and $0.2$ rad with  $T = 100$, where each point is the average of 1000 Monte-Carlo runs. 
In low-noise regions, all estimators exhibit small RMSEs, with both BELS+GN and MPR-SDP+GN attaining the RCRLB. As noise increases, only BELS+GN maintains RCRLB attainment, while other estimators show significantly higher RMSEs.
		
		\begin{table}[ht]
			\centering
			\caption{RMSEs of the estimates for the variance of the sine of noises}
			\resizebox{0.65\columnwidth}{!}{
				\begin{tabular}{cccccc}
					\hline
					\hline
					$n=100$  & $n=300$  & $n=1000$  & $n=2000$ & $n=3000$  \\ \hline
					0.00610 & 0.00339 & 0.00199 & 0.00139 & 0.00115   \\ \hline
			\end{tabular}}
			\label{table:2d_rmse_var}
		\end{table}

Finally, we evaluate the performance of the BELS\((\widehat{v}_n^a)\) and BELS\((\widehat{v}_n^a)\)+GN estimators in the case of unknown noise variance.
Table \ref{table:2d_rmse_var} confirms the $\sqrt{n}$-consistency of the proposed estimator for the variance of the sine of noises under $\sigma_a = 0.2$ rad.
Fig.~\ref{fig:2d_compareVar_varyingT} shows that BELS\((\widehat{v}_n^a)\) and BELS\((\widehat{v}_n^a)\)+GN achieve RMSEs virtually indistinguishable from those of BELS and BELS+GN across varying sample sizes.
Fig.~\ref{fig:2d_compareVar_varyingsigma} confirms this equivalence holds across a range of noise levels (\(\sigma_a = 0.05\)–\(0.2\) rad).  
Collectively, these results verify that BELS\((\widehat{v}_n^a)\) is \(\sqrt{n}\)-consistent, and BELS\((\widehat{v}_n^a)\)+GN is asymptotically efficient, achieving the RCRLB without requiring prior knowledge of the noise variance.

		\subsection{2-D scenario: random sensors}
This subsection considers a random sensor deployment scenario, in which the sensors are independently and uniformly distributed on a circle of radius 100 centered at the origin. Specifically, the location of the \(i\)-th sensor is generated as  
$
p_i = \big[100\cos(\beta_i),~ 100\sin(\beta_i)\big]^T$
 with  $\beta_i$ uniformly drawn from the uniform distribution $\mathcal{U}[0, 2\pi)$.
 The true source is located at $p^o=[150,0]^T$.
 The angular measurement noise is a zero-mean Gaussian with standard deviation \(\sigma_a = 0.2\) rad.
The number of sensors $n$ varies as $100,300,1000,2000,3000$ and each sensor collects one observation.
Thus,  Assumptions \ref{assum_coordinates_2d}–\ref{assum_colinearity} are satisfied under this configuration.

		\begin{figure*}[!t]
			\centering
			\begin{subfigure}[b]{.32\textwidth}
			    			\centering
			\includegraphics[width=\textwidth]{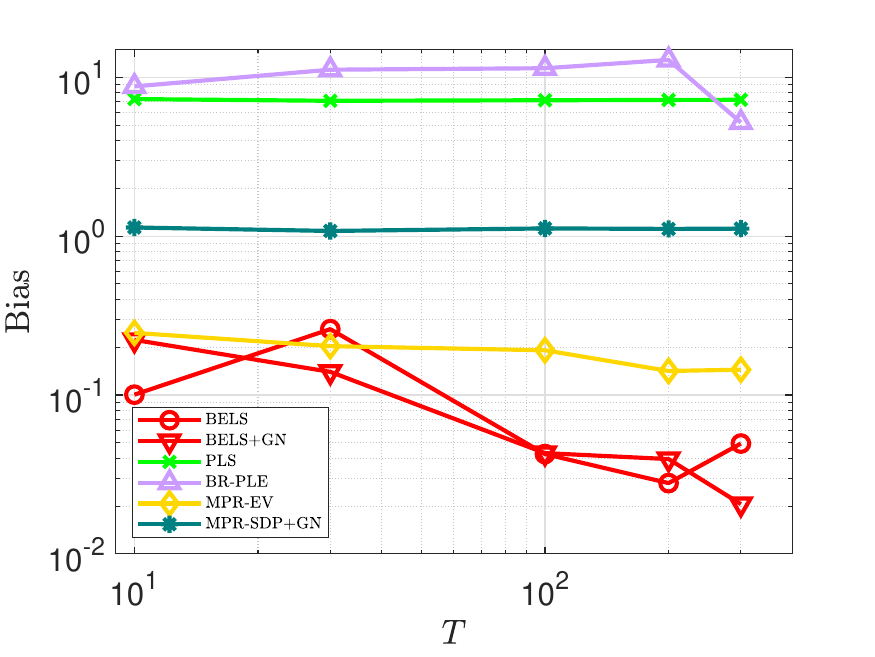}
			\caption{Biases for varying numbers of
measurement}
			\label{fig:3d_bias_varyingT}
			\end{subfigure}
			\begin{subfigure}[b]{.32\textwidth}
				\centering				\includegraphics[width=\textwidth]{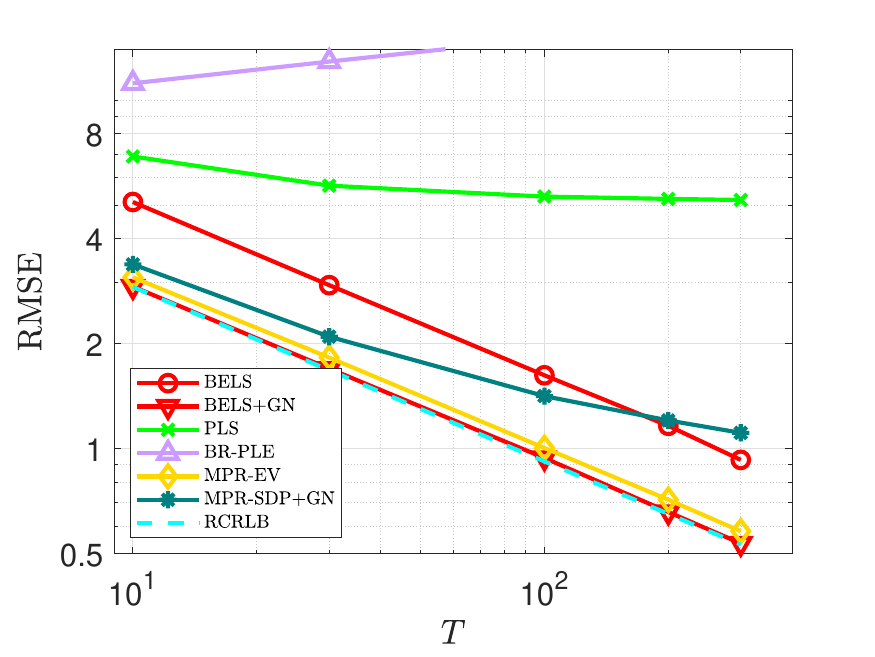}
				\caption{RMSEs for varying numbers of measurements}
				\label{fig:3d_rmse_varyingT}
			\end{subfigure}
			\begin{subfigure}[b]{.32\textwidth}
				\centering
			\includegraphics[width=\textwidth]{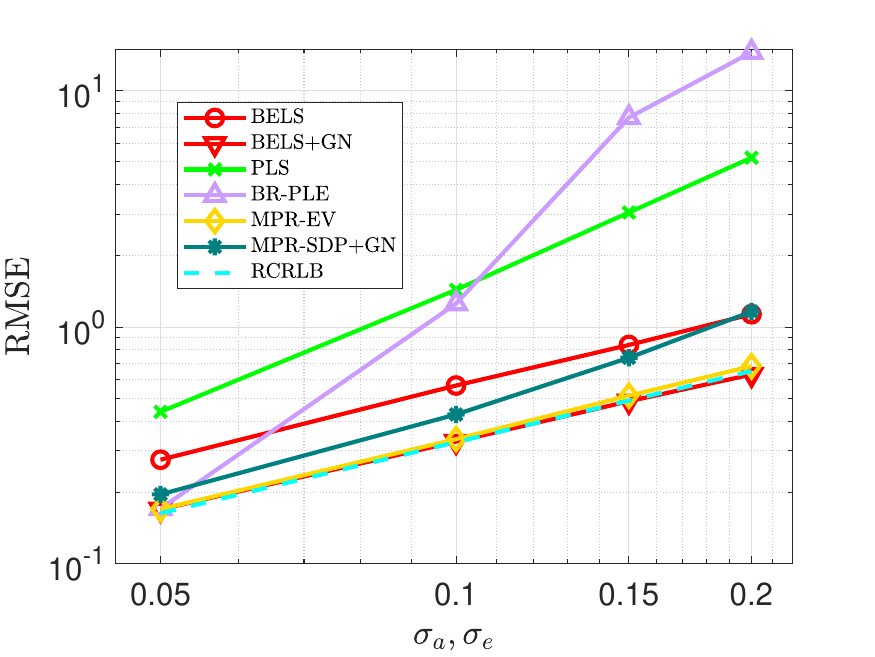}
				\caption{RMSEs for varying noise intensities}
				\label{fig:3d_rmse_varyingsigma}
			\end{subfigure}
			\caption{3-D: biases and RMSEs of the estimators for fixed sensors.}
		\end{figure*}
We evaluate the bias and RMSE of all estimators as a function of the total number of measurements \(n\), averaging over 1,000 Monte Carlo runs. The Subspace estimator is excluded from this simulation due to its high computational complexity in large-\(n\) regimes.
Fig.~\ref{fig:2d_bias_random} shows that both BELS and BELS+GN have a smaller bias than MPR-SDP+GN except for BELS with $n=100,300$ under random sensor deployment. While PLS suffers from a large nonvanishing bias.  
Fig.~\ref{fig:2d_rmse_random} further demonstrates that BELS+GN achieves the RCRLB across all tested \(n\), verifying its asymptotic efficiency. 
MPR-SDP+GN approaches the RCRLB at moderate \(n=100,300,1000\) and exhibits a slightly deviation from the RCRLB  for larger \(n=2000,3000\), reflecting its possible bias and sensitivity to initialization. In contrast, PLS yields significantly higher RMSE than the RCRLB due to its nonvanishing bias, underscoring its inefficiency.

		\subsection{3-D scenario: fixed and noncoplanar sensors}
		We deploy 10 fixed sensors at the following 3-D coordinates:
$p_1 = [50,50,50]^T,$ $p_2=[50,0,50]^T,$
$p_3=[50,50,-50]^T,$
$p_4=[50,100,0]^T,$
$p_5=[50,-50,50]^T,$
$p_6 = [-50,0,-50]^T,$
$p_7=[-50,-50,50]^T,$
$p_8=[-50,-50,-50]^T,$
$p_9=[-50,-100,0]^T,$
$p_{10}=[-50,50,-50]^T$,
		and place the true  source at $p^o=[60,10,10]^T$.
		Similar to the 2-D scenario in Section \ref{subsec:A}, each sensor makes $T$ rounds of i.i.d. observations with   $\sigma_a = \sigma_e = 0.2$ rad.
		Thus, Assumptions \ref{assum_coordinates_3d} and \ref{assum_projection} hold.
		For each $T$, we conduct 1000 Monte Carlo runs and the displayed results are based on the average of 1000 Monte Carlo runs.

Figs. \ref{fig:3d_bias_varyingT} and \ref{fig:3d_rmse_varyingT} present the biases and RMSEs of all estimators for varying $T=10,\, 30,\, 100,\, 200,\, 300$, respectively. 
We observe that: (i) BELS, BELS+GN, and MPR-EV all exhibit a diminishing trend in bias as \(T\) increases. Notably, both BELS and BELS+GN consistently achieve lower bias than MPR-EV except for BELS at \(T = 30\). In contrast, PLS, BR-PLE, and MPR-SDP+GN display nonvanishing biases that persist with increasing \(T\), confirming their asymptotic biasedness.
(ii) 
The BELS+GN estimator is asymptotically efficient, with its RMSE converging to the RCRLB as \(T \to \infty\). 
MPR-EV exhibits a decreasing RMSE as $T$ increases, confirming its consistency; however, its RMSE remains consistently above the RCRLB, indicating suboptimal asymptotic efficiency.
In contrast, PLS, BR-PLE, and MPR-SDP+GN converge to nonzero RMSE values,  confirming their asymptotic inefficiency.
		\begin{figure}[!h]
		\centering
			\begin{subfigure}[b]{.43\textwidth}
				\centering
			\includegraphics[width=\textwidth]{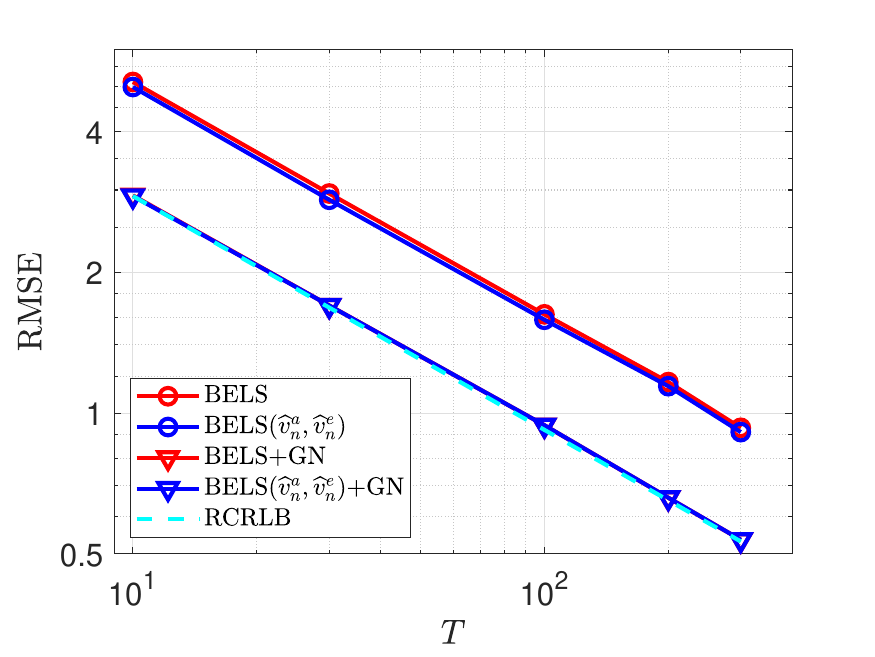}
				\caption{Varying numbers of measurements.}
				\label{fig:3d_compareVar_varyingT}
			\end{subfigure}
			\begin{subfigure}[b]{.43\textwidth}
				\centering
			\includegraphics[width=\textwidth]{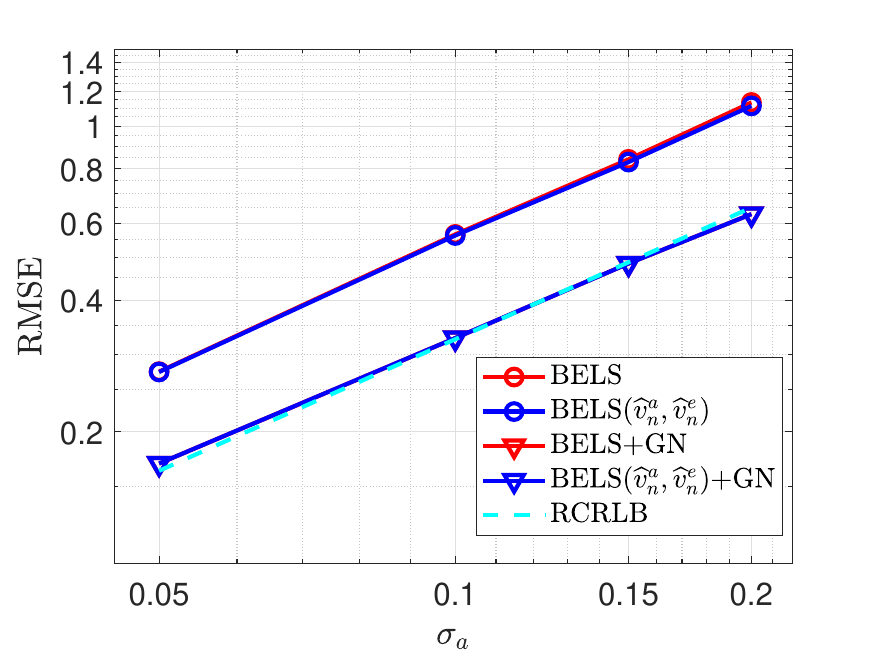}
				\caption{Varying noise intensities}
			\label{fig:3d_compareVar_varyingsigma}
			\end{subfigure}
			\caption{3-D: RMSE comparison of  BELS and BELS+GN estimators between the true variance of the sine of the noises and the estimated ones.}
			\label{fig:3d_compareVar}
		\end{figure}

Fig. \ref{fig:3d_rmse_varyingsigma} presents the RMSEs under varying noise intensities with sample sizes $T=200$ for $\sigma_a=\sigma_e=0.05, 0.1, 0.15$, and $0.2$ rad.  In low-noise intensities $\sigma_a=\sigma_e= 0.05,0.1$, both BELS+GN and MPR-EV achieve the RCRLB. As noise increases $\sigma_a=\sigma_e= 0.15,0.2$, only BELS+GN maintains RCRLB attainment, while all other estimators deviate from it.

				\begin{figure*}[!t]
		\centering
			\begin{subfigure}[b]{.24\textwidth}
				\centering
			\includegraphics[width=\textwidth]{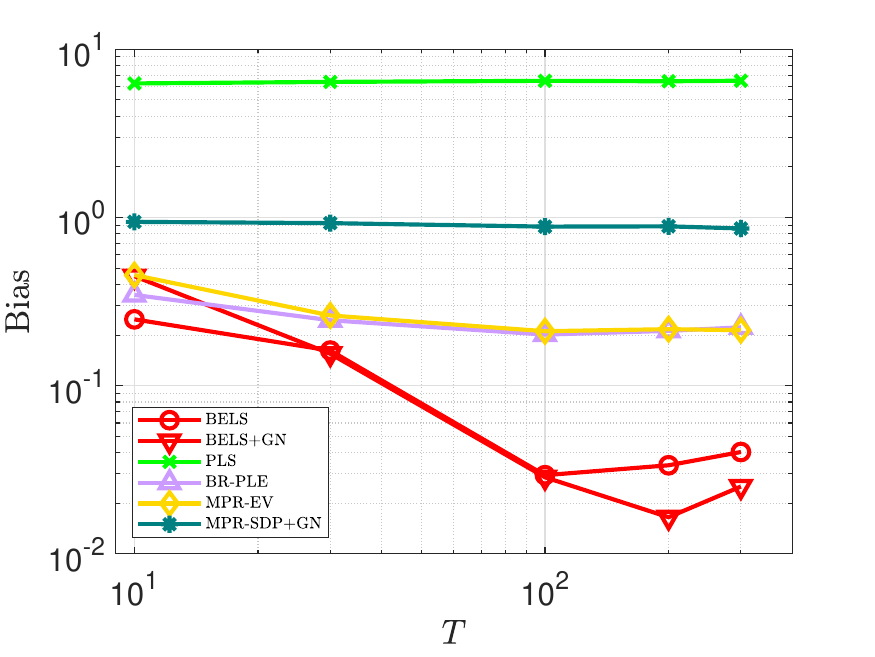}
				\caption{The source is not coplanar with the sensors.}
			\label{fig:3d_noncoplanar_source_bias}
			\end{subfigure}
			\begin{subfigure}[b]{.24\textwidth}
				\centering
			\includegraphics[width=\textwidth]{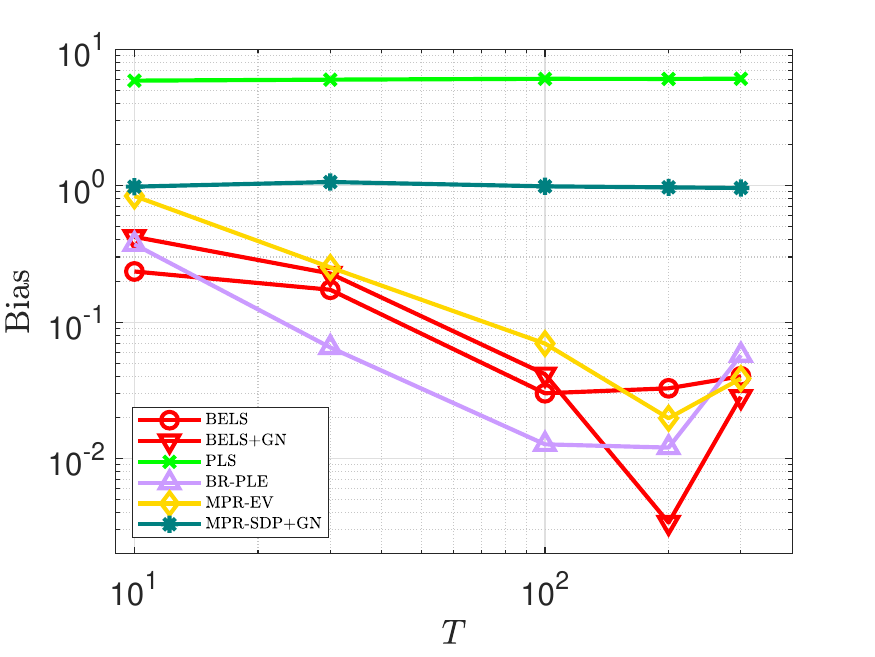}
				\caption{The source and sensors are coplanar.\\~}
			\label{fig:3d_coplanar_source_bias}
			\end{subfigure}
			\label{fig:3d_coplanar_bias}
		\centering
			\begin{subfigure}[b]{.24\textwidth}
				\centering				\includegraphics[width=\textwidth]{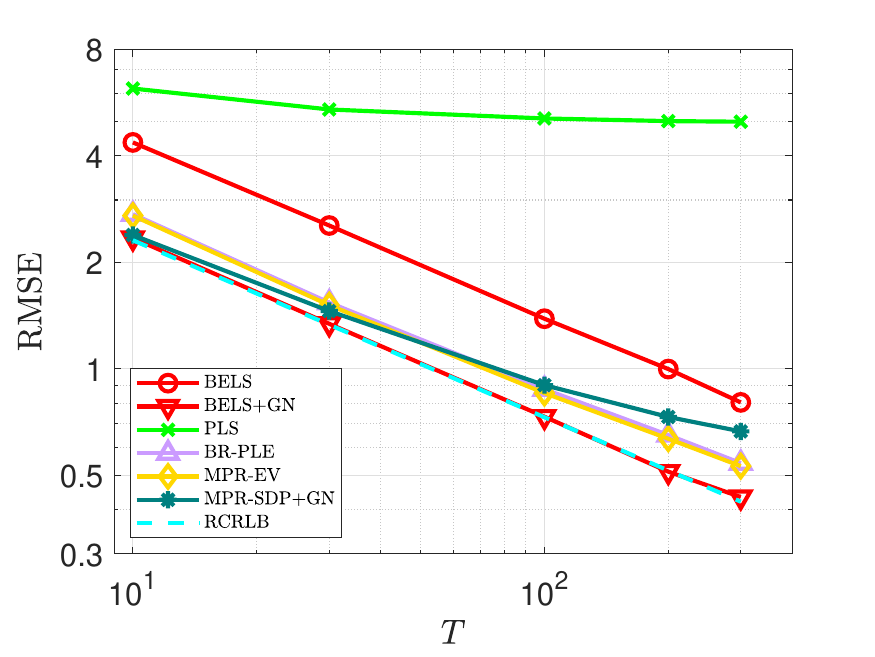}
				\caption{The source is not coplanar with the sensors.}
		\label{fig:3d_noncoplanar_source_rmse}
			\end{subfigure}
			\begin{subfigure}[b]{.24\textwidth}
				\centering
		\includegraphics[width=\textwidth]{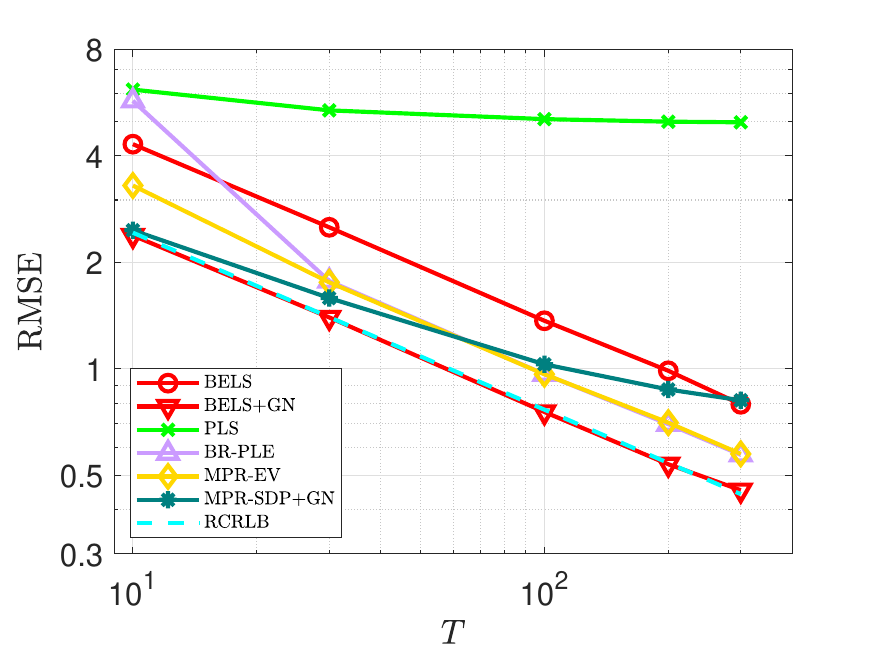}
				\caption{The source and sensors are coplanar.\\~}
				\label{fig:3d_coplanar_source_rmse}
			\end{subfigure}
			\caption{3-D: Biases and RMSEs of the estimators for coplanar sensors.}
			\label{fig:3d_coplanar_rmse}
		\end{figure*}
		
		Figs. \ref{fig:3d_compareVar_varyingT} and \ref{fig:3d_compareVar_varyingsigma} compare the BELS and BELS+GN using true variances of the sine of noises with their estimated counterparts, BELS($\widehat{v}^a_n,\widehat{v}^e_n$)
		and BELS($\widehat{v}^a_n,\widehat{v}^e_n$)+GN. 
		As in the 2-D case, the RMSEs of BELS\((\widehat{v}^a_n, \widehat{v}^e_n)\) and BELS\((\widehat{v}^a_n, \widehat{v}^e_n)\)+GN are virtually indistinguishable from those of BELS and BELS+GN across varying sample sizes and noise levels. This close agreement validates the accuracy and robustness of the proposed estimators for the variances of the sine of both the azimuth and elevation noises, confirming their practical reliability.

Table \ref{table:3d_time} shows the computational time of all estimators with increasing sample sizes based on the avarage of 1000 Monte Carlo runs. Here, the PLS estimator is omitted due to its uncompetitive accuracy.
Our proposed algorithm is the fastest across all sample sizes. The BR-PLE estimator exhibits  $\mathcal{O}(n)$ complexity but incurs slightly higher time costs due to its iterative nature. Both MPR-EV and MPR-SDP+GN exhibit substantially higher complexity: MPR-EV requires $\mathcal{O}(n^3)$ operations dominated by large matrix inversions \cite{Sun2020}, while MPR-SDP+GN involves two CVX calls for SDP-based initialization and  large matrix inversions during GN iterations. Consequently, their execution time grows rapidly with sample size, hindering real-time application in large-scale scenarios. In contrast, our method maintains a significant computational advantage, especially with large sample sizes. 
		
		\begin{table}[ht]
			\centering
			\caption{The average time spent by different algorithms among 1000 experiments. (Unit: seconds)}
			\resizebox{0.7\columnwidth}{!}{
				\begin{tabular}{ccccccc}
					\hline
					\hline
					& $n=100$  & $n=300$  & $n=1000$  & $n=2000$ & $n=3000$   \\ \hline
					\textbf{BELS+GN} & \textbf{0.00129}& \textbf{0.00240}& \textbf{0.01242}& \textbf{0.01648}& \textbf{0.01743}&   \\ \hline
					BR-PLE & 0.02995& 0.09224& 0.28913& 0.57565& 0.86581  \\ \hline
					MPR-EV & 0.35355& 0.77481& 4.06640& 17.79827& 36.49547  \\ \hline
					MPR-SDP+GN & 1.30462& 1.46721& 2.53511& 7.16763& 12.95176  \\ \hline
			\end{tabular}}
			\label{table:3d_time}
		\end{table}
		
		
		\subsection{3-D scenario: fixed and coplanar sensors}
		
	In the 3-D scenario, AOA-based localization only requires noncollinear sensor deployment, unlike TOA/TDOA methods which require noncoplanar arrangements. To verify that coplanar sensor configurations are sufficient, we place 10 sensors in the plane $z=0$ at coordinates: $ p_1 = [0,100,0]^T,$
	$p_2=[0,50,0]^T,$
	$p_3=[50,50,0]^T,$
	$p_4=[50,0,0]^T,$
	$p_5=[50,-50,0]^T,$
  $p_6 = [0,-50,0]^T,$
  $p_7=[0,-100,0]^T,$
  $p_8=[-50,-50,0]^T,$
  $p_9=[-50,0,0]^T,$
  $p_{10}=[-50,50,0]^T$. Two source locations are placed at $p^o=[60,10,10]^T$ (noncoplanar) and $p^o=[60,10,0]^T$ (coplanar). Assumptions \ref{assum_coordinates_3d} and \ref{assum_projection} are satisfied for both cases.

We evaluate the bias and RMSE of all estimators for $T$ $=10,$ $ 30, 100, 200, 300$ with noise standard deviations $\sigma_a = \sigma_e = 0.2$ rad based on the average of 1000 Monte Carlo runs. 
Figs. \ref{fig:3d_noncoplanar_source_bias} and \ref{fig:3d_coplanar_source_bias} reveal that
PLS and MPR-SDP+GN suffer from nonvanishing asymptotic bias, whereas BR-PLE, MPR-EV, BELS, and BELS+GN exhibit diminishing bias, confirming their asymptotic unbiasedness.
Correspondingly, Figs. \ref{fig:3d_noncoplanar_source_rmse} and \ref{fig:3d_coplanar_source_rmse} show that PLS incurs the largest RMSE, while MPR-SDP+GN converges to a nonzero error floor due to its nonvanishing bias. BR-PLE and MPR-EV are consistent (RMSE decreases with $T$) but their RMSEs remain above the RCRLB. In contrast, BELS+GN achieves asymptotic unbiasedness, consistency, and efficiency, with its RMSE converging to the RCRLB in both coplanar and noncoplanar sensor–source arrangements. This result verifies that  noncoplanar sensor deployment is not required for 3-D  AOA-based localization.

		\section{Conclusion}
		\label{sec5}
	In this paper, we have established the asymptotic localizability of the AOA-based localization problem with respect to sensor deployment. Moreover, we have proposed a  consistent and asymptotically efficient two-step estimator for source localization using AOA  measurements under specific conditions related to measurement noise and sensor geometry.
	In the first step, we derive a $\sqrt{n}$-consistent estimator for the true source using the BELS estimator, which involves a necessary procedure to estimate the variance of the sine of the noise when the noise variance is unknown.
		In the second step, we apply one-step GN iteration, using the $\sqrt{n}$-consistent estimator from the first step as the initial value.
		Theoretically, the two-step estimator achieves asymptotic efficiency under appropriate conditions. Notably, the total computational complexity of the two-step estimator is $\mathcal{O}(n)$, making it computationally feasible for practical applications.
		Monte-Carlo simulations also validate its estimation efficiency and low computational complexity  in comparison with other existing estimators.

		\section*{Appendix A: Proofs of results}
		This section contains the proofs of the results not including in the main body of the paper.
		\renewcommand{\thesection}{A}
		\setcounter{equation}{0}
		\renewcommand{\theequation}{A\arabic{equation}}
		\setcounter{lem}{0}
		\setcounter{subsection}{0}
		\renewcommand{\thelem}{A\arabic{lem}}
\subsection{Proof of Theorem 1}
	Under Assumption 2, we have $\mathcal{P} \times \mathcal{P}^o$ is compact and $\left( \arctan \big(\frac{\tilde{y}-y}{\tilde{x}-x}\big)-\big(\frac{\tilde{y}-y^o}{\tilde{x}-x^o}\big)  \right)^2$ is a   continuous and bounded function of $(\tilde{p},p) \in \mathcal{P} \times \mathcal{P}^o$. Then by Lemma \ref{lem_converge_samplemean}, we have $\lim_{n \to \infty}\frac{1}{n}\sum_{i=1}^n(f_i(p)-f_i(p^o))^2 = \mathbb{E}_\mu \Big( \arctan \big(\frac{\tilde{y}-y}{\tilde{x}-x}\big)-\big(\frac{\tilde{y}-y^o}{\tilde{x}-x^o}\big)  \Big)^2$, where $\mathbb{E}_\mu$ is taken over $\tilde{p}=[\tilde{x},\tilde{y}]^T$ with respect to the distribution $\mu$. Suppose that there exists some $p \neq p^o$ such that $\mathbb{E}_\mu \Big( \arctan \big(\frac{\tilde{y}-y}{\tilde{x}-x}\big)-\big(\frac{\tilde{y}-y^o}{\tilde{x}-x^o}\big)  \Big)^2 = 0$. For every such $p$, define $\mathcal{P}_p \eq \left\{ \tilde{p} \in \mathcal{P} \vert  \arctan \big(\frac{\tilde{y}-y}{\tilde{x}-x}\big)=\big(\frac{\tilde{y}-y^o}{\tilde{x}-x^o}\big)\right\}$. Then $\mu(\mathcal{P}_p) =1$. However, $\big(\frac{\tilde{y}-y}{\tilde{x}-x}\big)=\big(\frac{\tilde{y}-y^o}{\tilde{x}-x^o}\big)$ is equivalent to the fact that vectors $\tilde{p}-p$ and $\tilde{p}-p^o$ are parallel. Note that $p \neq p^o$, that is to say that $\tilde{p}$ lies on the line going through $p$ and $p^o$. This contradicts Assumption 3.
		
		This completes the proof.

\subsection{Proof of Lemma 1}
		It is straightforward that
		\begin{align*}
			\label{M_p}
			\frac{1}{n}\sum_{i=1}^n\nabla f_i(p) \nabla f_i(p)^T
			=\frac{1}{n}\sum_{i=1}^n \left( \frac{1}{\|p_i-p\|^4}\begin{bmatrix}
				y_i-y\\-x_i+x
			\end{bmatrix}\left[y_i-y, -x_i+x\right]\right).
		\end{align*}
		Under  Assumptions 2-3, $\frac{1}{\|\tilde{p}-p\|^4}\begin{bmatrix}
				\tilde{y}-y\\-\tilde{x}+x
			\end{bmatrix}\left[\tilde{y}-y, -\tilde{x}+x\right]$ is a continuous and bounded matrix function of $(\tilde{p},p) \in \mathcal{P}\times \mathcal{P}^o$. Then by Lemma \ref{lem_converge_samplemean}, we  obtain that $\frac{1}{n}\sum_{i=1}^n\nabla f_i(p) \nabla f_i(p)^T$ converges uniformly on $\mathcal{P}^o$  as $n \to \infty$. This proves point (i).
		
		Moreover, by the definition of $M^o$, we have
		\begin{align*}
			M^o = \lim_{n\to \infty}\frac{1}{n}\sum_{i=1}^n\nabla f_i(p^o) \nabla f_i(p^o)^T
			=\mathbb{E}_\mu \left( \frac{1}{\|p-p^o\|^4}\begin{bmatrix}
				y-y^o\\-x+x^o
			\end{bmatrix}[y-y^o, -x+x^o]\right),
		\end{align*}
		where $\mathbb{E}_\mu$ is taken over $p = [x,y]^T$ with respect to $\mu$.
		
		Let $\theta$ be a nonzero vector such that
		\begin{align*}
			\theta^T M^o\theta
			=& \mathbb{E}_\mu \left( \frac{1}{\|p-p^o\|^4}\theta^T \begin{bmatrix}
				y-y^o\\-x+x^o
			\end{bmatrix}\left[y-y^o,  -x+x^o\right]\theta\right)\\
			=& \mathbb{E}_\mu \left(\frac{1}{\|p-p^o\|^2}\left[y-y^o,  -x+x^o\right]\theta\right)^2= 0.
		\end{align*}
		Define $\mathcal{P}_\theta \eq \{p\in \mathcal{P}~|~ \left[y-y^o,   -x+x^o\right]\theta = 0\}$ for every such $\theta$.  Then $\mu(\mathcal{P}_\theta)=1$. However, note that $\left[x-x^o,  y-y^o\right]^T = p-p^o$ and $\left[x-x^o,  y-y^o\right]^T$ is perpendicular to $\left[y-y^o,  -x+x^o\right]^T$. Then, for every $\tilde{p} \in \mathcal{P}_\theta$, $\theta$ and $\tilde{p}-p^o$ are parallel. Thus, every $\tilde{p} \in \mathcal{P}_\theta$ lies on the same line, which contradicts Assumption 3. 
		
		This completes the proof.

\subsection{Proof of Proposition 1}
		It is straightforward that 
		\begin{align*}
			\ell_n(p)/n &= -\ln(\sqrt{2\pi}\sigma_a)-\frac{1}{2\sigma_a^2  n}\sum_{i=1}^n \Big(a_i - \arctan\left(\frac{y_i-y}{x_i-x}\right)\Big)^2\\
			&= -\ln(\sqrt{2\pi}\sigma_a) - \frac{1}{2\sigma_a^2 n }\sum_{i=1}^n\big(f_i(p^o)-f_i(p)+\varepsilon_i^a\big)^2 \\ &= -\ln(\sqrt{2\pi}\sigma_a)-\frac{1}{2\sigma_a^2}h_n(p)- \frac{1}{2\sigma_a^2}\Big(\frac{1}{n}\sum_{i=1}^n2\big(f_i(p^o)-f_i(p)\big)\varepsilon_i^a + \frac{1}{n}\sum_{i=1}^n (\varepsilon_i^a)^2\Big).
		\end{align*}
Since both $\mathcal{P}$ and $\mathcal{P}^o$ are bounded under Assumption 2, by Lemma \ref{lem_sqrtn}, it holds that $\frac{1}{n}\sum_{i=1}^n(f_i(p^o)-f_i(p))\varepsilon_i^a \xra{} 0$ almost surely uniformly on $\mathcal{P}^o$. Moreover, $\lim_{n \to \infty}\frac{1}{n}\sum_{i=1}^n (\varepsilon_i^a)^2 = \sigma^2/(100\alpha^2)$  almost surely by Lemma \ref{lem_sqrtn}. As a result,  Theorem 1 yields
		\begin{equation*}
			\frac{1}{n}\ell_n(p) \to -\ln(\sqrt{2\pi}\sigma_a)-\frac{1}{2}-\frac{1}{2\sigma_a^2}h(p) = \ell(p)
		\end{equation*}
		almost surely as $n\ra{}\infty$ uniformly for $p \in \mathcal{P}^o$.
		
		Next, we show that $\nabla^2(-\ell(p^o)) = M^o/\sigma_a^2$. Note that 
		\begin{align*}
			\nabla^2(-\ell_n(p)/n) = \frac{1}{\sigma_a^2 n}\sum_{i=1}^n \big(\nabla f_i(p) \nabla f_i(p)^T - \nabla^2 f_i(p)(a_i-f_i(p))\big).
		\end{align*}
The Hessian matrix
$
   \nabla^2(-\ell_n(p^o)/n)  \to M^o/\sigma_a^2
$ almost surely  at $p=p^o$ as $n\xra{}\infty$ by Lemma 1.
 Moreover, the convergence of $\nabla^2(-\ell_n(p)/n)$ is uniform over $\mathcal{P}^o$ by Lemma \ref{lem_converge_samplemean} as for proving  Lemma 1.
Then, as a direct corollary of the uniform convergence, we have  $\nabla^2(-\ell_n(p^o)/n) \to \nabla^2 (-\ell(p^o))$ as $n \to \infty$, which implies $\nabla^2(-\ell(p^o)) = M^o/\sigma_{\alpha}^2$. 
		
	This completes the proof.

\subsection{Proof of Proposition 2}
		It follows that
		\begin{align*}
			&\frac{1}{n}\left(X^o\right)^TX^o = e^{-\sigma_a^2}\sum_{i=1}^{n}\frac{1}{n}h_i^o(h_i^o)^T\\
			=&~e^{-\sigma_a^2}\sum_{i=1}^{n}\frac{1}{n}\begin{bmatrix}
				\sin^2(a_i^o)	& -\sin(a_i^o)\cos(a_i^o) \\
				-\sin(a_i^o)\cos(a_i^o)	& \cos(a_i^o)
			\end{bmatrix}\\
			=&~e^{-\sigma_a^2}\left(I_2 - \frac{1}{n}\sum_{i=1}^{n}\frac{1}{(r_i^o)^2}(p_i-p^o)(p_i-p^o)^T\right).
		\end{align*}
		Note that the trace of the $2 \times 2$ matrix $\frac{1}{n}\sum_{i=1}^{n}(r_i^o)^{-2}(p_i-p^o)(p_i-p^o)^T$ is 1, then $I_2 - \frac{1}{n}\sum_{i=1}^{n}(r_i^o)^{-2}(p_i-p^o)(p_i-p^o)^T$ is non-singular if $\frac{1}{n}\sum_{i=1}^{n}(r_i^o)^{-2}(p_i-p^o)(p_i-p^o)^T$ is non-singular. Otherwise, suppose that $\frac{1}{n}\sum_{i=1}^{n}(r_i^o)^{-2}(p_i-p^o)(p_i-p^o)^T$ is  singular. Thus there exists some $\theta = [\theta_1, \theta_2]^T \neq 0$ such that for all $i=1,...,n$, $(p_i-p^o)^T\theta = 0$,
		which is equivalent to that for all the sensors $\{p_i\}_{i=1}^n$ lie on a line. This
		contradicts Assumption 3.
		
		Moreover, the existence of $\lim_{n \to \infty} \left(X^o\right)^T X^o / n$ can be established using a method similar to the one used in the proof of Lemma 1.  Further, let $\theta$ be a vector such that $\|\theta\|=1$ and 
		\begin{align*}
			\theta^T\lim_{n \to \infty}\frac{1}{n}\left(X^o\right)^TX^o\theta = e^{-\sigma_a^2}\mathbb{E}_{\mu}\left[1 - \frac{1}{\|p-p^o\|^2}\theta^T(p-p^o)(p-p^o)^T\theta\right]=0.
		\end{align*}
		For every such $\theta$, define $\mathcal{P}_\theta = \{p\in \mathcal{P}~|~ (p-p^o)^T\theta = \|p-p^o\|\}$. Notice that for any $p$, $(p-p^o)^T\theta \leq \|p-p^o\|$. Thus, there holds that $\mu(\mathcal{P}_\theta) = 1$. However, $(p-p^o)^T\theta = \|p-p^o\|$ means that $p-p^o$ and $\theta$ are parallel, which means $\mathcal{P}_\theta$ is a line. This contradicts Assumption 3. 
		
		This completes the proof.

\subsection{Proof of Proposition 3}
		It is straightforward from (11) that 
		\begin{align*}
			\widehat{p}^{\rm UB}_{n} = \left(\left(X^o\right)^TX^o\right)^{-1}\left(X^o\right)^T(X^op^o+W) = p^o + \left(\frac{1}{n}\left(X^o\right)^TX^o\right)^{-1}\left(\frac{1}{n}\left(X^o\right)^TW\right)
		\end{align*}
		and $\mathbb{E}\widehat{p}^{\rm UB}_{n}=p^o$.  
		By \eqref{trigono_2d_3}, we have
		\begin{align*}
		 \sin(\varepsilon_i^a)[\cos(a_i^o),  \sin(a_i^o)]&p^o
		 +\sin(\varepsilon_i^a)r_i^o = \sin(\varepsilon_i^a)[\cos(a_i^o),  \sin(a_i^o)]p_i.
		\end{align*} 
		Thus, the noise sequence $\{w_i\}$ of the model (11) has zero mean and its variance is upper bounded by
		\begingroup
		\allowdisplaybreaks
		\begin{align*}
		&\mathbb{E}\left((\cos(\varepsilon_i^a)-e^{-\sigma_a^2/2})(h_i^o)^T\!p^o \!+\! \sin(\varepsilon_i^a)[\cos(a_i^o), ~ \sin(a_i^o)]p_i\right)^2 \\
			&=\mathbb{E}\left(\big(\cos(\varepsilon_i^a\big)-e^{-\sigma_a^2/2})^2\left((h_i^o)^Tp^o\right)^2\right)
			+ \mathbb{E}\left(\sin^2(\varepsilon_i^a)([\cos(a_i^o),~\sin(a_i^o)]p_i)^2\right)\\
			&\leq\mathbb{V}(\cos(\varepsilon_i^a))\|p^o\|^2 + \mathbb{V}(\sin(\varepsilon_i^a))\|p_i\|^2
		\end{align*}
		\endgroup
		according to Lemma \ref{lem_moment_cos_sin} and Assumption 2. Then, by Lemma \ref{lem_sqrtn}, we have
		\begin{equation}
			\frac{1}{n}\left(X^o\right)^TW = O_p\Big(\frac{1}{\sqrt{n}}\Big).\label{xw}
		\end{equation}
		Thus, we obtain from \eqref{xw} and Proposition 2 that
		\begin{align*}
			\widehat{p}_{n}^{\rm UB} - p^o &= \left(\frac{1}{n}\left(X^o\right)^TX^o\right)^{-1}\left(\frac{1}{n}\left(X^o\right)^TW\right) =O(1)O_p\Big(\frac{1}{\sqrt{n}}\Big)=O_p\Big(\frac{1}{\sqrt{n}}\Big).
		\end{align*}
		This completes the proof.

		\subsection{Proof of Theorem 3}
		\label{proof_thm_be_2d}
		Consider the linear models (6) and (10) and denote
		\begin{align*}
			\Delta X \eq X-X^o  =
			\begin{bmatrix}
				h_1^T-e^{-\sigma_a^2/2} (h^o_1)^T\\
				h_2^T-e^{-\sigma_a^2/2} (h^o_2)^T\\
				\vdots\\
				h_n^T-e^{-\sigma_a^2/2} (h^o_n)^T\\
			\end{bmatrix}
			= \begin{bmatrix}
				(\cos(\varepsilon_1^a) - e^{-\sigma_a^2/2})(h_1^o)^T + \sin(\varepsilon_1^a)[\cos(a_1^o),\sin(a_1^o)]	\\
				(\cos(\varepsilon_2^a) - e^{-\sigma_a^2/2})(h_2^o)^T + \sin(\varepsilon_2^a)[\cos(a_2^o),\sin(a_2^o)]	\\
				\vdots	\\
				(\cos(\varepsilon_n^a) - e^{-\sigma_a^2/2})(h_n^o)^T + \sin(\varepsilon_n^a)[\cos(a_n^o),\sin(a_n^o)]
			\end{bmatrix}.
		\end{align*}
		Then we have
		\begin{align*}
			\frac{1}{n}X^TX = \frac{1}{n}\left(X^o\right)^TX^o + \frac{1}{n}\left(X^o\right)^T\Delta X + \frac{1}{n}(\Delta X)^TX^o + \frac{1}{n}(\Delta X)^T\Delta X.
		\end{align*}
		For the cross term, note that both the sequences $\{\cos(\varepsilon_i^a)-e^{-\sigma_a^2/2}\}_{i=1}^n$ and $\{\sin(\varepsilon_i^a)\}_{i=1}^n$ have zero mean   and finite variance. Thus,  by Lemma \ref{lem_sqrtn} we have
		\begin{align}
				\frac{1}{n}\left(X^o\right)^T\Delta X 
				= e^{-\sigma_a^2/2}\frac{1}{n}\sum_{i=1}^{n}(\cos(\varepsilon_i^a)-e^{-\sigma_a^2/2})h_i^o(h_i^o)^T + e^{-\sigma_a^2/2}\frac{1}{n}\sum_{i=1}^{n}\sin(\varepsilon_i^a)h_i^o[\cos(a_i^o),\sin(a_i^o)] 
				= O_p\Big(\frac{1}{\sqrt{n}}\Big).\label{ecos_esin}
		\end{align}
		using the similar arguments as used in the proof of Proposition 3.
		For the quadratic term, we have
		\begingroup
		\allowdisplaybreaks
		\begin{align*}
			&\frac{1}{n}(\Delta X)^T \Delta X 
			\\=~&\frac{1}{n}\sum_{i=1}^{n}(\cos(\varepsilon_i^a)-e^{-\sigma_a^2/2})^2 h_i^o(h_i^o)^T + \frac{1}{n}\sum_{i=1}^{n}\sin^2(\varepsilon_i^a)\begin{bmatrix}
				\cos^2(a_i^o)	& \cos(a_i^o)\sin(a_i^o) \\
				\cos(a_i^o)\sin(a_i^o)	& \sin^2(a_i^o)
			\end{bmatrix} \\
			&+\frac{1}{n}\sum_{i=1}^{n} \sin(\varepsilon_i^a)(\cos(\varepsilon_i^a)-e^{-\sigma_a^2/2})
			 h_i^o [\cos(a_i^o),  \sin(a_i^o)]+ \frac{1}{n}\sum_{i=1}^{n} \sin(\varepsilon_i^a)(\cos(\varepsilon_i^a)-e^{-\sigma_a^2/2}) \begin{bmatrix}
				\cos(a_i^o)	\\
				\sin(a_i^o)	
			\end{bmatrix}  (h_i^o)^T.
		\end{align*}
		\endgroup
		Assumption 1 indicates that the second moment of $\cos(\varepsilon_i^a)$ and $\sin(\varepsilon_i^a)$ is   available by Lemmas \ref{lem_moment_cos_sin} and \ref{lem_sqrtn}, and hence we obtain
		\begin{equation}
			\label{vcos_vsin}
			\begin{split}
				&\frac{1}{n}(\Delta X)^T \Delta X \\=~&				\frac{1}{n}\sum_{i=1}^{n}\mathbb{E}\big(\sin^2(\varepsilon_i^a)\big)\begin{bmatrix}
					\cos^2(a_i^o)	& \cos(a_i^o)\sin(a_i^o) \\
					\cos(a_i^o)\sin(a_i^o)	& \sin^2(a_i^o)
				\end{bmatrix} + \frac{1}{n}\sum_{i=1}^{n}\mathbb{E}\big((\cos(\varepsilon_i^a) - e^{-\sigma_a^2/2})^2\big)h_i^o(h_i^o)^T + O_p\Big(\frac{1}{\sqrt{n}}\Big)\\
				=~& \frac{1}{n}\sum_{i=1}^{n}\mathbb{V}(\sin(\varepsilon_i^a))\big(I_2-h_i^o(h_i^o)^T\big)+ \frac{1}{n}\sum_{i=1}^{n}\mathbb{V}(\cos(\varepsilon_i^a))h_i^o(h_i^o)^T  + O_p\Big(\frac{1}{\sqrt{n}}\Big)\\
				=~&\big(\mathbb{V}(\cos(\varepsilon_1^a))-\mathbb{V}(\sin(\varepsilon_1^a))\big)
				\frac{1}{n}\sum_{i=1}^{n}h_i^o(h_i^o)^T+\mathbb{V}(\sin(\varepsilon_1^a))I_2 + O_p\Big(\frac{1}{\sqrt{n}}\Big),
			\end{split}
		\end{equation}
		where the second equation holds because for every $i=1,...,n$,
		\begin{equation*}
			\begin{bmatrix}
				\cos^2(a_i^o)	& \cos(a_i^o)\sin(a_i^o) \\
				\cos(a_i^o)\sin(a_i^o)	& \sin^2(a_i^o)
			\end{bmatrix} = I_2 - h_i^o(h_i^o)^T.
		\end{equation*}
		Thus, we have
		\begin{align}
			\frac{1}{n}X^TX = \left(1+e^{\sigma_a^2}\mathbb{V}(\cos(\varepsilon_1^a)) - e^{\sigma_a^2}\mathbb{V}(\sin(\varepsilon_1^a))\right)
			\frac{1}{n}\left(X^o\right)^TX^o + \mathbb{V}(\sin(\varepsilon_1^a))I_2 + O_p\Big(\frac{1}{\sqrt{n}}\Big).\label{xtx}
		\end{align}
		
		Similarly, we have
		\begin{align*}
			\frac{1}{n}X^TY =& \frac{1}{n}\left(X^o\right)^TY + \frac{1}{n}(\Delta X)^TY\\
			=& \frac{1}{n}\left(X^o\right)^TY + \frac{1}{n}(\Delta X)^T W+ \frac{1}{n}(\Delta X)^TX^op^o \\
			=& \frac{1}{n}\left(X^o\right)^TY+ \frac{1}{n}(\Delta X)^T W + O_p\Big(\frac{1}{\sqrt{n}}\Big) .
		\end{align*}
		For the last term, by \eqref{trigono_2d_3} we have $w_i = \big(\cos(\varepsilon_i^a)-e^{-\sigma_a^2/2}\big)(h_i^o)^Tp^o+\sin(\varepsilon_i^a)[\cos(a_i^o),  \sin(a_i^o)]p^o+\sin(\varepsilon_i^a)r_i^o = \big(\cos(\varepsilon_i^a)-e^{-\sigma_a^2/2}\big)(h_i^o)^Tp^o+\sin(\varepsilon_i^a)[\cos(a_i^o),  \sin(a_i^o)]p_i$. Then, using the similar argument as used in \eqref{vcos_vsin}, by Lemmas \ref{lem_moment_cos_sin} and \ref{lem_sqrtn} we derive
		\begingroup
		\allowdisplaybreaks
		\begin{align*}
			&\frac{1}{n}(\Delta X)^TW \\=~& \frac{1}{n}\sum_{i=1}^{n}\big(\cos(\varepsilon_i^a)-e^{-\sigma_a^2/2}\big)^2h_i^o(h_i^o)^Tp^o + \frac{1}{n}\sum_{i=1}^{n}\sin^2(\varepsilon_i^a)\begin{bmatrix}
				\cos^2(a_i^o)	& \cos(a_i^o)\sin(a_i^o) \\
				\cos(a_i^o)\sin(a_i^o)	& \sin^2(a_i^o)
			\end{bmatrix}p_i\\
			&+\frac{1}{n}\sum_{i=1}^{n} \sin(\varepsilon_i^a)\big(\cos(\varepsilon_i^a)-e^{-\sigma_a^2/2}\big)
			\times \left(h_i^o\big[\cos(a_i^o),\sin(a_i^o)\big]p_i + \begin{bmatrix}
				\cos(a_i^o)	\\
				\sin(a_i^o)	
			\end{bmatrix}  (h_i^o)^Tp^o\right)\\
			=&\frac{1}{n}\sum_{i=1}^{n}\mathbb{V}(\cos(\varepsilon_i^a))h_i^o(h_i^o)^Tp^o + \frac{1}{n}\sum_{i=1}^{n}\mathbb{V}(\sin(\varepsilon_i^a))\left(I_2-h_i^o(h_i^o)^T\right)p_i + O_p\Big(\frac{1}{\sqrt{n}}\Big)\\
			=&\big(\mathbb{V}(\cos(\varepsilon_1^a))-\mathbb{V}(\sin(\varepsilon_1^a))\big)
			\frac{1}{n}\sum_{i=1}^{n}h_i^o(h_i^o)^Tp^o+\mathbb{V}(\sin(\varepsilon_1^a))\frac{1}{n}\sum_{i=1}^{n}p_i + O_p\Big(\frac{1}{\sqrt{n}}\Big),
		\end{align*}
		\endgroup
		where the last equation holds by using $(h_i^o)^Tp_i = (h_i^o)^T p^o$, which is obtained by reorganizing  \eqref{trigono_2d_2}. Note that
		\begin{align*}
			\frac{1}{n}\left(X^o\right)^TY &= \frac{1}{n}\left(X^o\right)^TX^op^o + \frac{1}{n}\left(X^o\right)^T W\\&= \frac{1}{n}\left(X^o\right)^TX^op^o + O_p\Big(\frac{1}{\sqrt{n}}\Big).
		\end{align*}
		Then we obtain
		\begin{equation}
		\begin{split}
		 \label{xty}
			\frac{1}{n}X^TY =& \big(1+e^{\sigma_a^2}\mathbb{V}(\cos(\varepsilon_1^a)) - e^{\sigma_a^2}\mathbb{V}(\sin(\varepsilon_1^a))\big)\Big(\frac{1}{n}\left(X^o\right)^TX^op^o\Big) \\&+ \mathbb{V}(\sin(\varepsilon_1^a))\frac{1}{n}\sum_{i=1}^{n}p_i + O_p\Big(\frac{1}{\sqrt{n}}\Big).
				\end{split}
		\end{equation}
		Therefore, there holds that
		\begin{align*}
			\widehat{p}_{n}^{\rm BE} =& \left(\frac{1}{n}X^TX -\mathbb{V}(\sin(\varepsilon_1^a))I_2\right)^{-1}\left(\frac{1}{n}X^TY-\mathbb{V}(\sin(\varepsilon_1^a))\frac{1}{n}\sum_{i=1}^{n}p_i
			\right)\\
			=& \left(\frac{K}{n}\left(X^o\right)^TX^o + O_p\Big(\frac{1}{\sqrt{n}}\Big)\right)^{-1} \left(\frac{K}{n}\left(X^o\right)^TX^op^o + O_p\Big(\frac{1}{\sqrt{n}}\Big)\right)\\
			=&p^o + O_p\Big(\frac{1}{\sqrt{n}}\Big),
		\end{align*}
		where $K = 1+e^{\sigma_a^2}\mathbb{V}(\cos(\varepsilon_1^a)) - e^{\sigma_a^2}\mathbb{V}(\sin(\varepsilon_1^a))=e^{-\sigma_a^2}$ is a non-zero constant by Lemma \ref{lem_moment_cos_sin}.
		Consequently, the BELS estimator $\widehat{p}_{n}^{\rm BE}$ is $\sqrt{n}$-consistent.
		
		This completes the proof.
		
	\subsection{Proof of Theorem 4}
		\label{proof_thm_var_2d}
		By \eqref{xtx} and \eqref{xty}, we have
		\begin{align*}
			Q_n&
			= 	\begin{bmatrix}
				e^{-\sigma_a^2}\Big(\frac{1}{n}\left(X^o\right)^TX^o  \Big)
				&e^{-\sigma_a^2}\frac{1}{n}\left(X^o\right)^TX^op^o    \\
				e^{-\sigma_a^2}(p^o)^T\Big(\frac{1}{n}\left(X^o\right)^TX^o\Big)  & Y^TY/n
			\end{bmatrix}+ 	\begin{bmatrix}
				\mathbb{V}(\sin(\varepsilon_1^a))I_2  
				&  \mathbb{V}(\sin(\varepsilon_1^a))\Big(\frac{1}{n}\sum\limits_{i=1}^{n}p_i\Big)   \\
				\mathbb{V}(\sin(\varepsilon_1^a))\Big(\frac{1}{n}\sum\limits_{i=1}^{n}p_i^T\Big)  & 0
			\end{bmatrix}\\&\hspace{3mm}+ O_p(1/\sqrt{n}).
		\end{align*}
		For $Y^TY/n$, by   \eqref{trigono_2d_2}, Lemmas \ref{lem_moment_cos_sin} and \ref{lem_sqrtn}, we have
		\begin{align*}
			&\frac{1}{n}Y^TY = \frac{1}{n}\Big((p^o)^T\left(X^o\right)^TX^op^o + (p^o)^T\left(X^o\right)^T W + W^TX^op^o + W^T W\Big)\\
			=~&\frac{1}{n}(p^o)^T\left(X^o\right)^TX^op^o + \frac{1}{n} W^T W + O_p\Big(\frac{1}{\sqrt{n}}\Big)\\
			=~& \frac{1}{n}(p^o)^T\left(X^o\right)^TX^op^o + \frac{1}{n}\sum_{i=1}^{n}\mathbb{V}(\cos(\varepsilon_i^a))(p^o)^Th_i^o(h_i^o)^Tp^o +\frac{1}{n}\sum_{i=1}^{n} \mathbb{V}(\sin(\varepsilon_i^a))p_i^T\big(I_2 - h_i^o(h_i^o)^T\big)p_i + O_p\Big(\frac{1}{\sqrt{n}}\Big)\\
			=~&\frac{e^{-\sigma_a^2}}{n}(p^o)^T\left(X^o\right)^TX^op^o + \frac{\mathbb{V}(\sin(\varepsilon_1^a))}{n}\sum_{i=1}^{n}p_i^Tp_i + O_p\Big(\frac{1}{\sqrt{n}}\Big).
		\end{align*}
		Therefore, we have
		\begin{align*}
			Q_n =~& e^{-\sigma_a^2}\frac{1}{n}\begin{bmatrix}
				\left(X^o\right)^TX^o & \left(X^o\right)^TX^op^o  \\
				(p^o)^T\left(X^o\right)^TX^o & (p^o)^T\left(X^o\right)^TX^op^o 
			\end{bmatrix} + \mathbb{V}(\sin(\varepsilon_1^a))S_n + O_p\Big(\frac{1}{\sqrt{n}}\Big).
		\end{align*}
		By Lemmas \ref{lem_converge_samplemean} and \ref{lem_sqrtn}, and the similar arguments used in  the proof of Proposition 2,
the limit of $S_n$ exists and is denoted by 
$S_{\infty} \eq \lim_{n \to \infty}S_n$.
Similarly, the limit  of $$\begin{bmatrix}
				\left(X^o\right)^TX^o & \left(X^o\right)^TX^op^o  \\
				(p^o)^T\left(X^o\right)^TX^o & (p^o)^T\left(X^o\right)^TX^op^o 
			\end{bmatrix} $$ exists, denoted by $U_\infty^o$. Thus, the following limit equation holds
		\begin{equation*}
			Q_{\infty} \eq \lim_{n\to \infty}Q_n= e^{-\sigma_a^2}U_{\infty}^o + \mathbb{V}(\sin(\varepsilon_1^a))S_{\infty},
		\end{equation*}
		and the matrix $U_{\infty}^o$ is positive semi-definite and singular since its columns are linearly dependent. And $Q_{\infty}$ is non-singular due to the non-singularity of $S_{\infty}$.  Then, by Lemma \ref{lem_eigen}, we have
		\begin{equation*}
			\lambda_{\rm max}(Q_{\infty}^{-1}\mathbb{V}(\sin(\varepsilon_1^a))S_\infty) = 1.
		\end{equation*}
		Note that $Q_n-Q_\infty = O_p(1/\sqrt{n})$ and $S_n-S_\infty = O_p(1/\sqrt{n})$, and $\lambda_{\rm max}(Q_n^{-1}S_n)$ is a continuous function of $Q_n$ and $S_n$.
		Thus, we have
$
			\lambda_{\rm max}(Q_n^{-1}S_n)-\lambda_{\rm max}(Q_\infty^{-1}S_\infty) = O_p(1/\sqrt{n})
$, and further
		\begin{align*}
			\widehat{v}^a_n &= \frac{1}{\lambda_{\rm max}(Q_n^{-1}S_n)} = \frac{1}{\lambda_{\rm max}(Q_\infty^{-1}S_\infty)} + O_p(1/\sqrt{n})\\ &= \mathbb{V}(\sin(\varepsilon_1^a)) + O_p(1/\sqrt{n}).
		\end{align*}
		This completes the proof.
		
 	\subsection{Proof of Theorem 5}
		\label{proof_lem_3d}
	Similar to the argument used in proving Theorem 1, it is straightforward to derive  that
		\begin{equation}
			\label{expect_ml_3d}
			\begin{split}
				&\lim_{n \to \infty}\frac{1}{n}\sum_{i=1}^n \Big(f_{i}(p)-f_{i}(p^o)\Big)^T\Big(f_{i}(p)-f_{i}(p^o)\Big)  \\
				=& \sigma_a^{-2}\mathbb{E}_\mu   \left(\arctan\left(\frac{\tilde{y}-y}{\tilde{x}-x}\right)-\arctan\left(\frac{\tilde{y}-y^o}{\tilde{x}-x^o}\right)\right)^2  \\
				&+\sigma_e^{-2}\mathbb{E}_\mu   \Bigg(\arctan\left(\frac{\tilde{z}-z}{\sqrt{(\tilde{x}-x)^2+(\tilde{y}-y)^2}}\right)-\arctan\left(\frac{\tilde{z}-z^o}{\sqrt{(\tilde{x}-x^o)^2+(\tilde{y}-y^o)^2}}\right)\Bigg)^2  ,
			\end{split}
		\end{equation}
		where $\mathbb{E}_\mu$ is taken over $\tilde{p}=[\tilde{x},\tilde{y},\tilde{z}]^T$ with respect to $\mu$. We have the conclusion that
		\begin{equation}
			\label{E_arctanyx}
			\mathbb{E}_\mu   \left(\arctan\left(\frac{\tilde{y}-y}{\tilde{x}-x}\right)-\arctan\left(\frac{\tilde{y}-y^o}{\tilde{x}-x^o}\right)\right)^2 
		\end{equation}
		reaches its unique minimum at $p_{1:2}=p^o_{1:2}$ by Theorem 1 for the 2-D scenario, where $p_{1:2}$ and $p^o_{1:2}$ represent the first two coordinates of $p$ and $p^o$, respectively.  
		Therefore,  it suffices to verify that
		\begin{align*}
			\mathbb{E}_\mu \Bigg(&\arctan\left(\frac{\tilde{z}-z}{\sqrt{(\tilde{x}-x)^2+(\tilde{y}-y)^2}}\right)-\arctan\left(\frac{\tilde{z}-z^o}{\sqrt{(\tilde{x}-x^o)^2+(\tilde{y}-y^o)^2}}\right)\Bigg)^2 
		\end{align*}
		is uniquely minimized at $p= p^o$. 
		Suppose there is some $p \neq p^o$ so that $p_{1:2}=p^o_{1:2}$ and $\mathbb{E}_\mu \Bigg(\arctan\left(\frac{\tilde{z}-z}{\sqrt{(\tilde{x}-x)^2+(\tilde{y}-y)^2}}\right)-\arctan\left(\frac{\tilde{z}-z^o}{\sqrt{(\tilde{x}-x^o)^2+(\tilde{y}-y^o)^2}}\right)\Bigg)^2=0$. Define the set $\mathcal{P}'_p \eq \bigg\{\tilde{p}\in \mathcal{P}~\Bigg|~\arctan\Big(\frac{\tilde{z}-z}{\sqrt{(\tilde{x}-x)^2+(\tilde{y}-y)^2}}\Big)=\arctan\Big(\frac{\tilde{z}-z^o}{\sqrt{(\tilde{x}-x^o)^2+(\tilde{y}-y^o)^2}}\Big), p_{1:2}=p^o_{1:2}\bigg\}$ for every such $p$. Then $\mu(\mathcal{P}'_p)=1$.
		Note that
		$
		[\tilde{x}-x,  \tilde{y}-y]^T = \tilde{p}_{1:2}-p_{1:2}=\tilde{p}_{1:2}-p^o_{1:2} = [\tilde{x}-x^o,  \tilde{y}-y^o]^T.
		$
		Thus, $\arctan\left(\frac{\tilde{z}-z}{\sqrt{(\tilde{x}-x)^2+(\tilde{y}-y)^2}}\right)=\arctan\left(\frac{\tilde{z}-z^o}{\sqrt{(\tilde{x}-x^o)^2+(\tilde{y}-y^o)^2}}\right)$ is equivalent to $\tilde{z}-z = \tilde{z}-z^o$. This derives that $z=z^o$ and further $p=p^o$.
		
		\subsection{Proof of Lemma 3}
		The convergence of $\frac{1}{n}\sum_{i=1}^n\big(\nabla f_{i}(p)\big)^T \nabla f_{i}(p)$ can be derived using arguments similar to those used in the proof of Lemma 1 for the 2-D scenario. 
		Further, define $\rho \eq [\tilde{y}-y^o, -\tilde{x}+x^o,0]^T$ and $\psi \eq [(\tilde{x}-x^o)(\tilde{z}-z^o),(\tilde{y}-y^o)(\tilde{z}-z^o),-(\tilde{x}-x^o)^2-(\tilde{y}-y^o)^2]^T$. Then, by definition,
		\begin{align*}
			M^o
			= \mathbb{E}_\mu \Bigg( \frac{\sigma_a^{-2}}{\|\tilde{p}_{1:2}-p^o_{1:2}\|^4}\rho \rho^T  + \frac{\sigma_e^{-2}}{\|\tilde{p}-p^o\|^4\|\tilde{p}_{1:2}-p^o_{1:2}\|^2} \psi \psi^T\Bigg),
		\end{align*}
		where $\mathbb{E}_\mu$ is taken over $\tilde{p}=[\tilde{x},\tilde{y},\tilde{z}]^T$ with respect to $\mu$. Suppose there exists some $\theta = [\theta_1, \theta_2,\theta_3]^T$ such that $\theta \neq 0 $ and $\theta^T \lim_{n \to \infty}\frac{1}{n}\big(\nabla f_{i}(p^o)\big)^T \nabla f_{i}(p^o) \theta = 0$. Then there holds that 
		\begin{align}
			\label{M_3d_quadra_1}
			&\mathbb{E}_\mu \left( \theta^T \rho \rho^T\theta \right)=0,\\ \label{M_3d_quadra_2}
			&\mathbb{E}_\mu\left(\theta^T \psi\psi^T\theta \right) = 0.
		\end{align}
		
		By Lemma 1(ii), it is straightforward to derive that \eqref{M_3d_quadra_1} is equivalent to $\theta_1=\theta_2=0$. Thus, \eqref{M_3d_quadra_2} is equivalent to
		\begin{equation*}
			\mathbb{E}_\mu\left[\big((\tilde{x}-x^o)^2+(\tilde{y}-y^o)^2\big)(\theta_3)^2 \right] = 0.
		\end{equation*}
		Thus, we obtain $\theta_3=0$ as well, which contradicts the assumption $\theta \neq 0$. Therefore, $M^o$ is non-singular.
		
		This completes the proof.

\subsection{Proof of Proposition 5}

			By definition, we have
			\begin{align*}
				&\frac{1}{n}\left(\Phi^o\right)^T\Phi^o = e^{-\sigma_e^2}\sum_{i=1}^{n}\frac{1}{n}\cos^2(e_i^o)=e^{-\sigma_e^2}\frac{1}{n}\sum_{i=1}^{n}\frac{(x_i-x^o)^2 + (y_i-y^o)^2}{(x_i-x^o)^2 + (y_i-y^o)^2 + (z_i-z^o)^2}.
			\end{align*}
			On the one hand, $(x_i-x^o)^2 + (y_i-y^o)^2 + (z_i-z^o)^2$ is upper bounded by a positive constant since $\mathcal{P}$ and $\mathcal{P}^o$ are bounded. On the other hand, since $\mathcal{P}_{1:2} \cap \mathcal{P}^o_{1:2} = \emptyset$ and both sets are bounded, $(x_i-x^o)^2 + (y_i-y^o)^2$ is  bounded from below by a positive constant. Therefore, $(\Phi^o)^T\Phi^o/n$ is uniformly bounded from below  by a positive constant regardless of $n$.
			
			This completes the proof.
		
		\subsection{Proof of Proposition 6}
		\label{proof_thm_ub_z}
		It follows from (26) that
		\begin{align*}
			\widehat{z}_n^{\rm UB} &= \left((\Phi^o)^T\Phi^o\right)^{-1}(\Phi^o)^T(\Phi^o z^o + \eta) = z^o + \left(\frac{1}{n}(\Phi^o)^T\Phi^o\right)^{-1}\Big(\frac{1}{n}(\Phi^o)^T\eta\Big).
		\end{align*}
		Under Assumption 5, each element of the  noise  sequence of the model (26) is of zero mean and its variance
		\begin{align*}
			&\mathbb{E}\Big(\big(e^{-\sigma_e^2/2}-\cos(\varepsilon_i^e)\big)\cos(e_i^o)z^o+\sin(\varepsilon_i^e)\big(d_i^o+\sin(e_i^o)z^o\big)\Big)^2\\
			&=\mathbb{E}\Big(\big(\cos(\varepsilon_i^e)-e^{-\sigma_e^2/2}\big)^2\cos^2(e_i^0)(z^o)^2\Big) + \mathbb{E}\Big(\sin^2(\varepsilon_i^e)\big(d_i^o+\sin(e_i^o)z^o\big)^2\Big)\\
			&\leq \mathbb{V}(\cos(\varepsilon_i^e))(z^o)^2 + 2\mathbb{V}(\sin(\varepsilon_i^e))\big((d_i^o)^2 + (z^o)^2\big),
		\end{align*}
		is uniformly bounded by Lemma \ref{lem_moment_cos_sin}. 
		Thus, by Lemma \ref{lem_sqrtn}, we have
		\begin{equation}
			\label{phiomega}
			\frac{1}{n}(\Phi^o)^T\eta = O_p(1/\sqrt{n}).
		\end{equation}
		Additionally, combining with Proposition 5 one derives that
		\begin{equation*}
			\widehat{z}_n^{\rm UB} - z^o  = O(1)O_p(1/\sqrt{n}) = O_p(1/\sqrt{n}).
		\end{equation*}
		This completes the proof.
		
		\subsection{Proof of Theorem 7}
		\label{proof_BELS_3D_case}
		It follows from (23) and (26) that
		\begin{align*}
			\Delta\Phi &\eq \Phi - \Phi^o 
			=\begin{bmatrix}
				-\cos(e_1)+e^{-\sigma_e^2/2}\cos(e_1^o)\\
				-\cos(e_2)+e^{-\sigma_e^2/2}\cos(e_2^o)\\
				\vdots\\
				-\cos(e_n)+e^{-\sigma_e^2/2}\cos(e_n^o)
			\end{bmatrix}
			= \begin{bmatrix}
				-(\cos(\varepsilon_1^e)-e^{-\sigma_e^2/2})\cos(e_1^o) + \sin(\varepsilon_1^e)\sin(e_1^o)	\\
				-(\cos(\varepsilon_2^e)-e^{-\sigma_e^2/2})\cos(e_2^o) + \sin(\varepsilon_2^e)\sin(e_2^o)	\\
				\vdots	\\
				-(\cos(\varepsilon_n^e)-e^{-\sigma_e^2/2})\cos(e_n^o) + \sin(\varepsilon_n^e)\sin(e_n^o)
			\end{bmatrix}.
		\end{align*}
		Then we have
		\begin{equation*}
			\frac{1}{n}\Phi^T\Phi = \frac{1}{n}(\Phi^o)^T\Phi^o + \frac{2}{n}(\Phi^o)^T\Delta\Phi + \frac{1}{n}\Delta\Phi^T\Delta\Phi.
		\end{equation*}
		For the cross term, by the similar argument as \eqref{ecos_esin}, we have
		\begin{align*}
			\frac{2}{n}(\Phi^o)^T\Delta\Phi 
			=~&e^{-\sigma_e^2/2}\frac{2}{n}\sum_{i=1}^{n}\big(\cos(\varepsilon_i^e)-e^{-\sigma_e^2/2}\big)\cos^2(e_i^o) - e^{-\sigma_e^2/2}\frac{1}{n}\sum_{i=1}^{n}\sin(\varepsilon_i^e)\sin(e_i^o)\cos(e_i^o)\\
			=~&O_p(1/\sqrt{n}).
		\end{align*}
		For the quadratic term, we have
		\begingroup
		\allowdisplaybreaks
		\begin{align*}
			&\frac{1}{n}\Delta\Phi^T\Delta\Phi\\=~& \frac{1}{n}\sum_{i=1}^{n}(\cos(\varepsilon_i^e)-e^{-\sigma_e^2/2})^2\cos^2(e_i^o)	+\frac{1}{n}\sum_{i=1}^{n} \sin^2(\varepsilon_i^e)\sin^2(e_i^o) - \frac{2}{n}\sum_{i=1}^{n}\sin(\varepsilon_i^e)(\cos(\varepsilon_i^e)-e^{-\sigma_e^2/2})\sin(e_i^o)\cos(e_i^o).
		\end{align*}
		By the similar argument as \eqref{vcos_vsin}, we derive
		\begin{align*}
			&\frac{1}{n}\Delta\Phi^T\Delta\Phi\\ =~& \frac{1}{n}\sum_{i=1}^{n}\mathbb{E}\big((\cos(\varepsilon_i^e)-e^{-\sigma_e^2/2})^2\cos^2(e_i^o)\big) + \frac{1}{n}\sum_{i=1}^{n} \mathbb{E}\left(\sin^2(\varepsilon_i^e)\right)\sin^2(e_i^0) + O_p(1/\sqrt{n})\\
			=~& \frac{1}{n}\sum_{i=1}^{n}\mathbb{V}(\cos(\varepsilon_i^e))\cos^2(e_i^o) + \frac{1}{n}\sum_{i=1}^{n}\mathbb{V}(\sin(\varepsilon_i^e))(1-\cos^2(e_i^o)) + O_p(1/\sqrt{n})\\
			=~& e^{\sigma_e^2}\big(\mathbb{V}(\cos(\varepsilon_i^e))-\mathbb{V}(\sin(\varepsilon_i^e))\big)\Big(\frac{1}{n}(\Phi^o)^T\Phi^o \Big)+ \mathbb{V}(\sin(\varepsilon_1^e)) + O_p(1/\sqrt{n}).
		\end{align*}
		\endgroup
		Thus, there holds that
		\begin{align}
			\frac{1}{n}\Phi^T\Phi &= \left(1+e^{\sigma_e^2}\mathbb{V}(\cos(\varepsilon_i^e)) - e^{\sigma_e^2}\mathbb{V}(\sin(\varepsilon_i^e))\right)\Big(\frac{1}{n}(\Phi^o)^T\Phi^o \Big) + \mathbb{V}(\sin(\varepsilon_1^e)) + O_p(1/\sqrt{n}).\label{hatphihatphi}
		\end{align}
		
		Additionally, $\Phi^T\widehat{\Gamma}/n$ can be decomposed as
		\begin{align*}
			\frac{1}{n}\Phi^T\widehat{\Gamma} &= \frac{1}{n}\Phi^T\Gamma + \frac{1}{n}\Phi^T(\widehat{\Gamma}-\Gamma)= \frac{1}{n}\Phi^T\Gamma + \frac{1}{n}\sum_{i=1}^{n}\cos(e_i)(\widehat{r}_i-r_i^o)\sin(e_i).
		\end{align*}
		Since $\widehat{r_i}-r_i^o = O_p(1/\sqrt{n})$ for all $i=1,...,n$ by (29), we have $\Phi^T\widehat{\Gamma}/n-\Phi^T\Gamma/n = O_p(1/\sqrt{n})$.
		By the similar argument as used for deriving \eqref{ecos_esin} and \eqref{vcos_vsin}, we have
		\begin{align*}
			\frac{1}{n}\Phi^T\Gamma =& \frac{1}{n}(\Phi^o)^T\Gamma + \frac{1}{n}\Delta\Phi^T\Gamma
			=\frac{1}{n}(\Phi^o)^T\Gamma + \frac{1}{n}\Delta\Phi^T\Phi^o z^o+ \frac{1}{n}\Delta\Phi^T\eta
			=\frac{1}{n}(\Phi^o)^T\Gamma+ \frac{1}{n}\Delta\Phi^T\eta + O_p(1/\sqrt{n}),
		\end{align*}
		where
		\begin{align*}
			&\frac{1}{n}\Delta\Phi^T\eta \\ =& \frac{1}{n}\sum_{i=1}^{n}\big(\cos(\varepsilon_i^e)- e^{-\sigma_e^2/2}\big)^2\cos^2(e_i^o)z^o + \frac{1}{n}\sum_{i=1}^{n}\sin^2(\varepsilon_i^e)\big(\sin(e_i^o)d_i^o+\sin^2(e_i^o)z^o\big)\\
			&-\frac{1}{n}\sum_{i=1}^{n}\sin(\varepsilon_i^e)\big(\cos(\varepsilon_i^e)-e^{-\sigma_e^2/2}\big)  \big(\sin(e_i^o)\cos(e_i^o)z^o+\cos(e_i^o)(d_i^o+\sin(e_i^o)z^o)\big)\\
			=&\frac{1}{n}\sum_{i=1}^{n}\mathbb{V}(\cos(\varepsilon_i^e))\cos^2(e_i^o)z^o + \frac{1}{n}\sum_{i=1}^{n}\mathbb{V}(\sin(\varepsilon_i^e))(z_i-\cos^2(e_i^o)z^o) + O_p(1/\sqrt{n})\\
			=&e^{\sigma_e^2}\left(\mathbb{V}(\cos(\varepsilon_1^e))-\mathbb{V}(\sin(\varepsilon_1^e)\right)\Big(\frac{1}{n}(\Phi^o)^T\Phi^o \Big)z^o + \mathbb{V}(\sin(\varepsilon_1^e))\Big(\frac{1}{n}\sum_{i=1}^{n}z_i\Big) + O_p(1/\sqrt{n}).
		\end{align*}
		Note that
		\begin{align*}
			\frac{1}{n}(\Phi^o)^T\Gamma = \frac{1}{n}(\Phi^o)^T\Phi^o z^o + \frac{1}{n}(\Phi^o)^T\eta= \frac{1}{n}(\Phi^o)^T\Phi^o z^o + O_p(1/\sqrt{n}).
		\end{align*}
		Then it holds that
		\begin{align}
			\frac{1}{n}\Phi^T\widehat{\Gamma}  = \left(1+e^{\sigma_e^2}\mathbb{V}(\cos(\varepsilon_1^a)) - e^{\sigma_e^2}\mathbb{V}(\sin(\varepsilon_1^e))\right)
			\Big(\frac{1}{n}(\Phi^o)^T\Phi^o z^o\Big)
			+\mathbb{V}(\sin(\varepsilon_1^e))\Big(\frac{1}{n}\sum_{i=1}^{n}z_i\Big) + O_p(1/\sqrt{n}).\label{hatphihatbeta}
		\end{align}
		Therefore, combining \eqref{hatphihatphi} with \eqref{hatphihatbeta} derives that
		\begin{align*}
			&\widehat{z}_n^{\rm BE} =  \left(\frac{1}{n}\Phi^T\Phi - \mathbb{V}(\sin(\varepsilon_1^e))\right)^{-1}\!\!
			\left(\frac{1}{n}\Phi^T\widehat{\Gamma} - \mathbb{V}(\sin(\varepsilon_1^e))\frac{1}{n}\sum_{i=1}^{n}z_i\right)\\
			& =\left(\frac{K'}{n}(\Phi^o)^T\Phi^o + O_p\Big(\frac{1}{\sqrt{n}}\Big)\right)^{-1}\!\!
			\left(\frac{K'}{n} (\Phi^o)^T\Phi^o z^o + O_p\Big(\frac{1}{\sqrt{n}}\Big)\right)\\
		& =z^o + O_p(1/\sqrt{n}),
		\end{align*}
		where $K' = 1+e^{\sigma_e^2}\mathbb{V}(\cos(\varepsilon_1^e)) - e^{\sigma_e^2}\mathbb{V}(\sin(\varepsilon_1^e))=e^{-\sigma_e^2}$ is a non-zero constant by Lemma \ref{lem_moment_cos_sin}.
		Consequently, the BELS estimator $\widehat{z}_n^{\rm BE}$ is $\sqrt{n}$-consistent.
		
		This completes the proof.
		
		\subsection{Proof of Theorem 9}
		\label{proof_thm_var_3d}
		It follows from \eqref{hatphihatphi} and \eqref{hatphihatbeta} that
		\begin{align*}
			R_n
			&=   \begin{bmatrix}
				\Phi^T\Phi/n & \Phi^T\widehat{\Gamma}/n\\
				\widehat{\Gamma}^T\Phi/n & \widehat{\Gamma}^T\widehat{\Gamma}/n
			\end{bmatrix} \\ &= \begin{bmatrix}
				e^{-\sigma_e^2}(\Phi^o)^T\Phi^o/n & e^{-\sigma_e^2}(\Phi^o)^T\Phi^o z^o/n\\
				e^{-\sigma_e^2}(\Phi^o)^T\Phi^o z^o/n &\widehat{\Gamma}^T\widehat{\Gamma}/n
			\end{bmatrix}+\begin{bmatrix}
				\mathbb{V}(\sin(\varepsilon_1^e)) & \mathbb{V}(\sin(\varepsilon_1^e))\bar{z}\\
				\mathbb{V}(\sin(\varepsilon_1^e))\bar{z} &0
			\end{bmatrix}+ O_p(1/\sqrt{n}).
		\end{align*}
		
		The term $\widehat{\Gamma}^T\widehat{\Gamma}/n$ can be decomposed as
		\begin{align*}
			\frac{1}{n}\widehat{\Gamma}^T\widehat{\Gamma} = \frac{1}{n}(\Gamma+\widehat{\Gamma}-\Gamma)^T(\Gamma+\widehat{\Gamma}-\Gamma)=\frac{1}{n}\Gamma^T\Gamma + \frac{2}{n}\Gamma^T(\widehat{\Gamma}-\Gamma) + \frac{1}{n}\|\widehat{\Gamma}-\Gamma\|^2.
		\end{align*}
		For the last addend, we have
		\begin{equation*}
			\frac{1}{n}\|\widehat{\Gamma}-\Gamma\|^2 =  \frac{1}{n}\sum_{i=1}^n\sin^2(e_i)(\widehat{r_i}-r_i^o)^2 \leq \frac{1}{n}\sum_{i=1}^n(\widehat{r_i}-r_i^o)^2 = O_p\Big(\frac1n\Big)
		\end{equation*}
		following from $\widehat{r_i}-r_i^o = O_p(1/\sqrt{n})$.
		
		For the term $2\Gamma^T(\widehat{\Gamma}-\Gamma)/n$, we have
		\begin{align*}
			\frac{2}{n}\Gamma^T(\widehat{\Gamma}-\Gamma) = \frac{2}{n}\sum_{i=1}^n\big(\sin(e_i)r_i+\cos(e_i)z_i\big)\sin(e_i)(\widehat{r_i}-r_i^o)= O_p(1/\sqrt{n})
		\end{align*} 
		by the uniform boundedness of the sequence $\{(\sin(e_i)r_i^o+\cos(e_i)z_i)\sin(e_i)\}_{i=1}^n$
		under Assumption 5.
		
		For the term $\Gamma^T\Gamma/n$, there holds that
		\begin{equation*}
			\frac{1}{n}\Gamma^T\Gamma = \frac{1}{n}(\Phi^o)^T\Phi^o (z^o)^2 + \frac{1}{n} z^o(\Phi^o)^T\eta + \frac{1}{n}\eta^T\eta.
		\end{equation*}
		By \eqref{phiomega}, we have
		\begin{equation*}
			\frac{2}{n}z^o (\Phi^o)^T\eta = O_p(1/\sqrt{n}).
		\end{equation*}
		For the addend $\eta^T\eta/n$, we have
		\begin{align*}
			&\frac{1}{n}\eta^T\eta\\ =& \frac{1}{n}\sum_{i=1}^n\Big[-\big(\cos(\varepsilon_i^e)-e^{-\sigma_e^2/2}\big)\cos(e_i^o)z^o+\sin(\varepsilon_i^e)(d_i^o+\sin(e_i^o)z^o)\Big]^2\\
			=& \frac{1}{n}\sum_{i=1}^n\Big[(\cos(\varepsilon_i^e)-e^{-\sigma_e^2/2})^2\cos^2(e_i^o)(z^o)^2 - 2\big(\cos(\varepsilon_i^e)\!-\!e^{-\sigma_e^2/2}\big)\cos(e_i^o)z^o\sin(\varepsilon_i^e)\big(d_i^o+\sin(e_i^o)z^o\big)\\
			&~~~~~~~~~+ \sin^2(\varepsilon_i^e)\big(d_i^o+\sin(e_i^o)z^o\big)^2
			\Big].
		\end{align*}
		Note that
		\begin{align*}
			&~~~~(d_i^o + \sin(e_i^o)z^o)^2\\& = (d_i^o)^2 + 2d_i^o\sin(e_i^o)z^o + \sin^2(e_i^o)(z^o)^2\\
			&= (r_i^o)^2+(z_i-z^o)^2 + 2(z_i-z^o) z^o+ (z^o)^2-\cos^2(e_i^o)(z^o)^2\\
			& = (r_i^o)^2 + z_i^2 + (z^o)^2 - 2z_iz^o - (z^o)^2 + 2z_iz^o-\cos^2(e_i^o)(z^o)^2\\
			& = \widehat{r_i}^2 +z_i^2 -\cos^2(e_i^o)(z^o)^2 + O_p(1/n).
		\end{align*}
		Under Assumption 5, by Lemma \ref{lem_moment_cos_sin}  we derive
		\begin{align*}
			&\frac{1}{n}\sum_{i=1}^n (\cos(\varepsilon_i^e)-e^{-\sigma_e^2/2})^2\cos^2(e_i^o)(z^o)^2 = e^{\sigma_e^2}\mathbb{V}(\cos(\varepsilon_1^e))\Big(\frac{1}{n}(\Phi^o)^T\Phi^o\Big) (z^o)^2 + O_p(1/\sqrt{n}),\\
			&\frac{1}{n}\sum_{i=1}^n \sin^2(\varepsilon_i^e)(d_i^o+\sin(e_i^0)z^o)^2 = \mathbb{V}(\sin(\varepsilon_i^e))\frac{1}{n}\sum_{i=1}^n \left(\widehat{r_i}^2+z_i^2\right) - e^{\sigma_e^2}\mathbb{V}(\sin(\varepsilon_i^e))\Big(\frac{1}{n}(\Phi^o)^T\Phi^o \Big)(z^o)^2 + O_p(1/\sqrt{n}),\\
			&\frac{1}{n}\sum_{i=1}^n 2(\cos(\varepsilon_i^e)-e^{-\sigma_e^2/2})\cos(e_i^o)z^o\sin(\varepsilon_i^e)(d_i^o-\sin(e_i^o)z^o) = O_p(1/\sqrt{n}).
		\end{align*}
		Thus, we have
		\begin{align*}
			\frac{1}{n}\widehat{\Gamma}^T\widehat{\Gamma} &= e^{-\sigma_e^2}\frac{1}{n}(\Phi^o)^T\Phi^o (z^o)^2 \!+\! \mathbb{V}(\sin(\varepsilon_i^e))\frac{1}{n}\sum_{i=1}^n \big(\widehat{r_i}^2\!+\!z_i^2\big) + O_p(1/\sqrt{n}).
		\end{align*}
		Therefore, it holds that
		\begin{align*}
			R_n = e^{-\sigma_e^2}\frac{1}{n}\begin{bmatrix}
				(\Phi^o)^T\Phi^o & (\Phi^o)^T\Phi^o z^o\\
				(\Phi^o)^T\Phi^o z^o & (\Phi^o)^T\Phi^o (z^o)^2
			\end{bmatrix}
			+ \mathbb{V}(\sin(\varepsilon_1^e))U_n + O_p(1/\sqrt{n}).
		\end{align*}
		
		By the similar arguments as used in the proof of Theorem 4, we derive that
		\begin{align*}
			R_\infty = e^{-\sigma_e^2}\lim_{n\to \infty}\frac{1}{n}\begin{bmatrix}
				(\Phi^o)^T\Phi^o & (\Phi^o)^T\Phi^o z^o\\
				(\Phi^o)^T\Phi^o z^o & (\Phi^o)^T\Phi^o (z^o)^2
			\end{bmatrix} + \mathbb{V}(\sin(\varepsilon_1^e))U_\infty.
		\end{align*}
		Again, by the similar argument in the proof of Theorem 4, we obtain that
		\begin{equation*}
			\lim_{n\to \infty} \frac{1}{n}\begin{bmatrix}
				(\Phi^o)^T\Phi^o & (\Phi^o)^T\Phi^o z^o\\
				(\Phi^o)^T\Phi^o z^o & (\Phi^o)^T\Phi^o (z^o)^2
			\end{bmatrix}
		\end{equation*}
		is positive semi-definite and further is
		singular due to the linearly dependent  columns, and $U_{\infty}$ is non-singular. By Lemma \ref{lem_eigen}, we have
		\begin{equation*}
			\lambda_{\rm max}(R_\infty^{-1}\mathbb{V}(\sin(\varepsilon_1^e))U_{\infty})=1.
		\end{equation*}
		Then, we derive that
		\begin{align*}
			\widehat{v}_n^e = \frac{1}{\lambda_{\rm max}(R_n^{-1}U_n)} = \frac{1}{\lambda_{\rm max}(R_\infty^{-1}U_{\infty})} + O_p(1/\sqrt{n}) = \mathbb{V}(\sin(\varepsilon_1^e)) + O_p(1/\sqrt{n}).
		\end{align*}
		This completes the proof.

		\section*{Appendix B: Auxiliary results}
		\renewcommand{\thesection}{B}
		\setcounter{equation}{0}
		\renewcommand{\theequation}{B\arabic{equation}}
		\setcounter{lem}{0}
		\setcounter{subsection}{0}
		\renewcommand{\thelem}{B\arabic{lem}}
		
		This subsection contains the auxiliary lemmas used for the proofs.
		
		\begin{lem} 
			\label{lem_trigono} 
			For convenience,  we list the identities associated with the AOA localization.
			\begin{enumerate}[(i)]
				\item For the 2-D scenario, following the notations in Section II, there holds that for every $i=1,..,n$,
				\begin{subequations}
					\label{trigono_2d}
					\begin{align}
						& (x_i-x^o)\sin(a_i^o)-(y_i-y^o)\cos(a_i^o) = 0, \label{trigono_2d_2}\\
						& (x_i-x^o)\cos(a_i^o) + (y_i-y^o)\sin(a_i^o) = r_i^o.\label{trigono_2d_3}
					\end{align}
				\end{subequations}
				\item For the 3-D scenario, following the notations in Section III, it holds that for every $i=1,..,n$,
				\begin{subequations}
					\label{trigono_3d}
					\begin{align}
				&r_i^o\sin(e_i^o)-(z_i-z^o)\cos(e_i^o) = 0,\\
				&r_i^o\cos(e_i^o)+(z_i-z^o)\sin(e_i^o)=d_i^o.
					\end{align}
				\end{subequations}
			\end{enumerate}
		\end{lem}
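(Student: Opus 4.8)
The plan is to establish all four identities of Lemma~\ref{lem_trigono} by direct substitution from the definitions; no asymptotic or probabilistic machinery is involved, and the lemma is recorded only so that the linear reformulations \eqref{linear_2d} and \eqref{linear_3d} can be derived cleanly. I would begin, in the 2-D case, by noting that the definition $a_i^0 = \arctan\!\big((y_i-y^0)/(x_i-x^0)\big)$ together with the bearing convention (see the caveat below) is equivalent to $\cos(a_i^0) = (x_i-x^0)/r_i^0$ and $\sin(a_i^0) = (y_i-y^0)/r_i^0$, where $r_i^0 = \sqrt{(x_i-x^0)^2+(y_i-y^0)^2}$.

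Granting these expressions, \eqref{trigono_2d_2} is immediate, since its left-hand side equals $\big[(x_i-x^0)(y_i-y^0)-(y_i-y^0)(x_i-x^0)\big]/r_i^0 = 0$, and \eqref{trigono_2d_3} follows because its left-hand side equals $\big[(x_i-x^0)^2+(y_i-y^0)^2\big]/r_i^0 = (r_i^0)^2/r_i^0 = r_i^0$. Alternatively one can get \eqref{trigono_2d_2} straight from $\tan(a_i^0)=(y_i-y^0)/(x_i-x^0)$ by clearing denominators, and then deduce \eqref{trigono_2d_3} from \eqref{trigono_2d_2} together with $\sin^2(a_i^0)+\cos^2(a_i^0)=1$: if $C$ denotes the left-hand side of \eqref{trigono_2d_3}, then \eqref{trigono_2d_2} gives $C\cos(a_i^0)=x_i-x^0$ and $C\sin(a_i^0)=y_i-y^0$, so $C^2=(r_i^0)^2$ and the sign is fixed by $\cos(a_i^0)>0$. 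The 3-D identities in \eqref{trigono_3d} are handled identically: $e_i^0 = \arctan\!\big((z_i-z^0)/r_i^0\big)$ with $r_i^0\ge 0$ and $d_i^0 = \sqrt{(r_i^0)^2+(z_i-z^0)^2}$ give $\cos(e_i^0)=r_i^0/d_i^0$ and $\sin(e_i^0)=(z_i-z^0)/d_i^0$ (no branch ambiguity here, since $r_i^0\ge 0$), whence $r_i^0\sin(e_i^0)-(z_i-z^0)\cos(e_i^0)=0$ and $r_i^0\cos(e_i^0)+(z_i-z^0)\sin(e_i^0)=\big[(r_i^0)^2+(z_i-z^0)^2\big]/d_i^0=d_i^0$.

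There is no real obstacle in this lemma; the only point that needs a word of care is the branch of the arctangent in the 2-D identities. As written they presuppose $\cos(a_i^0)=(x_i-x^0)/r_i^0$ — equivalently, that the azimuth is measured via the two-argument $\operatorname{atan2}$, or that one restricts to the configuration in which $x_i-x^0$ keeps a fixed sign — whereas the principal-branch $\arctan$ would instead put $\operatorname{sgn}(x_i-x^0)\,r_i^0$ on the right of \eqref{trigono_2d_3}. This distinction is harmless downstream, since only $(r_i^0)^2$ and the symmetry of the distribution of $\sin(\varepsilon_i^a)$ are ever used; I would simply fix this convention at the outset of the proof and then carry out the substitutions above.
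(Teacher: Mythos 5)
Your proof is correct and takes essentially the same approach as the paper, which disposes of the lemma in one line (``straightforward by checking the geometry''); you simply carry out the direct substitution $\cos(a_i^0)=(x_i-x^0)/r_i^0$, $\sin(a_i^0)=(y_i-y^0)/r_i^0$ (and its 3-D analogue) that the paper leaves implicit. Your remark that the principal-branch $\arctan$ would place $\operatorname{sgn}(x_i-x^0)\,r_i^0$ on the right of \eqref{trigono_2d_3}, so that the identity as stated tacitly assumes an $\operatorname{atan2}$-type bearing convention, is a genuine subtlety the paper glosses over, and your observation that it is harmless downstream (only $(r_i^0)^2$ and the symmetry of $\sin(\varepsilon_i^a)$ are used) is also correct.
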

		\begin{proof}
			The proof is straightforward by checking the geometry.
		\end{proof}
		
		\begin{lem}\cite[Lemma B2]{Hu2025_arxiv}
\label{lem_converge_samplemean}
    Let $\mu_n$ be an empirical distribution with a compact support $\mathcal{Q} \subset \mathbb{R}^m$, which converges to a distribution $\mu$. Then for any continuous and bounded function $f(\cdot)$ on $\mathcal{Q}$, it holds that $\mathbb{E}_\mu(f(x))$ exists and 
    \begin{equation*}
        \int f(x) d\mu_n(x) \to \int f(x) d\mu(x) = \mathbb{E}_\mu(f(x)),~n \to \infty,
    \end{equation*}
    where the expectation is taken over $x$ with respect to $\mu$.
    
    Further, for any continuous and bounded function $f(x,c)$ on $\mathcal{Q} \times \mathcal{Q}^0$, where $\mathcal{Q}^0$ is compact as well, it holds that $\mathbb{E}_\mu(f(x,c))$ exists and 
    \begin{equation*}
        \int f(x,c) d\mu_n(x) \to \int f(x,c) d\mu(x) = \mathbb{E}_\mu(f(x,c)),~ n \to \infty
    \end{equation*}
    uniformly on $\mathcal{Q}^0$,  where the expectation is taken over $x$ with respect to $\mu$.
\end{lem}

		\begin{lem}\cite[Equation (5-73)]{Papoulis2002}
			\label{lem_moment_cos_sin}
			Let $X$ be a Gaussian random variable with mean zero and variance $\sigma^2$. Thus, for any $k=0,1,2,...$, there holds that
			\begin{equation*}
				\mathbb{E}X^{2k} = \frac{(2k)!}{2^k(k!)}\sigma^{2k}.
			\end{equation*}
Further, there holds that
			\begin{subequations}
				\begin{align}
					&\mathbb{E}(\cos(X)) = e^{-\sigma^2/2},~\mathbb{E}(\sin(X)) = 0,~\mathbb{E}(\sin(X)\cos(X)) = 0,\\
					&\mathbb{V}(\cos(X)) = \frac{1}{2}(e^{-2\sigma^2} + 1 -2e^{-\sigma^2}),\\
					&\mathbb{V}(\sin(X)) = \frac{1}{2}(1-e^{-2\sigma^2}).\label{lm10-5}
				\end{align}
			\end{subequations}
			Moreover, the $4$-th moments of $\sin(X)$ and $\cos(X)$ are finite.
		\end{lem}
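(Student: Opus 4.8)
The plan is to reduce every claim to the characteristic function of a centered Gaussian, namely $\mathbb{E}(e^{\mathrm{i}tX}) = e^{-\sigma^2 t^2/2}$ for $t\in\mathbb{R}$, combined with elementary trigonometric product-to-sum identities. First I would take $t=1$ and separate real and imaginary parts: since $e^{\mathrm{i}X} = \cos(X) + \mathrm{i}\sin(X)$ while $e^{-\sigma^2/2}$ is real, this at once gives $\mathbb{E}(\cos(X)) = e^{-\sigma^2/2}$ and $\mathbb{E}(\sin(X)) = 0$. (The latter also follows from the symmetry of the law of $X$ about $0$ and the oddness of $\sin$.) The characteristic function itself may be cited as standard, or derived by summing the Taylor series of $\cos$ and $\sin$ termwise using Lemma~\ref{lem_moments_gaussian}, the interchange being justified since $\sum_k \mathbb{E}|X|^{2k}/(2k)! < \infty$.

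Next I would treat the second-order quantities by linearizing the squares. From $\sin(X)\cos(X) = \tfrac12\sin(2X)$ and the oddness of $\sin$ (equivalently, the imaginary part of $\mathbb{E}(e^{2\mathrm{i}X}) = e^{-2\sigma^2}$), I get $\mathbb{E}(\sin(X)\cos(X)) = \tfrac12\mathbb{E}(\sin(2X)) = 0$. Using $\cos^2(X) = \tfrac12(1+\cos(2X))$ together with $\mathbb{E}(\cos(2X)) = e^{-2\sigma^2}$ yields $\mathbb{E}(\cos^2(X)) = \tfrac12(1+e^{-2\sigma^2})$, whence $\mathbb{V}(\cos(X)) = \mathbb{E}(\cos^2 X) - (\mathbb{E}\cos X)^2 = \tfrac12(1+e^{-2\sigma^2}) - e^{-\sigma^2} = \tfrac12(e^{-2\sigma^2} + 1 - 2e^{-\sigma^2})$. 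Symmetrically, $\sin^2(X) = \tfrac12(1-\cos(2X))$ gives $\mathbb{E}(\sin^2 X) = \tfrac12(1-e^{-2\sigma^2})$, and since $\mathbb{E}(\sin X)=0$ this is also $\mathbb{V}(\sin X)$, which is \eqref{lm10-5}.

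Finally, for the fourth moments there is essentially nothing to estimate: $|\sin(X)|\le 1$ and $|\cos(X)|\le 1$ pointwise, so $\mathbb{E}(\sin^4 X)\le 1$ and $\mathbb{E}(\cos^4 X)\le 1$ are trivially finite (if an exact value were wanted, one could use $\cos^4 X = \tfrac18(3 + 4\cos 2X + \cos 4X)$ and evaluate the characteristic function at $t=2,4$, but this is not needed). I do not foresee any genuine obstacle in this proof; the only point deserving a word of care is the use of the Gaussian characteristic function, which is either cited or obtained from Lemma~\ref{lem_moments_gaussian} as indicated above.
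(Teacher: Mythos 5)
Your proposal is correct and follows essentially the same route as the paper: the paper's termwise Taylor-series evaluation of $\mathbb{E}(\cos(X))$ via Lemma~\ref{lem_moments_gaussian} is exactly the (real part of the) characteristic-function computation you describe, the sine expectations are handled by oddness in both arguments, and the variances are obtained from the same double-angle identities applied to the Gaussian variable $2X$. The only small divergence is the fourth moments, where your appeal to $|\sin(X)|\le 1$ and $|\cos(X)|\le 1$ is simpler than the paper's explicit expansion $\sin^4(X)=\cos(4X)/8-\cos(2X)/2+3/8$.
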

		
		\begin{proof}
			The Taylor series of $\cos(X)$ is given by
			\begin{equation*}
				\cos(X) = \sum_{k=0}^\infty\frac{(-1)^kX^{2k}}{(2k)!}
			\end{equation*}
			with convergence domain $(-\infty,\infty)$. Thus, it holds that
			\begin{align*}
				\mathbb{E}(\cos(X))&=\mathbb{E}\left[\sum_{k=0}^\infty\frac{(-1)^kX^{2k}}{(2k)!}\right]\\
				&=\sum_{k=0}^\infty\frac{(-1)^k}{(2k)!}\frac{(2k)!}{2^k(k!)}\sigma^{2k}\\
				&=e^{-\sigma^2/2}.
			\end{align*}
			
			It is obvious that $\mathbb{E}(\sin(X)) = 0$ and $\mathbb{E}(\sin(X)\cos(X)) = 0$ since $\sin(\cdot)$ and $\sin(\cdot)\cos(\cdot)$ are odd functions and $\mathbb{E}X=0$.
			
			Next we derive the variance of $\cos(X)$ and $\sin(X)$. Notice that $2X$ is a Gaussian random variable with mean 0 and variance $4\sigma^2$, it holds that
			\begin{equation*}
				\mathbb{V}(\sin(X)) = \mathbb{E}(\sin^2(X)) = \mathbb{E}\left(\frac{1-\cos(2X)}{2}\right) = \frac{1}{2}(1-e^{-2\sigma^2}).
			\end{equation*}
			Moreover, it holds that
			\begin{align*}
				&\mathbb{V}(\cos(X)) = \mathbb{E}(\cos^2(X))-(\mathbb{E}(\cos(X)))^2\\ =~& 1-\mathbb{E}(\sin^2(X)) - e^{-\sigma^2} = \frac{1}{2}(e^{-2\sigma^2} + 1 -2e^{-\sigma^2}).
			\end{align*}
			
			Notice that $\sin^4(X) = \cos(4X)/8-\cos(2X)/2+3/8$, and the mean of $\cos(2X)$ and $\cos(4X)$ exist, thus $\mathbb{E}(\sin^4(X))$ exists and is finite. The similar result holds for $\cos^4(X)$.
		\end{proof}
		
		\begin{lem}
			\cite[Lemma 2.1]{Stoica1982}
			\label{lem_eigen}
			Let both $C$ and $S$ be symmetric and positive semidefinite matrices such that $Q = C+S$ is positive definite.
			Then, $\lambda_{\max}(Q^{-1}S) = 1$ if and only if $C$ is singular.
		\end{lem}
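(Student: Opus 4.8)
The plan is to reduce the (non-symmetric) matrix $Q^{-1}S$ to a symmetric one by a congruence/similarity transformation and then read off the extremal eigenvalue. Since $Q$ is symmetric positive definite it has a symmetric positive definite square root $Q^{1/2}$ with inverse $Q^{-1/2}$. First I would note that $Q^{-1}S$ is similar to $\widetilde S \eq Q^{-1/2} S Q^{-1/2}$ (conjugate $Q^{-1}S$ by $Q^{1/2}$, using $Q^{-1}=Q^{-1/2}Q^{-1/2}$), so the two matrices have the same spectrum; in particular $\lambda_{\rm max}(Q^{-1}S)=\lambda_{\rm max}(\widetilde S)$, and because $\widetilde S$ is symmetric this value is a genuine eigenvalue.

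Next, writing $S = Q - C$ gives $\widetilde S = Q^{-1/2}(Q-C)Q^{-1/2} = I - \widetilde C$, where $\widetilde C \eq Q^{-1/2} C Q^{-1/2}$ is symmetric and, since $C\succeq 0$ and $Q^{-1/2}$ is invertible, positive semidefinite. Hence the eigenvalues of $\widetilde S$ are exactly the numbers $1-\mu$ as $\mu$ ranges over the eigenvalues of $\widetilde C$, all of which are nonnegative. Therefore $\lambda_{\rm max}(\widetilde S) = 1 - \lambda_{\rm min}(\widetilde C) \le 1$, with equality if and only if $\lambda_{\rm min}(\widetilde C)=0$, i.e. if and only if $\widetilde C$ is singular.

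Finally, I would close the loop by observing that $\widetilde C = Q^{-1/2} C Q^{-1/2}$ is a congruence of $C$ by the invertible matrix $Q^{-1/2}$, so $\mathrm{rank}(\widetilde C) = \mathrm{rank}(C)$; thus $\widetilde C$ is singular precisely when $C$ is singular. Chaining these equivalences yields $\lambda_{\rm max}(Q^{-1}S) = 1 \iff C \text{ is singular}$, which is the claim. There is no genuinely hard step here: the only points that require care are that $Q^{-1}S$ is in general not symmetric, so one must argue via the similarity transform to $\widetilde S$ rather than about $Q^{-1}S$ directly, and that invertibility of $Q^{-1/2}$ is what lets one transfer both the spectral bound $\widetilde S \preceq I$ and the rank (hence singularity) statement from $\widetilde C$ back to $C$.
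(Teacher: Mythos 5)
Your proof is correct. Note that the paper does not actually prove this lemma — it is quoted verbatim from \cite[Lemma 2.1]{Stoica1982} with no argument supplied — so there is no in-paper proof to compare against; your congruence argument (diagonalize via $Q^{-1/2}$, write $\widetilde S = I - \widetilde C$, and transfer singularity back to $C$ through the rank-preserving congruence) is the standard and complete way to establish it, and every step, including the care taken over the non-symmetry of $Q^{-1}S$, checks out.
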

		
		\begin{lem}\cite[Theorem 14.4-1 on page 476]{Bishop2007}
			\label{lem_sqrtn}
			Let $\{X_k\}$ be a sequence of independent random variables with $\mathbb{E}X_k = 0$ and $\mathbb{E}X_k^2 \leq C < \infty$ for all $k$ and a positive constant $C$. Then, there holds that $\frac1n\sum_{k=1}^{n}X_k = O_p(1/\sqrt{n})$.
		\end{lem}
 
	\bibliographystyle{plain}
	\bibliography{ref_abbv.bib}

@article{Hu2025_arxiv,
  title={R{S}{S}-Based Localization: Ensuring Consistency and Asymptotic Efficiency},
  author={Hu, Shenghua and Zeng, Guangyang and Xue, Wenchao and Fang, Haitao and Wu, Junfeng and Mu, Biqiang},
  journal={arXiv preprint arXiv:2505.13070},
  year={2025}
}

@Article{Hua2015,
  author   = {Hua, Jingyu and Meng, Limin and Zhou, Kai and Jiang, Bin and Wang, Dongming},
  journal  = {IEEE J. Sel. Topics Signal Process.},
  title    = {Accurate and Simple Wireless Localizations Based on Time Product of Arrival in the DDM-NLOS Propagation Environment},
  year     = {2015},
  number   = {2},
  pages    = {239-246},
  volume   = {9},
  fjournal = {IEEE Journal of Selected Topics in Signal Processing},
}

@Article{GomezTornero2018,
  author   = {Gómez-Tornero, José Luis and Cañete-Rebenaque, David and López-Pastor, Jose Antonio and Martínez-Sala, Alejandro Santos},
  journal  = {IEEE J. Sel. Topics Signal Process.},
  title    = {Hybrid Analog-Digital Processing System for Amplitude-Monopulse RSSI-Based MiMo WiFi Direction-of-Arrival Estimation},
  year     = {2018},
  number   = {3},
  pages    = {529-540},
  volume   = {12},
  fjournal = {IEEE Journal of Selected Topics in Signal Processing},
}

@INPROCEEDINGS{Meles2021,
  author={Meles, Mehari and Rajasekaran, Akash and Ruttik, Kalle and Virrankoski, Reino and Jäntti, Riku},
  booktitle={2021 24th International Symposium on Wireless Personal Multimedia Communications (WPMC)}, 
  title={Measurement based performance evaluation of drone self-localization using AoA of cellular signals}, 
  year={2021},
  volume={},
  number={},
  pages={1-5}}

@Article{Wang2014,
  author   = {Wang, Wen-Qin and Shao, Huaizong},
  journal  = {IEEE J. Sel. Topics Signal Process.},
  title    = {Range-Angle Localization of Targets by A Double-Pulse Frequency Diverse Array Radar},
  year     = {2014},
  number   = {1},
  pages    = {106-114},
  volume   = {8},
  fjournal = {IEEE Journal of Selected Topics in Signal Processing},
}

@Article{Gezici2005,
  author   = {Gezici, S. and Zhi Tian and Giannakis, G.B. and Kobayashi, H. and Molisch, A.F. and Poor, H.V. and Sahinoglu, Z.},
  journal  = {IEEE Signal Process Mag.},
  title    = {Localization via ultra-wideband radios: a look at positioning aspects for future sensor networks},
  year     = {2005},
  number   = {4},
  pages    = {70-84},
  volume   = {22},
  fjournal = {IEEE Signal Processing Magazine},
}

@Article{Li2002,
  author   = {Dan Li and Wong, K.D. and Yu Hen Hu and Sayeed, A.M.},
  journal  = {IEEE Signal Process Mag.},
  title    = {Detection, classification, and tracking of targets},
  year     = {2002},
  number   = {2},
  pages    = {17-29},
  volume   = {19},
  fjournal = {IEEE Signal Processing Magazine},
}

@Article{Niu2018,
  author   = {Niu, Ruixin and Vempaty, Aditya and Varshney, Pramod K.},
  journal  = {Proc. IEEE},
  title    = {Received-Signal-Strength-Based Localization in Wireless Sensor Networks},
  year     = {2018},
  number   = {7},
  pages    = {1166-1182},
  volume   = {106},
  fjournal = {Proceedings of the IEEE},
}

@Article{Wu2019,
  author   = {Wu, Kai and Ni, Wei and Su, Tao and Liu, Ren Ping and Guo, Y. Jay},
  journal  = {IEEE J. Sel. Topics Signal Process.},
  title    = {Exploiting Spatial-Wideband Effect for Fast AoA Estimation at Lens Antenna Array},
  year     = {2019},
  number   = {5},
  pages    = {902-917},
  volume   = {13},
  fjournal = {IEEE Journal of Selected Topics in Signal Processing},
}

@Article{Aidala1979,
  author   = {Aidala, Vincent J.},
  journal  = {IEEE Trans. Aerosp. Electron. Syst.},
  title    = {Kalman Filter Behavior in Bearings-Only Tracking Applications},
  year     = {1979},
  number   = {1},
  pages    = {29-39},
  volume   = {AES-15},
  fjournal = {IEEE Transactions on Aerospace and Electronic Systems},
  keywords = {Filters;Target tracking;Marine vehicles;Covariance matrix;Underwater tracking;Nonlinear equations;Sea measurements;Estimation error;Stability;Sonar applications},
}

@book{Gourieroux1995,
	author = {Christian Gourieroux and Alain Monfort},
	date-added = {2023-06-01 14:23:31 +0800},
	date-modified = {2023-06-01 14:23:31 +0800},
	publisher = {Cambridge University Press},
	title = {Statistics and Econometric Models},
	volume = {1},
	year = {1995}}

@book{Bishop2007,
	address = {New York},
	author = {Bishop, Yvonne M and Fienberg, Stephen E and Holland, Paul W},
	date-added = {2023-06-01 14:23:31 +0800},
	date-modified = {2024-08-02 13:54:08 +0800},
	publisher = {Springer-Verlag},
	title = {Discrete multivariate analysis: {T}heory and practice},
	year = {2007}}

@Article{Lingren1978,
  author   = {Lingren, Alien G. and Gong, Kai F.},
  journal  = {IEEE Trans. Aerosp. Electron. Syst.},
  title    = {Position and Velocity Estimation Via Bearing Observations},
  year     = {1978},
  number   = {4},
  pages    = {564-577},
  volume   = {AES-14},
  fjournal = {IEEE Transactions on Aerospace and Electronic Systems},
  keywords = {Motion analysis;State estimation;Motion measurement;Observers;Aerodynamics;Leg;Position measurement;Velocity measurement;Target tracking;Performance evaluation},
}

@Article{Nardone1984,
  author   = {Nardone, S. and Lindgren, A. and Kai Gong},
  journal  = {IEEE Trans. Autom. Control},
  title    = {Fundamental properties and performance of conventional bearings-only target motion analysis},
  year     = {1984},
  number   = {9},
  pages    = {775-787},
  volume   = {29},
  fjournal = {IEEE Transactions on Automatic Control},
  keywords = {Motion analysis;Maximum likelihood estimation;Geometry;Position measurement;Velocity measurement;Noise measurement;Observability;Motion estimation;Performance analysis;Nonlinear equations},
}

@Article{Angjelichinoski2015,
  author   = {Angjelichinoski, Marko and Denkovski, Daniel and Atanasovski, Vladimir and Gavrilovska, Liljana},
  journal  = {IEEE Trans. Inf. Theory},
  title    = {Cramér–Rao Lower Bounds of {RSS}-Based Localization With Anchor Position Uncertainty},
  year     = {2015},
  number   = {5},
  pages    = {2807-2834},
  volume   = {61},
  fjournal = {IEEE Transactions on Information Theory},
  keywords = {Vectors;Estimation;Uncertainty;Joints;Shadow mapping;Linear matrix inequalities;Bayes methods;RSS;SPEAR;Cramer-Rao Lower Bound;Fisher Infomation Matrix;Equivalent Fisher Information Matrix;Information ellipse;Error ellipse;RSS;SPEAR;Cramér-Rao lower bound;Fisher information matrix;equivalent Fisher information matrix;information ellipse;error ellipse},
}

@Article{Godrich2010,
  author   = {Godrich, Hana and Haimovich, Alexander M. and Blum, Rick S.},
  journal  = {IEEE Trans. Inf. Theory},
  title    = {Target Localization Accuracy Gain in {MIMO} Radar-Based Systems},
  year     = {2010},
  number   = {6},
  pages    = {2783-2803},
  volume   = {56},
  fjournal = {IEEE Transactions on Information Theory},
  keywords = {MIMO;Radar antennas;Doppler radar;Parameter estimation;Sensor systems;Receiving antennas;Radar detection;Direction of arrival estimation;Transmission line matrix methods;Bandwidth;Adaptive array;Cramer–Rao bound;multiple-input multiple-output (MIMO) radar;spatial processing;target localization},
}

@Article{Soares2015,
  author   = {Soares, Cláudia and Xavier, João and Gomes, João},
  journal  = {IEEE Trans. Signal Process.},
  title    = {Simple and Fast Convex Relaxation Method for Cooperative Localization in Sensor Networks Using Range Measurements},
  year     = {2015},
  number   = {17},
  pages    = {4532-4543},
  volume   = {63},
  fjournal = {IEEE Transactions on Signal Processing},
  keywords = {Signal processing algorithms;Robot sensing systems;Convergence;Maximum likelihood estimation;Noise measurement;Optimization;Complexity theory;Convex relaxations;distributed algorithms;distributed iterative sensor localization;maximum likelihood estimation;nonconvex optimization;wireless sensor networks},
}

@Article{Chan1994,
  author   = {Chan, Y.T. and Ho, K.C.},
  journal  = {IEEE Trans. Signal Process.},
  title    = {A simple and efficient estimator for hyperbolic location},
  year     = {1994},
  number   = {8},
  pages    = {1905-1915},
  volume   = {42},
  fjournal = {IEEE Transactions on Signal Processing},
  keywords = {Maximum likelihood estimation;Iterative methods;Position measurement;Military computing;Interpolation;Sonar;Radar;Array signal processing;Nonlinear equations;Coordinate measuring machines},
}

@Article{Ho2012,
  author   = {Ho, K. C.},
  journal  = {IEEE Trans. Signal Process.},
  title    = {Bias Reduction for an Explicit Solution of Source Localization Using {TDOA}},
  year     = {2012},
  number   = {5},
  pages    = {2101-2114},
  volume   = {60},
  fjournal = {IEEE Transactions on Signal Processing},
  keywords = {Closed-form solutions;Noise measurement;Position measurement;Signal to noise ratio;Vectors;Covariance matrix;Bias;localization;FDOA;TDOA},
}

@Article{Sun2019,
  author   = {Sun, Yimao and Ho, K. C. and Wan, Qun},
  journal  = {IEEE Trans. Signal Process.},
  title    = {Solution and Analysis of {TDOA} Localization of a Near or Distant Source in Closed Form},
  year     = {2019},
  number   = {2},
  pages    = {320-335},
  volume   = {67},
  fjournal = {IEEE Transactions on Signal Processing},
  keywords = {Direction-of-arrival estimation;Estimation;Closed-form solutions;Minimization;Optimization;Sensor arrays;Closed-form solution;direction of arrival (DOA);modified polar representation (MPR);point positioning;source localization;time difference of arrival (TDOA)},
}

@Article{Liu2016,
  author   = {Liu, Chen and Fang, Dingyi and Yang, Zhe and Jiang, Hongbo and Chen, Xiaojiang and Wang, Wei and Xing, Tianzhang and Cai, Lin},
  journal  = {IEEE Trans. Wireless Commun.},
  title    = {R{S}{S} Distribution-Based Passive Localization and Its Application in Sensor Networks},
  year     = {2016},
  number   = {4},
  pages    = {2883-2895},
  volume   = {15},
  fjournal = {IEEE Transactions on Wireless Communications},
  keywords = {Diffraction;Fresnel reflection;Monitoring;Trajectory;Wireless communication;Radar tracking;Receivers;Terms—Passive object localization;RSS distribution;WSNs;Passive object localization;RSS distribution;WSNs},
}

@book{Papoulis2002,
  title={Probability, Random Variables, and Stochastic Processes},
  author={Papoulis, A. and Pillai, S.U.},
  year={2002},
  publisher={McGraw-Hill}
}

@Article{Stoica1982,
  author        = {Petre Stoica And Torsten S\"{o}derstr\"{o}m},
  journal       = {Int. J. Control},
  title         = {Bias correction in least-squares identification},
  year          = {1982},
  number        = {3},
  pages         = {449--457},
  volume        = {35},
  date-added    = {2023-06-01 14:23:31 +0800},
  date-modified = {2023-06-01 14:23:31 +0800},
  fjournal      = {International Journal of Control},
}

@Article{Luo2019,
  author   = {Luo, Ji-An and Shao, Xue-Hui and Peng, Dong-Liang and Zhang, Xiao-Ping},
  journal  = {IEEE Sensors J.},
  title    = {A Novel Subspace Approach for Bearing-Only Target Localization},
  year     = {2019},
  number   = {18},
  pages    = {8174-8182},
  volume   = {19},
  fjournal = {IEEE Sensors Journal},
  keywords = {Sensors;Noise measurement;Position measurement;Mathematical model;Maximum likelihood estimation;Geometry;Target localization;bearing-only;subspace method;least-squares},
}

@Article{Zeng2022,
  author   = {Zeng, Guangyang and Mu, Biqiang and Chen, Jiming and Shi, Zhiguo and Wu, Junfeng},
  journal  = {IEEE Trans. Signal Process.},
  title    = {Global and Asymptotically Efficient Localization From Range Measurements},
  year     = {2022},
  pages    = {5041-5057},
  volume   = {70},
  fjournal = {IEEE Transactions on Signal Processing},
}

@Article{Zeng2024,
  author   = {Zeng, Guangyang and Mu, Biqiang and Shi, Ling and Chen, Jiming and Wu, Junfeng},
  journal  = {IEEE Trans. Inf. Theory},
  title    = {Consistent and Asymptotically Efficient Localization from Range-Difference Measurements},
  year     = {2024},
  fjournal = {IEEE Transactions on Information Theory},
}

@article{Jennrich1969,
 ISSN = {00034851},
 author = {Robert I. Jennrich},
 journal = {The Annals of Mathematical Statistics},
 number = {2},
 pages = {633--643},
 publisher = {Institute of Mathematical Statistics},
 title = {Asymptotic Properties of Non-Linear Least Squares Estimators},
 urldate = {2024-03-21},
 volume = {40},
 year = {1969}
}

@article{Mu2017,
title = {A globally consistent nonlinear least squares estimator for identification of nonlinear rational systems},
journal = {Automatica},
volume = {77},
pages = {322-335},
year = {2017},
issn = {0005-1098},
author = {Biqiang Mu and Er-Wei Bai and Wei Xing Zheng and Quanmin Zhu}
}

@Book{Lehmann1998,
  Title = {Theory of Point Estimation},
  Author = {Erich L. Lehmann and George Casella},
  Publisher = {Springer-Verlag},
  Year = {1998},
  Address = {New York, NY, USA},
  Edition = {Second}
}

@Article{So2011,
  author   = {So, Hing Cheung and Lin, Lanxin},
  journal  = {IEEE Trans. Signal Process.},
  title    = {Linear Least Squares Approach for Accurate Received Signal Strength Based Source Localization},
  year     = {2011},
  number   = {8},
  pages    = {4035-4040},
  volume   = {59},
  fjournal = {IEEE Transactions on Signal Processing},
  keywords = {Sensors;Noise;Position measurement;Covariance matrix;Fading;Measurement errors;Mathematical model;Best linear unbiased estimator;positioning algorithm;received signal strength;shadow fading},
}

@Article{Wang2015,
  author   = {Wang, Yue and Ho, K. C.},
  journal  = {IEEE Trans. Wireless Commun.},
  title    = {An Asymptotically Efficient Estimator in Closed-Form for 3-{D} {AOA} Localization Using a Sensor Network},
  year     = {2015},
  number   = {12},
  pages    = {6524-6535},
  volume   = {14},
  fjournal = {IEEE Transactions on Wireless Communications},
  keywords = {Position measurement;Noise measurement;Covariance matrices;Measurement uncertainty;Three-dimensional displays;Angle of Arrival;Bias;Sensor Position Uncertainty;Source Localization;Wireless Sensor Network;Angle of arrival;bias;sensor position uncertainty;source localization;wireless sensor network},
}

@Article{Ho2006,
  author   = {Ho, K.C. and Chan, Y.T.},
  journal  = {IEEE Trans. Signal Process.},
  title    = {An asymptotically unbiased estimator for bearings-only and Doppler-bearing target motion analysis},
  year     = {2006},
  number   = {3},
  pages    = {809-822},
  volume   = {54},
  fjournal = {IEEE Transactions on Signal Processing},
  keywords = {Motion estimation;Motion analysis;Eigenvalues and eigenfunctions;Trajectory;Motion measurement;Nonlinear equations;Instruments;Maximum likelihood estimation;Target tracking;Least squares methods;Bearings-only tracking;constrained least squares;Doppler-bearing tracking;sequential estimation},
}

@Article{Dogancay2006,
  author   = {Dogancay, K.},
  journal  = {IEEE Trans. Signal Process.},
  title    = {Bias compensation for the bearings-only pseudolinear target track estimator},
  year     = {2006},
  number   = {1},
  pages    = {59-68},
  volume   = {54},
  fjournal = {IEEE Transactions on Signal Processing},
  keywords = {Target tracking;Instruments;Recursive estimation;Motion estimation;Maximum likelihood estimation;Newton method;Least squares methods;Iterative algorithms;Least squares approximation;Military computing;Bearings-only tracking;bias compensation;Gauss-Newton (GN) algorithm;instrumental variables;least squares;pseudolinear estimator},
}

@Article{Wang2023,
  author   = {Wang, Gang and Xiang, Peng and Ho, K. C.},
  journal  = {IEEE Trans. Wireless Commun.},
  title    = {Bias Reduced Semidefinite Relaxation Method for 3-{D} Moving Object Localization Using {AOA}},
  year     = {2023},
  number   = {11},
  pages    = {7377-7392},
  volume   = {22},
  fjournal = {IEEE Transactions on Wireless Communications},
  keywords = {Location awareness;Time measurement;Closed-form solutions;Maximum likelihood estimation;Wireless communication;Velocity measurement;Position measurement;Localization;angle-of-arrival (AOA);constrainted weighted least squares (CWLS);semidefinite relaxation (SDR);bias reduction},
}

@Article{Shao2014,
  author   = {Shao, Hua-Jie and Zhang, Xiao-Ping and Wang, Zhi},
  journal  = {IEEE Trans. Signal Process.},
  title    = {Efficient Closed-Form Algorithms for {AOA} Based Self-Localization of Sensor Nodes Using Auxiliary Variables},
  year     = {2014},
  number   = {10},
  pages    = {2580-2594},
  volume   = {62},
  fjournal = {IEEE Transactions on Signal Processing},
  keywords = {Maximum likelihood estimation;Signal processing algorithms;Noise;Wireless sensor networks;Vectors;Closed-form solutions;Angle of arrival;auxiliary variables;bias compensated auxiliary variable based pseudo-linear estimator;closed-form pseudo-linear estimator;node self-localization;weighted instrumental variables;wireless sensor networks},
}

@Article{Sun2020,
  author   = {Sun, Yimao and Ho, K. C. and Wan, Qun},
  journal  = {IEEE Trans. Signal Process.},
  title    = {Eigenspace Solution for {AOA} Localization in Modified Polar Representation},
  year     = {2020},
  pages    = {2256-2271},
  volume   = {68},
  fjournal = {IEEE Transactions on Signal Processing},
  keywords = {Direction-of-arrival estimation;Position measurement;Covariance matrices;Maximum likelihood estimation;Sun;Convergence;Angle of arrival (AOA);algebraic solution;Cramér Rao Lower Bound (CRLB);modified polar representation (MPR)},
}

@Article{Zheng2019,
  author   = {Zheng, Yang and Sheng, Min and Liu, Junyu and Li, Jiandong},
  journal  = {IEEE Wireless Commun. Lett.},
  title    = {Exploiting {A}o{A} Estimation Accuracy for Indoor Localization: A Weighted {A}o{A}-Based Approach},
  year     = {2019},
  number   = {1},
  pages    = {65-68},
  volume   = {8},
  fjournal = {IEEE Wireless Communications Letters},
  keywords = {Covariance matrices;Multiple signal classification;Sensors;Receiving antennas;Estimation error;Angle-of-arrival (AoA);asymptotic variance;weighted localization method},
}

@Article{Wang2018,
  author   = {Wang, Yue and Ho, K. C.},
  journal  = {IEEE Trans. Wireless Commun.},
  title    = {Unified Near-Field and Far-Field Localization for {AOA} and Hybrid {AOA-TDOA} Positionings},
  year     = {2018},
  number   = {2},
  pages    = {1242-1254},
  volume   = {17},
  fjournal = {IEEE Transactions on Wireless Communications},
  keywords = {Sensors;Position measurement;Direction-of-arrival estimation;Maximum likelihood estimation;Wireless sensor networks;Algorithm design and analysis;AOA;AOA-TDOA;DOA estimation;HBB bound;localization;source positioning},
}

@Article{Wang2012,
  author   = {Wang, Zhi and Luo, Ji-An and Zhang, Xiao-Ping},
  journal  = {IEEE Trans. Signal Process.},
  title    = {A Novel Location-Penalized Maximum Likelihood Estimator for Bearing-Only Target Localization},
  year     = {2012},
  number   = {12},
  pages    = {6166-6181},
  volume   = {60},
  fjournal = {IEEE Transactions on Signal Processing},
  keywords = {Maximum likelihood estimation;Vectors;Observers;Least squares approximation;Bearing-only target localization;Cramér-Rao lower bound;location-penalized maximum likelihood estimator (LPML);traditional bearing maximum likelihood estimator (TBML)},
}

@InProceedings{Wang2018a,
  author    = {Wang, Yue and Ho, K.C. and Wang, Gang},
  booktitle = {2018 IEEE International Conference on Acoustics, Speech and Signal Processing (ICASSP)},
  title     = {A Unified Estimator for Source Positioning and {DOA} Estimation Using {AOA}},
  year      = {2018},
  pages     = {3201-3205},
  keywords  = {Direction-of-arrival estimation;Sensors;Maximum likelihood estimation;Position measurement;Cost function;Noise measurement;DOA Estimation;Localization;Position;AOA},
}

@Article{Wang2007,
  author    = {Wang, Ding and Zhang, Li and Wu, Ying},
  journal   = {Sci. China, Ser. F: Info. Sci.},
  title     = {Constrained total least squares algorithm for passive location based on bearing-only measurements},
  year      = {2007},
  pages     = {576--586},
  volume    = {50},
  fjournal  = {Science in China Series F: Information Sciences},
  publisher = {Springer},
}

@article{Dogancay2005,
  title={Bearings-only target localization using total least squares},
  author={Do{\u{g}}an{\c{c}}ay, Kutluyil},
  journal={Signal processing},
  volume={85},
  number={9},
  pages={1695--1710},
  year={2005},
  publisher={Elsevier}
}

@INPROCEEDINGS{Li2022,
  author={Li, Jinzhou and Lv, Shouye and Yang, Liu and Wu, Sheng and Liu, Yang and Luan, Qijun},
  booktitle={2022 IEEE International Conference on Trust, Security and Privacy in Computing and Communications (TrustCom)}, 
  title={Performance Analysis for Bearings-only Geolocation Based on Constellation of Satellites}, 
  year={2022},
  volume={},
  number={},
  pages={1429-1434},
  keywords={Rockets;Privacy;Satellites;Coordinate measuring machines;Geology;Simulation;Reconnaissance;AOA;constellation of satellites;geolocation;CRL-B}}

@Article{Li2022a,
  author   = {Li, Bin and Wang, Shusen and Zhang, Jun and Cao, Xianbin and Zhao, Chenglin},
  journal  = {IEEE Trans. Veh. Technol.},
  title    = {Ultra-Fast Accurate {A}o{A} Estimation via Automotive Massive-{MIMO} Radar},
  year     = {2022},
  number   = {2},
  pages    = {1172-1186},
  volume   = {71},
  fjournal = {IEEE Transactions on Vehicular Technology},
  keywords = {Radar;Automotive engineering;Estimation;Multiple signal classification;MIMO radar;Sensors;Radar antennas;AoA estimation;automotive sensing;massive MIMO;millimeter-wave radar;real-time;fast MUSIC},
}

@Article{Chen2021,
  author   = {Xianjing Chen and Gang Wang and K.C. Ho},
  journal  = {Signal Process.},
  title    = {Semidefinite relaxation method for unified near-Field and far-Field localization by {AOA}},
  year     = {2021},
  issn     = {0165-1684},
  pages    = {107916},
  volume   = {181},
  fjournal = {Signal Processing},
  keywords = {Near-field and far-field localization, Angle-of-arrival, Modified polar representation, Semidefinite relaxation, Bias reduction},
}

\end{document}